\newif\ifdraft
\definecolor{jpred}{rgb}{0.89,0.3,0.4}
\theoremstyle{plain}
\newtheorem{theorem}{Theorem}[section]
\newtheorem{lemma}[theorem]{Lemma}
\newtheorem{corollary}[theorem]{Corollary}
\theoremstyle{remark}
\newtheorem{remark}[theorem]{Remark}
\theoremstyle{definition}
\newtheorem{definition}[theorem]{Definition}
\theoremstyle{remark}
\newtheorem{XxmpX}{Example} 
\newenvironment{ex}    
  {%
   \pushQED{\qed}\begin{XxmpX}}
  {\popQED\end{XxmpX}}
\newcommand\assumptionlabel[1]{\hspace\labelsep
                               \normalfont\bfseries #1\ \ \gdef\@currentlabel{#1}}
\newenvironment{assumption}
               {\smallskip\list{}{\labelwidth\z@ \itemindent-\leftmargin
                        }}
               {\endlist}
\newcounter{AlgorithmJP}[section]
\renewcommand{\theAlgorithmJP}{\thesection.\arabic{AlgorithmJP}}
\def\cD{\mathcal{D}}
\def\cM{\mathcal{M}}
\def\cN{\mathcal{N}}
\def\0{\mathbf{0}}
\def\er{\mathbb{R}}
\def\prob{\mathbb{P}}
\def\ee{E}
\def\X{\mathbb{X}}
\def\cQ{\mathcal{Q}}
\def\L{\mathcal{L}}
\def\cvar{\mathop{\rm CVaR}\nolimits}
\def\argmax{\mathop{\rm arg\, max}}
\def\reint{\mathop{\rm reInt}}
\def\conv{\mathop{\rm conv}}
\def\lin{\mathop{\rm lin}}
\def\rank{\mathop{\rm rank}}
\def\reint{\mathop{\rm ri}}
\renewcommand\triangledown{\nabla}
\newcommand\alloc[1]{\mathbb{#1}}
\def\eop{\hfill{$\Box$}\medskip}
\newcommand\ind[1]{{1\kern-0.4em 1}_{\{#1\}}}
\newcommand\var{\textrm{VaR}}
\newcommand\bX{X}
\newcommand\bx{x}
\newcommand\br{R}
\newcommand\bR{R}
\title{On the solution uniqueness in portfolio optimization and risk analysis\footnote{The first author (BG) thanks the University of Leicester for granting him the academic study leave to do this research. The research of the second (AP) and third (JP) author was supported by the National Science Centre, Poland, under Grant 2014/13/B/HS4/00176.
}}
\author{
Bogdan Grechuk\footnote{Department of Mathematics, University of Leicester, UK (e-mail: bg83@leicester.ac.uk)}
\and
Andrzej Palczewski\footnote{Faculty of Mathematics, University of Warsaw, Banacha 2, 02-097 Warszawa, Poland (e-mail: A.Palczewski@mimuw.edu.pl)}
\and
Jan Palczewski\footnote{School of Mathematics, University of Leeds, Leeds LS2 9JT, UK (e-mail: J.Palczewski@leeds.ac.uk)}}
\date{\today}
\begin{document}

\maketitle

\begin{abstract}

We consider the issue of solution uniqueness for portfolio optimization problem and its inverse for asset returns with a finite number of possible scenarios. The risk is assessed by deviation measures introduced by
[Rockafellar et al.,  Mathematical Programming, Ser. B, 108 (2006), pp.~515--540]
instead of variance as in the Markowitz optimization problem. We prove that in general one can expect uniqueness neither in forward nor in inverse problems. We discuss consequences of that non-uniqueness for several problems in risk analysis and portfolio optimization, including capital allocation, risk sharing, cooperative investment, and the Black-Litterman methodology. In all cases, the issue with non-uniqueness is closely related to the fact that subgradient of a convex function is non-unique at the points of non-differentiability.  
We suggest methodology to resolve this issue by identifying a unique ``special'' subgradient satisfying some natural axioms. This ``special'' subgradient happens to be the Steiner point of the subdifferential set.   
\end{abstract}

Key words: Capital allocation, Risk sharing, Portfolio optimization, Cooperative investment, Black-Litterman model, Convex differentiation, Steiner point

\section{Introduction}

In various problems in economics and finance, including capital allocation \citep{kalkbrener2005}, risk sharing \citep{filipovic2008}, cooperative investment \citep{grechuk2015synergy}, inverse portfolio problem \citep{BGP2012}, and generalized Black-Litterman model \citep{palczewski2016}, it is important to identify a \emph{unique} solution. 
We show that a solution to many of such problems can be expressed in an explicit way using a sub-gradient of some convex function $f:{\mathbb R}^N \to {\mathbb R}$ at some point $X \in {\mathbb R}^N$, and the solution is unique if and only if $f$ is differentiable at $X$. Because every convex function $f$ is differentiable almost everywhere \cite[Theorem 25.5]{rockafellar1970}, one may expect that such a solution should be unique in all ``practical'' cases. While this is indeed true in the context of risk sharing, we demonstrate that this intuition fails badly in other contexts. To resolve this issue, we suggest an axiomatic framework for selecting a unique special sub-gradient, which we call an \emph{extended gradient}, from the subdifferential set $\partial f(X)$ of every convex function $f$ at any point $X$. In fact, our extended gradient coincides with the Steiner point \citep{schneider1971} of $\partial f(X)$. This allows us to resolve the issue of solution non-uniqueness in various applications. 

Capital allocation problem is one of the basic problems in risk management, which has been studied in a number of papers, see e.g. \cite{denault2001coherent}, \cite{kalkbrener2005}, \cite{dhaene}, and references therein. The problem is to distribute the risk capital among $n$ subsidiaries or business units. Equivalently (see \cite{cherny2011two}), the problem is to decide how much the risk coming from each subsidiary contributes to the total (cumulative) risk. \cite{kalkbrener2005} established necessary and sufficient conditions on the risk measure for the existence of capital allocation with two highly desirable properties: linearity and diversification. Unfortunately, linear diversifying capital allocation may be non-unique, and, in this case, it is unclear which one to select. \cite{cherny2011two} suggested an additional ``law-invariance'' axiom, under which the capital allocation becomes unique for some specific family of risk measures, but not in general. \cite{Grechuk2015} introduced so-called ``centroid capital allocation'', which is unique but lacks axiomatic foundation. The present work suggests a capital allocation approach based on the Steiner point of a sub-differential set, which is always unique and follows from some natural axioms.

Optimal risk sharing, originated by \cite{borch1962}, \cite{arrow1963}, and others, is a classical problem which asks for the optimal redistribution of risk among $n$ agents. Such redistribution  is called Pareto optimal if no agent can decrease their risk without increasing the risk for some other agents. If agents are allowed to trade, they will eventually arrive at some special Pareto optimal allocation, which is called equilibrium allocation \citep{filipovic2008}. However, if equilibrium allocation is not unique, which one to choose? Our Steiner point approach can be applied to this problem as well.    

In the problem of cooperative investment, $m$ agents decide that instead of investing individually, they can form a coalition, buy a joint portfolio, and then distribute the profit of this joint portfolio in the same way as in the optimal risk sharing problem, see e.g. \cite{xia2004multi} and \cite{grechuk2013}.  
The utility of investor $i$ is $U_i(Z_i)$, where $U_i$ is some utility function and $Z_i$ the random wealth of agent $i$ at the investment horizon. \cite{grechuk2015synergy} show that, under some mild conditions on $U_i$, cooperative investment is strictly preferable for all agents compared to their optimal individual investment strategies. In the cooperative investment problem, the coalition's preferences can be represented by a cooperative utility function $U^*$. The coalition solves an optimization problem with the utility $U^*$ to find an optimal portfolio with the terminal wealth $X^*$. This terminal wealth must consequently be distributed among investors: one has to find a Pareto optimal allocation $(Z_1, \dots,  Z_m )$ such that $X^* = \sum_{i=1}^m Z_i$. There are usually infinitely many Pareto-optimal wealth allocations, but \cite{grechuk2015synergy} defined an allocation which can be considered as ``fair''. The issue is that this ``fair'' allocation is, in general, non-unique, as we demonstrate in this paper. This issue is of fundamental importance, because different allocation methods may favour different agents.
Because this non-uniqueness is the consequence of possible non-differentiability of $U^*$, this issue is resolved by our Steiner point approach provided that $U^*$ is a concave function.
 
In the realm of portfolio analysis we consider a market with a riskless asset and $n$ risky assets. Portfolios are represented as combinations $x_1R^{(1)}+\dots +x_n R^{(n)}$, where the vector random variable $R = (R^{(1)}, \ldots, R^{(n)})^T$  denotes excess returns of risky assets. The objective is to find a portfolio allocation (fractions of wealth invested in the risky assets) $x = (x_1, \ldots, x_n)^T$ that solves the following optimization problem:
\begin{equation}\label{eqn:intro}
 \min_x \rho(R^Tx) \quad \textrm{subject to } \mu^T x\ge \Delta,
\end{equation}
where $\rho$  measures portfolio risk,  $\Delta$ is the target excess return and 
\[
\mu =(\mu_1, \ldots, \mu_n)^T = (\ee[R^{(1)}], \ldots, \ee[R^{(n)}])^T. 
\]

In this paper we study the uniqueness of solutions to problem \eqref{eqn:intro} and solutions to the following inverse problem: given a vector $x^*$,  the information on the distribution of $R$ sufficient to compute $\rho(R^Tx)$ for any $x$, and $\Delta >0$ find a vector of mean returns $\mu$ such that $x^*$ is a solution to problem \eqref{eqn:intro} for that $\mu$. Notice that the inverse problem we are interested in is meaningful only when the risk measure $\rho$ is indifferent to the location parameter of distribution $R$, e.g., standard deviation, variance of portfolio returns, or a deviation measure of \cite{rockafellar2006b}.  

The problem of portfolio inverse optimization  under different formulations has been investigated by several authors. \cite{BGP2012} considers an inverse optimization in a robust optimization framework with the portfolio mean as the objective function and risk accounted for in constraints.  The problem of uniqueness is not addressed in that paper, particularly because under their assumption of normality of asset returns the forward problem always has a unique solution. Due to the number of degrees of freedom (in the mean-variance case it is both the mean, variance and the target return $\Delta$ that are to be inferred from the optimal portfolio), the inverse problem inherently has many solutions. \citet{GZ2014, GZ2016} attempt to infer risk preferences of an investor: assuming a complete knowledge of the distribution of $R$ and portfolio $x^*$, they look for a risk measure $\rho$ for which $x^*$ is an optimal solution to \eqref{eqn:intro}. They solved this inverse problem for two classes of risk measures $\rho$: deviation measures and coherent risk measures.  

A motivation for analyzing the uniqueness of forward and inverse optimization problems stated above comes from the Black-Litterman asset allocation model, cf. \cite{BlackLitterman91} where the model is formulated and \cite{litterman} for a more detailed presentation. In the classical Black-Litterman model, the risk is modeled by the variance. The inverse optimization, used to establish the equilibrium distribution, has a unique solution. The variance, however,  is a poor measure of risk for non-Gaussian distributions. \cite{rockafellar2006b} promotes deviations measures which are rooted in coherent risk measures but are indifferent to the location parameter of the distribution (as the variance). The optimization problem \eqref{eqn:intro} retains its convexity in $x$, but the uniqueness of solutions to the forward and inverse problems has not been studied. A general theory of convex optimization implies that they depend on the interplay between the distribution of $R$ and the risk measure $\rho$. Our Steiner point approach can be used to identify a unique ``special'' solution to this problem as well.    

In the context of asset management, many papers assume a finite (but possibly large) number of scenarios for the future excess return $R$ (for example a historical time series of asset returns) and this is the case that we research in this paper. The reader is referred to, e.g., \cite{KPU2002, fabozzi2010robust, LIM2011163, grechuk2017direct} for theoretical and finance-centred contributions and \cite{gaivoronski2004, lim2010portfolio, lwin2017mean} for numerical methods; applications outside of finance can be found in the monograph \cite{conejo2010decision} and references therein. Although the question of  existence of optimal solutions has been solved, the problem of uniqueness for a finite number of scenarios has not been analyzed carefully enough. We perform detailed analysis of that problem for arbitrary discrete scenarios and a class of deviation measures which we call ``finitely generated risk measures'' which includes Conditional Value-at-Risk (CVaR), mixed CVaR and mean absolute deviation. In our approach we use the characterization of deviation measures by their risk envelopes introduced in \cite{rockafellar2006c}.

Our contributions are based on a new link between the uniqueness of an optimal portfolio $x^*$ in \eqref{eqn:intro} and the number of risk identifiers for the deviation measure $\rho(R^T x^*)$. This has three consequences. Firstly, the portfolio optimization problem has a unique solution for any $\mu \in \er^n$ that does not belong to a finite number of hyperplanes; therefore, for practical applications the uniqueness can be safely assumed. Secondly, a unique optimal portfolio corresponds to many risk identifiers and, consequently, there are many Pareto-optimal sharing arrangements in cooperative investment, which is obviously highly inconvenient in practice. It is also surprising as this possibility was only inferred from the general convexity theory and treated as an unlikely and inconvenient case that is not of prime importance, see \cite{grechuk2015synergy}. The third consequence is related to the extension of the Black-Litterman model to arbitrary distributions and deviation measures \citep{palczewski2016}. Analogously as in the classical model   the first step of the extended model is to solve an inverse portfolio problem in which for a market (or benchmark) portfolio $x^*$ one establishes an equilibrium mean return $\mu_{eq}$ that yields $x^*$ as an optimal solution, cf. \citet[Section 4]{palczewski2016}. We demonstrate that if $x^*$ is a unique optimal solution for a particular $\mu^*$ then the inverse problem has multiple solutions. Hence, the final investment recommendation coming out of Black-Litterman methodology is not unique. 
Our Steiner point approach is then used to select a unique recommendation.

The rest of the paper is organized as follows. Section \ref{sec:gradient} suggests a Steiner point approach for assigning a unique ``extended gradient'' of every convex function on ${\mathbb R}^N$ at every point. Section \ref{sec:capital} applies this methodology for selecting the unique solution in the capital allocation and risk sharing problems. Section \ref{sec:portf} formulates the portfolio optimization problem in the framework of deviation measures, defines portfolio risk generators and discusses the portfolio uniqueness problem in terms of portfolio risk generators. 
Section \ref{sec:7} formulates the cooperative investment problem and resolves the issue of non-uniqueness of its solution.
Section \ref{sec:5} discusses the dichotomy between uniqueness of solutions of the forward and inverse optimization problems. 
Section \ref{sec:6} considers consequences of non-uniqueness for the Black-Litterman model for non-Gaussian distributions. Section \ref{sec:concl} concludes the work.

\section{Extended gradient of a convex function}\label{sec:gradient}

\subsection{Definition and axiomatic characterization}

Let $f: {\mathbb R}^n \to {\mathbb R}$ be an arbitrary (finite valued) convex function. It is known \citet[Theorem 23.1]{rockafellar1970} that the one-sided limit
\begin{equation}\label{eqn:1sided}
\phi_{f,Y}(X) = \lim\limits_{\epsilon\to 0^+} \frac{f(Y+\epsilon X) - f(Y)}{\epsilon}
\end{equation}  
exists for every $X,Y \in {\mathbb R}^n$. Limit $\phi_{f,Y}(X)$ is called the \emph{directional derivative of $f$ at $Y$ with respect to $X$}. 
We say that $f$ is (G\^{a}teaux) differentiable at $Y \in {\mathbb R}^n$ if the (two-sided) limit $\lim_{\epsilon\to 0} \frac{f(Y+\epsilon X) - f(Y)}{\epsilon}$ exists for every $X \in {\mathbb R}^n$. In this case, $\phi_{f,Y}(X)$ is a 
linear functional in $X$, and can be represented as $\phi_{f,Y}(X) = Q^TX$ for some $Q \in {\mathbb R}^n$, which is usually denoted as $Q=\nabla f(Y)$ and called the \emph{gradient} of $f$ at $Y$.
It is known \citet[Theorem 25.5]{rockafellar1970} that any finite-valued convex function $f$ on ${\mathbb R}^n$ is differentiable almost everywhere. 

This section develops an axiomatic framework for ``extending'' the notion of a gradient in such a way that the ``extended gradient''
is defined for \emph{every} convex function $f: {\mathbb R}^n \to {\mathbb R}$ at \emph{every} point $Y \in {\mathbb R}^n$. 
In the next section we will demonstrate that this ``extended gradient'' is useful in financial applications, including capital allocation, risk sharing, and cooperative investment.

Let ${\cal F}$ be a set of all convex functions $f: {\mathbb R}^n \to {\mathbb R}$. Formally, we define \emph{extended gradient} as a map $G:{\cal F} \times {\mathbb R}^n \to {\mathbb R}^n$, which assigns to every $f \in {\cal F}$ and $Y \in {\mathbb R}^n$ a vector $G_Y(f) \in {\mathbb R}^n$, such that the following properties hold: 
\begin{assumption}
\item[(G1)] Additivity: $G_Y(f + g) = G_Y(f) + G_Y(g)$ for all $f,g \in {\cal F}$ and all $Y \in {\mathbb R}^n$;
\item[(G2)] Rotation invariance: Let $f \in {\cal F}$ and $g(Y)=f(AY), \, Y \in {\mathbb R}^n$, where $A$ is an $n \times n$ \emph{rotation matrix}, that is, matrix such that $A^T = A^{-1}$ and $\det(A)=1$. Then 
$$
G_Y(g) = A^{-1} G_{AY}(f), \quad \forall Y \in {\mathbb R}^n.
$$
\item[(G3)] Continuity: Let $Y \in {\mathbb R}^n$, $f \in {\cal F}$, and $f_1, f_2, \ldots$ be a sequence of functions in ${\cal F}$ such that $\lim\limits_{m \to \infty} \phi_{f_m,Y}(X) = \phi_{f,Y}(X)$ for all $X \in {\mathbb R}^n$. Then 
$$
\lim\limits_{m \to \infty} G_Y(f_m) = G_Y(f). 
$$
\item[(G4)] Linear differentiation: Let $Q \in {\mathbb R}^n$, and let $f(Y)=Q^T Y, \, \forall Y \in {\mathbb R}^n$ be a linear function. Then 
$$
G_Y(f) = Q, \quad \forall Y \in {\mathbb R}^n.
$$
\end{assumption}

Properties (G1)-(G4) are desirable properties for any extension of the concepts of ``derivative'' or ``gradient''. (G1) states that the derivative/gradient of a sum is the sum of derivatives/gradients of summands, (G2) is an invariance under rotation of the coordinate system, (G3) is a manifestation of the fact that derivative/gradient is a \emph{local} property of a function at a point, and two functions which ``look locally almost the same'' in every direction should have ``almost identical'' gradients. Finally, (G4) states that the derivative/gradient of a linear function is a constant. Theorem \ref{thm:extgrad} below states that, somewhat surprisingly, these natural properties are sufficient for the \emph{unique characterization} of $G$.

Directional derivative $\phi_{f,Y}(X)$ can be represented in the form 
\begin{equation}\label{eqn:support}
\phi_{f,Y}(X) = \sup_{Q \in \partial f(Y)} Q^T X
\end{equation}
see \citet[Theorem 23.4]{rockafellar1970}, where $\partial f(Y)$ is called \emph{subdifferential} of $f$ at $Y$, and is defined as a set of all $Q \in {\mathbb R}^n$ such that $f(X) \geq f(Y) + Q^T (X-Y), \, \forall X \in {\mathbb R}^n$. Set $\partial f(Y)$ is always non-empty, convex, and compact, see \citet[Theorem 23.4]{rockafellar1970}. Let ${\cal K}$ be the family of all non-empty convex compact subsets of ${\mathbb R}^n$.

With $f_1=f_2=\dots=f_m=\dots=g$, property (G3) implies that $G_Y(f) = G_Y(g)$ whenever $\phi_{f,Y}(X) = \phi_{g,Y}(X)$ for all $X \in {\mathbb R}^n$. Equivalently, $G_Y(f) = G_Y(g)$ whenever $\partial f(Y) = \partial g(Y)$. Hence $G_Y(f)$ can be represented as
\begin{equation}\label{eqn:subdifrepr}
G_Y(f) = S(\partial f(Y)),
\end{equation}
where $S$ is a map assigning to every set $K \in {\cal K}$ a vector $S(K) \in {\mathbb R}^n$.

Properties (G1)-(G4) of $G_Y(f)$ can be equivalently written as properties of the map $S$. 
For any $K_1, K_2 \subset {\mathbb R}^n$, the set $K_1 + K_2 = \{Q_1 + Q_2\,|\,Q_1 \in K_1, \, Q_2 \in K_2\} $ is called (Minkowski) sum of $K_1$ and $K_2$. Theorem 23.8 in \cite{rockafellar1970} implies that $\partial((f + g)(Y)) = \partial f(Y) + \partial g(Y)$ for all $f,g \in {\cal F}$ and all $Y \in {\mathbb R}^n$. Hence, property (G1) is equivalent to
\begin{assumption}
\item[(S1)] $S(K_1 + K_2) = S(K_1) + S(K_2)$ for all $K_1 \in {\cal K}, K_2 \in {\cal K}$.
\end{assumption}

Property (G4) is equivalent to $S(\{Q\})=Q$. Substituting this into (S1), we get $S(K + Q) = S(K) + Q$. In other words, if the set $K$ is translated by a vector $Q \in {\mathbb R}^n$, $S(K)$ is translated by the same vector.

Let $A$, $f$, and $g$ be as defined in (G2). Theorem 23.9 in \cite{rockafellar1970} implies that $\partial(g(Y)) = A^{-1} \partial f(AY)$. Hence, property (G2) is equivalent to 
$
S(AK) = AS(K), \, \forall K \in {\cal K}.
$
This implies that $S(AK + Q) = AS(K) + Q$ for all $Q \in {\mathbb R}^n$, or, equivalently,
\begin{assumption}
\item[(S2)] $S(TK) = T S(K)$ for all $K \in {\cal K}$ and all transformations $T:{\mathbb R}^n \to {\mathbb R}^n$ in the form $T(X) = AX+Q$, where $A$ is a rotation matrix, and $Q \in {\mathbb R}^n$. Such transformations $T$ are called \emph{rigid body motions}.
\end{assumption}

For every non-empty closed convex set $K$ in ${\mathbb R}^n$, its \emph{support function} is given by $f_K(X)=\sup\{Q^TX, | \, Q \in K\}$. In particular, \eqref{eqn:support} implies that directional derivative $\phi_{f,Y}(X)$ is a support function of the subdifferential $\partial f(Y)$. For sets $K, K_1, K_2, \ldots$ in ${\cal K}$, a combination of Corollary P4.A and Corollary 3A in \cite{salinetti1979}
implies that point-wise convergence of the support function of $K_m$ to the support functions of $K$ is equivalent to $\lim_{m \to \infty}h(K_m,K)=0$, where $h$ denotes the Hausdorff 
distance\footnote{The Hausdorff distance $h(K,L)$ between any subsets $K$ and $L$ of ${\mathbb R}^n$ is defined as $h(K,L) = \max\{\sup_{X \in K} \inf_{Y \in L}d(X,Y), \sup_{Y \in L} \inf_{X \in K}d(X,Y)\}$, where $d(.,.)$ in the usual Euclidean distance in ${\mathbb R}^n$.} 
between sets. This implies the following reformulation of property (G3).

\begin{lemma}\label{lem:g3-s3}
Let $S:{\cal K}\to{\mathbb R}^n$, and let $G_Y$ be given by \eqref{eqn:subdifrepr}. Then $G_Y$ satisfies (G3) if and only if $S$ satisfies
\begin{assumption}
\item[(S3)] Map $S$ is continuous with respect to the Hausdorff metric. That is, 
$$
\lim\limits_{m \to \infty} S(K_m) = S(K) 
$$
whenever sets $K, K_1, K_2, \ldots \in {\cal K}$ are such that $\lim\limits_{m \to \infty}h(K_m,K)=0$. 
\end{assumption}
\end{lemma}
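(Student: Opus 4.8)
The plan is to reduce the lemma to the two equivalences already assembled above. By \eqref{eqn:support}, the directional derivative $\phi_{f,Y}$ is exactly the support function of $\partial f(Y)$; by the cited combination of results from \cite{salinetti1979}, pointwise convergence of support functions is equivalent to Hausdorff convergence of the underlying sets; and by \eqref{eqn:subdifrepr} we have $G_Y(f) = S(\partial f(Y))$. With these three facts in hand, both (G3) and (S3) are convergence statements about the same objects read through different dictionaries, so the equivalence should fall out once I can move freely between convex functions and their subdifferentials at $Y$.

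For the ``if'' direction (S3 $\Rightarrow$ G3), which is the easy one, no construction is needed. Suppose $S$ satisfies (S3) and let $f, f_1, f_2, \ldots \in \mathcal{F}$ satisfy $\phi_{f_m,Y}(X) \to \phi_{f,Y}(X)$ for all $X$. By \eqref{eqn:support} these are the support functions of $\partial f_m(Y)$ and $\partial f(Y)$, so the Salinetti--Wets equivalence gives $h(\partial f_m(Y),\partial f(Y)) \to 0$. Then (S3) yields $S(\partial f_m(Y)) \to S(\partial f(Y))$, which by \eqref{eqn:subdifrepr} is precisely $G_Y(f_m) \to G_Y(f)$, establishing (G3).

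For the ``only if'' direction (G3 $\Rightarrow$ S3) I would run the same chain in reverse, but now I must realize an arbitrary $K \in \mathcal{K}$ as a subdifferential at $Y$. To this end, for any $K \in \mathcal{K}$ I would set $g_K(X) = \sup_{Q \in K} Q^T(X-Y)$, the support function of $K$ recentred at $Y$. Since $K$ is compact, $g_K$ is finite-valued and convex, hence $g_K \in \mathcal{F}$; and a direct check from the definition of the subdifferential (using that $K$ is closed and convex) gives $\partial g_K(Y) = K$, so that $\phi_{g_K,Y}(X) = \sup_{Q \in K} Q^T X$ is the support function of $K$. Given $K_m, K \in \mathcal{K}$ with $h(K_m,K) \to 0$, the equivalence of \cite{salinetti1979} yields $\phi_{g_{K_m},Y}(X) \to \phi_{g_K,Y}(X)$ for every $X$; applying (G3) to $g_{K_m}$ gives $G_Y(g_{K_m}) \to G_Y(g_K)$, and since $G_Y(g_{K_m}) = S(\partial g_{K_m}(Y)) = S(K_m)$ and $G_Y(g_K) = S(K)$, this is exactly $S(K_m) \to S(K)$, establishing (S3).

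I expect the only genuine content to be the realization step in the ``only if'' direction: verifying that $g_K$ is a bona fide finite convex function on all of ${\mathbb R}^n$ and that its subdifferential at $Y$ is exactly $K$. This surjectivity of $f \mapsto \partial f(Y)$ onto $\mathcal{K}$ is what lets the implication run both ways; everything else is substitution into the already-established equivalence between support-function convergence and Hausdorff convergence.
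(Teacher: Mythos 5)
Your proof is correct and follows essentially the same route as the paper: the forward direction is the same direct substitution via the Salinetti--Wets equivalence, and the converse direction realizes an arbitrary $K\in{\cal K}$ as a subdifferential via its support function. The only cosmetic difference is that the paper applies (G3) at $Y=0$ using the positive homogeneity of the (uncentred) support functions, whereas you recentre the support function at the given $Y$ and verify $\partial g_K(Y)=K$ directly; both are the same translation argument.
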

\begin{proof}
First, assume that (S3) holds, $Y \in {\mathbb R}^n$, $f \in {\cal F}$, and $f_m$ is a sequence of functions as in (G3). Let $K = \partial f(Y)$, $K_m = \partial f_m(Y)$. Then $\phi_{f,Y}(X)$ and $\phi_{f_m,Y}(X)$ are the support functions of $K$ and $K_m$, respectively, and condition $\lim_{m \to \infty} \phi_{f_m,Y}(X) = \phi_{f,Y}(X)$ implies that $\lim\limits_{m \to \infty}h(K_m,K)=0$. Then, by (S3), $\lim\limits_{m \to \infty} S(K_m) = S(K)$, which, by \eqref{eqn:subdifrepr}, is translated to $\lim_{m \to \infty} G_Y(f_m) = G_Y(f)$ and proves (G3).

Conversely, assume that (G3) holds and let $K, K_1, K_2, \ldots \in {\cal K}$ be such that $\lim_{m \to \infty}h(K_m,K)=0$. Let $f_0, f_1,f_2, \ldots$ be the support functions of these sets. Then $\lim_{m \to \infty} f_m = f_0$ point-wise. Because each $f_m$ is positively homogeneous, its directional derivative at $Y=0$ is 
\[
\phi_m(X) = \lim_{\epsilon\to 0^+} \frac{f_m(0+\epsilon X) - f_m(0)}{\epsilon} = f_m(X). 
\]
Hence, $\lim_{m \to \infty} \phi_m = \phi_0$ point-wise, and (G3) with $Y=0$ implies that $\lim_{m \to \infty} G_0(f_m) = G_0(f)$. With \eqref{eqn:subdifrepr}, this translates to $\lim_{m \to \infty} S(K_m) = S(K)$ and proves (S3).     
\end{proof}

In summary, $G_Y(f)$ satisfies properties (G1)-(G4) if and only if it is representable in the form \eqref{eqn:subdifrepr}, with a map $S:{\cal K}\to{\mathbb R}^n$ satisfying (S1)-(S3). However,  
Theorem 1 in \cite{schneider1971} states that, in any dimension $n \geq 2$, there is a \emph{unique} map $S$ satisfying   (S1)-(S3), and it is given by
\begin{equation}\label{eqn:steiner1}
S(K) = \frac{n}{|S^{n-1}|}\int_{S^{n-1}} X f_K(X) dX,
\end{equation}
where $S^{n-1}=\{X \in {\mathbb R}^n\,|\,||X||=1\}$ denotes the unit sphere in ${\mathbb R}^n$, $|S^{n-1}|$ is its surface area, and $f_K(X)$ is the support function of $K$. $S(K)$ is known as \emph{Steiner point} of the set $K$. Equivalently (see e.g. \cite{dentcheva1998}),
\begin{equation}\label{eqn:steiner2}
S(K) = \frac{1}{|B_1|}\int_{B_1} \nabla f_K(X) dX,
\end{equation}
where $B_1=\{X \in {\mathbb R}^n\,|\,||X|| \leq 1\}$ denotes the unit ball, $\nabla$ is the gradient, and the integral is well-defined because the support function of any $K \in {\cal K}$ is differentiable almost everywhere. If $K = \partial f(Y)$, its support function is $\phi_{f,Y}(X)$, and we obtain
\begin{equation}\label{eqn:extgrad}
G_Y(f) = S(\partial f(Y)) = \frac{1}{|B_1|}\int_{B_1} \nabla \phi_{f,Y}(X) dX. 
\end{equation} 
We will summarise the above discussion in the following theorem. 
\begin{theorem}\label{thm:extgrad}
In any dimension $n \geq 2$, the extended gradient $G_Y(f)$ is \emph{uniquely} characterized by properties (G1)-(G4), and it is given by \eqref{eqn:extgrad}, where $\phi_{f,Y}(X)$ is defined in \eqref{eqn:1sided}.
\end{theorem}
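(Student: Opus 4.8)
The plan is to push the entire problem down from the level of convex functions and their directional derivatives to the level of the single selection $S$ on the family $\mathcal{K}$ of non-empty compact convex sets, and then to invoke the uniqueness theorem for the Steiner point. First I would record the representation \eqref{eqn:subdifrepr}, $G_Y(f) = S(\partial f(Y))$, which the constant-sequence instance of (G3) already forces: two functions sharing the same subdifferential at $Y$ must receive the same value of $G_Y$, so $G_Y(f)$ depends on $\partial f(Y)$ alone. To see that this actually defines $S$ on \emph{all} of $\mathcal{K}$, and not merely on those sets happening to arise as subdifferentials, I would note that every $K \in \mathcal{K}$ is realized as one: taking $f$ to be the support function $f_K$, which is finite, convex and positively homogeneous, one has $\partial f_K(0) = K$. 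Thus $S(K) := G_0(f_K)$ is unambiguously defined and \eqref{eqn:subdifrepr} holds.

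The second step is to set up the dictionary between the axioms (G1)--(G4) on $G$ and the axioms (S1)--(S3) on $S$, in both directions, so as to secure the genuine ``if and only if''. Here I would lean on the subdifferential calculus of \cite{rockafellar1970}: Theorem 23.8 gives $\partial(f+g)(Y) = \partial f(Y) + \partial g(Y)$, converting additivity (G1) into Minkowski additivity (S1); Theorem 23.9 gives $\partial g(Y) = A^{-1}\partial f(AY)$ for $g(\cdot) = f(A\,\cdot)$, converting (G2) into rotation equivariance $S(AK) = AS(K)$, which, combined with the translation equivariance $S(K+Q) = S(K)+Q$ that (S1) and the normalization already supply, yields full proper-motion equivariance (S2); and the elementary fact that a linear function $Q^T Y$ has subdifferential $\{Q\}$ converts (G4) into $S(\{Q\}) = Q$. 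The equivalence of the continuity axioms (G3) and (S3) is exactly Lemma \ref{lem:g3-s3}, which rests on the equivalence between pointwise convergence of support functions and Hausdorff convergence of the underlying sets established in \citep{salinetti1979}. I would stress that each of these equivalences holds for every $K \in \mathcal{K}$ precisely because the subdifferential map is onto $\mathcal{K}$, as secured in the first step.

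With the dictionary in place, the decisive step is to apply Theorem 1 of \cite{schneider1971}, which asserts that in every dimension $n \geq 2$ there is a unique map $S : \mathcal{K} \to \mathbb{R}^n$ satisfying (S1)--(S3), namely the Steiner point \eqref{eqn:steiner1}. (The normalization $S(\{Q\}) = Q$ is consistent with, and in dimension $n \geq 2$ already forced by, (S2): applying (S2) to $\{0\}$ under a pure rotation gives $S(\{0\}) = A\,S(\{0\})$ for every rotation $A$, and since the origin is the only rotation-invariant vector, $S(\{0\}) = 0$, whence $S(\{Q\}) = Q$ by translation; this is also where dimension one fails.) Uniqueness of $S$ transports back through \eqref{eqn:subdifrepr} to uniqueness of $G$, while existence is the observation that the Steiner point does satisfy (S1)--(S3) and hence induces a $G$ obeying (G1)--(G4). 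It then remains only to rewrite the formula: pass from \eqref{eqn:steiner1} to the gradient form \eqref{eqn:steiner2} as in \cite{dentcheva1998}, and substitute $K = \partial f(Y)$, whose support function is the directional derivative $\phi_{f,Y}$ by \eqref{eqn:support}, to arrive at \eqref{eqn:extgrad}.

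The genuinely hard analytic content---that additivity, motion equivariance and continuity together single out one selection---is entirely absorbed into Schneider's theorem, so from the present vantage point the argument is an assembly rather than a new proof. Accordingly, the main obstacle I anticipate is not any single estimate but bookkeeping: making every axiom translation bidirectional, and confirming that the surjectivity of the subdifferential map onto $\mathcal{K}$ guarantees the (S)-axioms hold on \emph{all} of $\mathcal{K}$, so that Schneider's hypotheses are met without gaps. The restriction $n \geq 2$ must be carried throughout.
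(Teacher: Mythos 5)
Your argument is correct and follows essentially the same route as the paper: reduce $G$ to a selection $S$ on $\mathcal{K}$ via the constant-sequence instance of (G3), translate (G1)--(G4) into (S1)--(S3) using the subdifferential calculus and Lemma \ref{lem:g3-s3}, and invoke Schneider's uniqueness theorem for the Steiner point. Your explicit observation that every $K\in\mathcal{K}$ is realized as $\partial f_K(0)$ for its support function, so that the (S)-axioms are genuinely imposed on all of $\mathcal{K}$, is a point the paper leaves implicit and is a welcome addition.
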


The next theorem provides an alternative formula for the extended gradient $G_Y(f)$.
\begin{theorem}\label{thm:extgrad2}
For every convex function $f: {\mathbb R}^n \to {\mathbb R}$, and every $Y \in {\mathbb R}^n$, the extended gradient $G_Y(f)$ is given by 
\begin{equation}\label{eqn:extgrad2}
G_Y(f) = \lim\limits_{\epsilon \to 0^+}\frac{1}{|B_\epsilon(Y)|}\int_{B_\epsilon(Y)} \nabla f(X)\, 1_{\{X \in D\}} dX,
\end{equation} 
where $B_\epsilon(Y)=\{X \in {\mathbb R}^n\,|\,||X-Y|| \leq \epsilon\}$ is the ball centred at $Y$ with radius $\epsilon$, $\nabla$ is the (almost everywhere defined) gradient of $f$, $D = \{X: \partial f(X) = \{ \nabla f(x) \} \}$ and the limit is guaranteed to exist. 
\end{theorem}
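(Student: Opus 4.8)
The plan is to reduce \eqref{eqn:extgrad2} to the already-established formula \eqref{eqn:extgrad} by rescaling. Substituting $X = Y + \epsilon Z$ with $Z \in B_1$ gives $dX = \epsilon^n\,dZ$ and $|B_\epsilon(Y)| = \epsilon^n |B_1|$, so that $\frac{1}{|B_\epsilon(Y)|}\int_{B_\epsilon(Y)} \nabla f(X)\,dX = \frac{1}{|B_1|}\int_{B_1} \nabla f(Y + \epsilon Z)\,dZ$. Introduce the rescaled convex functions $g_\epsilon(Z) = \frac{f(Y+\epsilon Z)-f(Y)}{\epsilon}$; by the chain rule $\nabla g_\epsilon(Z) = \nabla f(Y+\epsilon Z)$ wherever the right-hand side exists, and by \eqref{eqn:1sided} $g_\epsilon(Z) \to \phi_{f,Y}(Z)$ pointwise as $\epsilon \to 0^+$ (indeed monotonically, since the difference quotient of a convex function is nondecreasing in $\epsilon$). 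Since \eqref{eqn:extgrad} reads $G_Y(f) = \frac{1}{|B_1|}\int_{B_1}\nabla\phi_{f,Y}(Z)\,dZ$, it suffices to prove that $\frac{1}{|B_1|}\int_{B_1}\nabla g_\epsilon(Z)\,dZ \to \frac{1}{|B_1|}\int_{B_1}\nabla\phi_{f,Y}(Z)\,dZ$ as $\epsilon \to 0^+$.

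This is a statement about the stability of averaged gradients under pointwise convergence of convex functions, and I would establish it by dominated convergence. First, pointwise convergence of the convex $g_\epsilon$ to the finite convex limit $\phi_{f,Y}$ upgrades to uniform convergence on any compact ball, say $\{Z : \|Z\| \le 2\}$, by \citet[Theorem 10.8]{rockafellar1970}; hence the $g_\epsilon$ are uniformly bounded there for small $\epsilon$, which by convexity forces a uniform Lipschitz bound $L$ on $B_1$. Consequently $|\nabla g_\epsilon(Z)| \le L$ for a.e.\ $Z \in B_1$, providing the constant integrable dominating function needed for the dominated convergence theorem. Second, I claim that for a.e.\ $Z_0 \in B_1$ the gradients converge, $\nabla g_\epsilon(Z_0) \to \nabla \phi_{f,Y}(Z_0)$: fixing a point $Z_0$ at which $\phi_{f,Y}$ is differentiable and taking any $v_\epsilon \in \partial g_\epsilon(Z_0)$, the subgradient inequalities $g_\epsilon(Z) \ge g_\epsilon(Z_0) + v_\epsilon^T(Z-Z_0)$, together with the bound on $v_\epsilon$ and the uniform convergence of $g_\epsilon$, force every subsequential limit of $v_\epsilon$ to lie in $\partial\phi_{f,Y}(Z_0) = \{\nabla\phi_{f,Y}(Z_0)\}$, whence $v_\epsilon \to \nabla\phi_{f,Y}(Z_0)$.

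The one point requiring care --- and the main obstacle --- is that $\nabla g_\epsilon(Z_0)$ need not exist at a prescribed $Z_0$ for every $\epsilon$, so the a.e.\ convergence must be organised along a sequence rather than invoked pointwise in the continuous parameter. I would fix an arbitrary sequence $\epsilon_m \to 0^+$ and discard the union $\bigcup_m \{Z : g_{\epsilon_m}\text{ is not differentiable at }Z\}$ of countably many null sets, together with the null set where $\phi_{f,Y}$ fails to be differentiable. Off this remaining null set, $\nabla g_{\epsilon_m}(Z_0)$ is defined for every $m$, equals the unique subgradient, and hence converges to $\nabla\phi_{f,Y}(Z_0)$ by the subgradient argument above. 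Dominated convergence then yields $\frac{1}{|B_1|}\int_{B_1}\nabla g_{\epsilon_m}(Z)\,dZ \to G_Y(f)$. Since the sequence $\epsilon_m \to 0^+$ was arbitrary and the limit $G_Y(f)$ does not depend on it, the full limit in \eqref{eqn:extgrad2} exists and equals $G_Y(f)$, completing the proof.
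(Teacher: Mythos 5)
Your proof is correct, and while it shares the paper's overall skeleton (establish that $\nabla f(Y+\epsilon Z)\to\nabla\phi_{f,Y}(Z)$ for almost every $Z$, then pass to the ball average by a bounded-convergence argument), it reaches the key pointwise fact by a genuinely different mechanism. The paper works radially: it invokes Rockafellar's Theorem 24.6 to show that along almost every direction $Z\in S^{n-1}$ the gradient $\nabla f(Y+\epsilon Z)$ converges to the unique maximizer of $Q\mapsto Q^TZ$ over $\partial f(Y)$, identifies this limit with $\nabla\phi_{f,Y}(Z)$ via the degree-zero homogeneity of $\nabla\phi_{f,Y}$, and then converts the radial a.e.\ convergence into convergence of averages using the uniform bound $\|\nabla f\|\le C$ near $Y$ (Theorem 24.7) together with an Egorov-type ``outside a set of measure at most $\delta$'' estimate. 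You instead rescale to the unit ball and recast the whole problem as a stability statement for the convex difference quotients $g_\epsilon$, deriving the a.e.\ convergence of gradients from first principles --- the subgradient inequality plus uniform convergence on compacts (Theorem 10.8) --- and concluding by dominated convergence along arbitrary sequences $\epsilon_m\to 0^+$. Your route is more elementary and self-contained, avoiding Theorem 24.6 altogether, and it treats the measurability issue (that $\nabla g_\epsilon(Z_0)$ need not exist for every $\epsilon$ at a prescribed $Z_0$) more explicitly than the paper, where the corresponding uniform-outside-a-small-set step is asserted rather than argued. What the paper's approach buys in exchange is the explicit geometric identification of the radial limit as the exposed point $\partial f(Y)_Z$ of the subdifferential, which makes the connection with the Steiner point of $\partial f(Y)$ transparent.
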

\begin{proof}\newcommand\PH[1]{\phi_{f,Y}(#1)}
Let $\mathcal{S} \subset S^{n-1}$ be the set of such $Z$ that
\begin{itemize}
 \item[(a)] $Y + \epsilon Z \in D$ for almost all $\epsilon$,
 \item[(b)] $\nabla\PH{Z}$ is a singleton.
\end{itemize}
We shall show that $\mathcal{S}$ is of $\sigma^n$-full measure, where $\sigma^n$ is the spherical measure on $S^{n-1}$. Statement (a) follows from the fact that $D$ has a full Lebesgue measure. For (b), the function $g(Z):=\PH{Z}$ is convex, positively homogeneous (as the support function of $\partial f(Y)$) and $\nabla \PH{Z} = \partial g(Z)$. By \citet[Theorem 25.5]{rockafellar1970}, $\partial g(Z)$ is a singleton for almost every $Z \in {\mathbb R}^n$. Because $g$ is positively homogeneous, the same is true for $\sigma^n$-almost every $Z$ on $S^{n-1}$, which completes the proof that $\mathcal{S}$ is of full measure.

We have 
\begin{multline}\label{eqn:pr1}
\frac{1}{|B_\epsilon(Y)|}\int_{B_\epsilon(Y)}(\triangledown f(X)-\triangledown\phi_{f,Y}(X-Y)) \, 1_{\{\frac{X-Y}{||X-Y||} \in \mathcal{S}\}}\,dX\\
=
\int_{S^{n-1}} 1_{\{Z \in \mathcal{S}\}} \frac{n}{\epsilon^n} \int_0^\epsilon (\triangledown f(Y + \alpha Z)-\triangledown\phi_{f,Y}(\alpha Z)) \alpha^{n-1} 1_{\{Y + \alpha Z \in D\}}\, d\alpha\,\sigma^n(dZ).
\end{multline}
\citet[Corollary 23.5.3]{rockafellar1970} states that
\[
\triangledown\phi_{f,Y}(Z)) = \argmax_{Q \in \partial f(Y)} Q^T Z,
\]
i.e., these are points of $\partial f(Y)$ at which $Z$ is normal. The above formula implies also that 
\begin{equation}\label{eqn:qr1}
\triangledown\phi_{f,Y}(Z) = \triangledown\phi_{f,Y}(\lambda Z) \qquad \text{for $\lambda > 0$.}
\end{equation}

By \citet[Theorem 24.6]{rockafellar1970}, for $Z \in \mathcal{S}$ we have
\[
\lim_{\substack{\epsilon \to 0^+\\ Y + \epsilon Z \in D}} \nabla f(Y + \epsilon Z) = \nabla \PH{Z}.
\]
Since the set $\bigcup_{X \in B_1(Y)} \partial f(X)$ is bounded by \citet[Theorem 24.7]{rockafellar1970}, the dominated convergence theorem implies the limit
\begin{multline*}
\lim_{\epsilon \to 0^+} \frac{n}{\epsilon^n} \int_0^\epsilon (\triangledown f(Y + \alpha Z)-\triangledown\phi_{f,Y}(Z)) \alpha^{n-1} 1_{\{Y + \alpha Z \in D\}}\, d\alpha\\
=
\lim_{\epsilon \to 0^+} \frac{n}{\epsilon} \int_0^\epsilon (\triangledown f(Y + \alpha Z)-\triangledown\phi_{f,Y}(Z)) \frac{\alpha^{n-1}}{\epsilon^{n-1}} 1_{\{Y + \alpha Z \in D\}}\, d\alpha = 0.
\end{multline*}
Using \eqref{eqn:qr1} and the dominated convergence theorem this further implies that
\[
\lim_{\epsilon \to 0^+} \int_{S^{n-1}} 1_{\{Z \in \mathcal{S}\}} \frac{n}{\epsilon^n} \int_0^\epsilon (\triangledown f(Y + \alpha Z)-\triangledown\phi_{f,Y}(\alpha Z)) \alpha^{n-1} 1_{\{Y + \alpha Z \in D\}}\, d\alpha\,\sigma^n(dZ) = 0.
\]
Recalling \eqref{eqn:pr1} and that $\mathcal{S}$ is of $\sigma^n$-full measure, this leads to
\[
G_Y(f) = \frac{1}{|B_\epsilon(Y)|}\int_{B_\epsilon(Y)} \triangledown\phi_{f,Y}(X-Y) \, 1_{\{\frac{X-Y}{||X-Y||} \in \mathcal{S}\}}\, dX.
\]
Invoking again \eqref{eqn:qr1}, the right hand side of the expression above equals
\[
\frac{1}{|B_1(Y)|}\int_{B_1(Y)} \triangledown\phi_{f,Y}(X-Y)\, 1_{\{\frac{X-Y}{||X-Y||} \in \mathcal{S}\}}\, dX = \frac{1}{|B_1|}\int_{B_1} \triangledown\phi_{f,Y}(X) 1_{\{\frac{X}{||X||} \in \mathcal{S}\}} dX.
\]
which completes the proof due to \eqref{eqn:extgrad} and the fact that $\int_{B_1} 1_{\{\frac{X}{||X||} \in \mathcal{S}\}} dX = 0$.

\end{proof}

Theorem \ref{thm:extgrad2} gives a nice intuitive interpretation of $G_Y(f)$: it is an average gradient of $f$ in a small ball centred in $Y$ when the radius of the ball goes to $0$. 
While Theorem \ref{thm:extgrad} is applicable in dimension $n\geq 2$, the extended gradient defined by formulas \eqref{eqn:extgrad}-\eqref{eqn:extgrad2} is well-defined in dimension $n=1$ as well.
For a convex function $f: {\mathbb R} \to {\mathbb R}$, and every $y \in {\mathbb R}$, $G_y(f)$ in \eqref{eqn:extgrad}-\eqref{eqn:extgrad2} is given by 
\begin{equation}\nonumber
G_y(f) = \frac{f'_+(y)+f'_-(y)}{2},
\end{equation} 
where $f'_+(y)$ and $f'_-(y)$ are right and left derivatives of $f$ at $y$, respectively.

Because Steiner point of any set $K \in {\cal K}$ belongs to $K$, \eqref{eqn:extgrad} implies that $G_Y(f) \in \partial f(Y)$, or, in words, extended gradient always belongs to the subdifferential set. In particular, $G_Y(f) = \triangledown f(Y)$, whenever the latter exists. This justifies the name \emph{extended gradient} for $G_Y(f)$.

\subsection{Discussion}

Let us discuss the extended gradient \eqref{eqn:extgrad}-\eqref{eqn:extgrad2} from a different perspective. Because every convex function $f:{\mathbb R}^n \to {\mathbb R}$ is differentiable almost everywhere, our initial problem is to take function $g(Y)=\nabla f(Y)$, defined almost everywhere, and define it in some ``canonical way'' in the remaining points. The most obvious way to define a function $g$ at a point $Y$ is ``by continuity'': if $g$ is undefined at $Y$ but the limit
\begin{equation}\label{eq:limcont}
\lim\limits_{X \to Y} g(X)
\end{equation}
exists, we may define $g(Y):=\lim\limits_{X \to Y} g(X)$, to make function $g$ continuous at $Y$.

For the limit \eqref{eq:limcont} to be well-defined, we need $g$ to be defined in all points (as opposite to almost all points) in some neighbourhood of $Y$. For example, if $g$ in undefined at all points with rational coordinates, then the limit \eqref{eq:limcont} is not well-defined for all $Y$. For functions defined almost everywhere, the notion of approximate continuity can be used instead of continuity. 
A measurable function $g:E\to {\mathbb R}^m$ defined on $E\subset {\mathbb R}^n$ is called approximately continuous at $Y$ if there is a measurable set $F\subset E$ which has density $1$ at $Y$ and such that 
\begin{equation}\label{eq:apprcont}
g(Y) = \lim\limits_{X\in F, X\to Y} g(X).
\end{equation}
The limit in the right-hand side of \eqref{eq:apprcont} may exist even if $g$ in undefined in some measure $0$ set in any neighbourhood of $Y$.

If $g$ is essentially bounded in some neighbourhood of $Y$, then $g$ is approximately continuous at $Y$ if and only if $Y$ is a Lebesgue point of $g$, that is,
\begin{equation}\label{eq:lebpoint}
\lim\limits_{\epsilon \to 0^+}\frac{1}{|B_\epsilon(Y)|}\int_{B_\epsilon(Y)} |g(X)-g(Y)| dX = 0,
\end{equation}
see Section 1.7.2 of \cite{evans2015measure}.
The Lebesgue differentiation theorem states that, for any integrable function $g:{\mathbb R}^n\to{\mathbb R}^m$, almost every $Y \in {\mathbb R}^n$ is a Lebesgue point of $g$, see Section 1.7.1 of \cite{evans2015measure}. In particular, \eqref{eq:lebpoint} implies that
\begin{equation}\label{eq:lebcont}
g(Y)=\lim\limits_{\epsilon \to 0^+}\frac{1}{|B_\epsilon(Y)|}\int_{B_\epsilon(Y)} g(X)\,dX
\end{equation}

This motivates the following definition
\begin{definition}\label{def:lebcont}
We say that function $g:{\mathbb R}^n \to {\mathbb R}^m$ is Lebesgue continuous at $Y \in {\mathbb R}^n$ if the limit 
\begin{equation}\label{eq:leblim}
\lim\limits_{\epsilon \to 0^+}\frac{1}{|B_\epsilon(Y)|}\int_{B_\epsilon(Y)} g(X) dX
\end{equation}
exists and is equal to $g(Y)$.
\end{definition}

If $g$ is continuous at $Y$, then it is also approximately continuous at $Y$, $Y$ is a Lebesgue point of $g$, and, in turn, $g$ is Lebesgue continuous at $Y$. The Lebesgue continuity at $Y$, however, does not imply that $Y$ is a Lebesgue point. For example, take $n=m=1$ and function $g(x)=\text{sign}(x)$ (that is, $g(x)=1$, $g(x)=0$ and $g(x)=-1$  for $x>0$, $x=0$, and $x<0$, respectively). Then $g$ is not continuous at $0$, not approximately continuous, and $0$ is not a Lebesgue point of $g$, because 
$$
\lim\limits_{\epsilon \to 0^+}\frac{1}{2\epsilon}\int\limits_{-\epsilon}^\epsilon |g(X)-g(0)| dX = 1 \neq 0.
$$
However, $g$ is Lebesgue continuous at $0$, because 
$$
\lim\limits_{\epsilon \to 0^+}\frac{1}{2\epsilon}\int\limits_{-\epsilon}^\epsilon g(X) dX = 0 = g(0).
$$

Lebesgue continuity also has the following probabilistic interpretation. Assume that we ``measure'' $Y$ with a random error $\epsilon Z$, where $Z$ is uniformly distributed\footnote{In fact, the error $Z$ may equivalently be normally distributed, provided that $E[|g(Y+\epsilon Z)|]$ is finite for some $\epsilon>0$. This follows from the rotation invariance property of the multivariate standard normal distribution.} in a unit ball. Then the limit \eqref{eq:leblim} is $\lim\limits_{\epsilon\to 0^+}E[g(Y+\epsilon Z)]$, where $E[\cdot]$ denotes the expected value. Hence, $g$ is Lebesgue continuous at $Y$ if and only if the expected value of $g(Y+\epsilon Z)$ converges to $g(Y)$ when the magnitude $\epsilon$ of (rotation invariant) error $\epsilon Z$ goes to $0$. This allows us to estimate $g(Y)$ by computing $g$ at points of the form $Y+\epsilon Z$, and taking the average.

Theorem \ref{thm:extgrad2} implies that for $g(X)=\nabla f(X)$ with $f: {\mathbb R}^n \to {\mathbb R}$ convex, the limit \eqref{eq:leblim} exists for all $Y \in {\mathbb R}^n$. Hence, it is natural to define the gradient in points where it does not exist by Lebesgue continuity. This is exactly what we did. In fact, \emph{the extended gradient $G_Y(f)$, treated as as function of $Y$ for a fixed $f$, is the unique map from ${\mathbb R}^n$ to ${\mathbb R}^n$, which (i) coincides with $\triangledown f(Y)$ whenever the latter exists, and (ii) is Lebesgue continuous for all $Y \in {\mathbb R}^n$}.

Multiplying both sides of \eqref{eqn:extgrad2} by $Z^T$ for any $Z \in {\mathbb R}^n$, we get
\begin{equation}\label{eqn:dirderav}
Z^T G_Y(f) = \lim\limits_{\epsilon \to 0^+}\frac{1}{|B_\epsilon(Y)|}\int_{B_\epsilon(Y)} Z^T\nabla f(X) dX =  \lim\limits_{\epsilon \to 0^+}\frac{1}{|B_\epsilon(Y)|}\int_{B_\epsilon(Y)} \phi_{f,X}(Z) dX,
\end{equation}
that is, $Z^T G_Y(f)$ is the average value of the directional derivative of $f$ in direction $Z$ in a small ball around $Y$. This fact provides another characterization of $G_Y(f)$.

Next we compare $G_Y(f)$ with the existing concepts of generalized derivative in the literature. In fact, subdifferential set $\partial f(Y)$ itself can be considered as a generalization of a gradient which coincides with it at points of differentiability but is well-defined for all convex functions $f$ at all points $Y$. The obvious difference with $G_Y(f)$ is that $\partial f(Y)$ is, in general, set-valued. The same is true for other generalisations of gradient such as Clarke's generalized gradient \cite{clarke2008nonsmooth} or mollified gradient \cite{ermoliev1995minimization}: these gradients may exist for some non-convex and even discontinuous functions, but they are, in general, set-valued. In contrast, $G_Y(f)$ is a single vector in ${\mathbb R}^n$ for any fixed $f$ and $Y$.

Another generalised derivative which exists for some non-differentiable functions is the weak derivative, whose definition is motivated by integration by parts formula. 
If $u \in [a,b]\to{\mathbb R}$ is a function which is not necessarily differentiable but is known to be integrable, then its weak derivative is an integrable function $v \in [a,b]\to{\mathbb R}$ such that the equality
$$
\int\limits_a^b u(t) \phi'(t)dt = -\int\limits_a^b v(t) \phi(t) dt
$$
holds for all infinitely differentiable functions $\phi$ such that $\phi(a)=\phi (b)=0$. This definition can be generalised to higher dimensions. The weak derivative is known to be unique, in the sense that if $v_1$ and $v_2$ are both weak derivatives of $u$ then $v_1(x)=v_2(x)$ for almost all $x$. However, the weak derivative is not helpful in determining the derivative of $u$ at any \emph{specific} point $y$. For example, let $u(t)=|t|$ on $[-1,1]$ and $v(t)$ is such that $v(t)=-1$ and $v(t)=1$ on $[-1,0)$ and $(0,1]$, respectively. Then $v(t)$ is the weak derivative of $u(t)$ for any value of $v(0)$. In contrast, formula \eqref{eqn:extgrad2} allows us to uniquely determine that the extended gradient of $|t|$ at $t=0$ must be $0$.

One more generalised derivative is the approximate derivative related to the notion of approximate continuity \eqref{eq:apprcont}. A measurable function $g:E\to {\mathbb R}^m$ defined on a measurable set $E\subset {\mathbb R}^n$ having Lebesgue density 1 is called approximately differentiable at $Y \in E$ if there is a measurable set $F\subset E$ which has density $1$ at $Y$ and such that $g$ restricted to $F$ is classically differentiable at $Y$. Many important classes of functions, such as functions of bounded variation, are approximately differentiable at almost all points, see Theorem 6.4 in \cite{evans2015measure}. However, there is no known canonical way to define the approximate derivative for the \emph{all} points, even for simple functions like $u(t)=|t|$.

By \eqref{eqn:subdifrepr}, defining extended gradient is equivalent to selecting a point from the subdifferential set $\partial f(Y)$, which is a non-empty convex compact set in ${\mathbb R}^n$. The extended gradient \eqref{eqn:extgrad}-\eqref{eqn:extgrad2} corresponds to the selection of Steiner point of $\partial f(Y)$. One may ask what if we select a different special point from $\partial f(Y)$, for example, the center of gravity (centroid). For any measurable set $T \subset {\mathbb R}^n$, its centroid is given by
\begin{equation}\label{eq:centermass}
c(T)=\frac{\int\nolimits_{{\mathbb R}^n}x I_T(x) dx}{\int\nolimits_{{\mathbb R}^n}I_T(x) dx},
\end{equation} 
where $I_T(x)$ is the characteristic function of $T$. We could then define the centroid-based extended gradient as 
\begin{equation}\label{eq:centroidbased}
G^c_Y(f) = c(\partial f(Y)).
\end{equation} 
However, it is known that, in general, center of gravity is not preserved under set addition: we may have $c(T_1)+c(T_2) \neq c(T_1+T_2)$ for measurable sets $T_1, T_2 \subset \mathbb{R}^n$. Hence, the centroid-based extended gradient $G^c_Y(f)$ is not additive. Also, consider an isosceles triangle with vertex coordinates $(0,1)$, $(-\epsilon,0)$, and $(\epsilon,0)$, where $\epsilon>0$ is small. The centroid of this triangle is $(0,1/3)$, for any $\epsilon>0$. However, when $\epsilon$ converges to $0$, the triangle degenerates to the interval with endpoints $(0,1)$ and $(0,0)$, and the centroid jumps to $(0,1/2)$. Now consider a family of functions $f_\epsilon$ whose $\partial f_\epsilon(Y)$ at some point $Y$ is this triangle, and a function $f_0$ whose $\partial f_0(Y)$ is the limiting interval. For small $\epsilon$, functions $f_\epsilon$ look almost identical to $f_0$ locally at $Y$, but the centroid-based extended gradients \eqref{eq:centroidbased} of $f_\epsilon$ and $f_0$ at $Y$ are completely different. By contrast, the Steiner point of the same triangle has coordinates $(0,1/2+h(\epsilon))$, where $\lim\limits_{\epsilon\to 0}h(\epsilon)=0$, hence the Steiner point converges to $(0,1/2)$ as $\epsilon\to 0$ smoothly, with no jumps. Hence, the extended gradient \eqref{eqn:extgrad}-\eqref{eqn:extgrad2} of $f_\epsilon$ converges smoothly to the extended gradient of $f_0$. Property (G3) ensures that this is true in general. 

Of course, we may select a different point from $\partial f(Y)$ instead of centroid, but Theorem \ref{thm:extgrad} implies that the Steiner point is the unique choice for which the corresponding extended gradient satisfies the natural properties (G1)-(G4).

Can we define extended gradient for convex functions in an infinite dimensional Banach space $B$? By \eqref{eqn:subdifrepr}, this is equivalent to developing a methodology for selecting a point from the subdifferential set $\partial f(Y)$, which is a non-empty convex weakly compact subset of $B$. Such methodology was developed in \cite{lim1981center}, and can be used to construct a version of extended gradient for convex functions in $B$ which satisfies some useful properties, including (G2), (G4) and a weaker version of (G3), see \cite{Grechuk2015}. However, the additivity (G1) fails, and, in fact, this is unavoidable: it is known that there is no analogue of Steiner point of a convex body in an infinite-dimensional space, see \cite{vitale1985steiner}. This implies that any attempt to extend the notion of extended gradient to infinite dimensional spaces will result in a failure of at least one of properties (G1)-(G4).  

In addition to lacking additivity and axiomatic foundation, the construction in \cite{lim1981center} uses ordinals and transfinite induction, hence the resulting extended gradient is in general difficult to compute. The explicit formula \eqref{eqn:extgrad2} from this paper is not directly applicable to infinite dimensional spaces, because the integration in it is with respect to the Lebesgue measure, and there is no useful analogue of Lebesgue measure on an infinite dimensional Banach space. Specifically, every translation-invariant measure that is not identically zero assigns infinite measure to all open subsets of such space, see \cite{hunt1992prevalence}.

\section{An application to capital allocation and risk sharing}\label{sec:capital}

Let $(\Omega, \Sigma, {\mathbb P})$ be  a probability space, where $\Omega$ denotes the designated space of future states $\omega$, $\Sigma$ is a $\sigma$-algebra of sets in $\Omega$, and ${\mathbb P}$ is a probability measure on $(\Omega, \Sigma)$. A random variable (r.v.) is any measurable function from $\Omega$ to $\mathbb R$. Let $L^0(\Omega)$ be a vector space of all random variables on $\Omega$, and let $V$ be a subspace of $L^0(\Omega)$.

\subsection{Capital allocation}

Let r.v. $Y \in V$ represent a (random) profit of some portfolio, consisting of $m$ sub-portfolios, that is, $Y=\sum_{i=1}^m X_i$. 
Let $\rho:V \to {\mathbb R}$ be a function such that $\rho(X)$ represents the \emph{risk} associated with any $X \in V$.
The \emph{capital allocation problem} is the problem of distributing risk capital $\rho(Y)$ among sub-portfolios, that is, assigning to sub-portfolio $i$ its \emph{risk contribution} $k_i$ such that $\sum_{i=1}^m k_i=\rho(Y)$. 

One solution to the capital allocation problem is to assign to each sub-portfolio $i$ the risk contribution
\begin{equation}\label{eq:eiler}
k_i = \lim\limits_{h\to 0}\frac{\rho(Y+hX_i)-\rho(Y)}{h},
\end{equation}
provided that the limit exists. 
The right-hand side of \eqref{eq:eiler} is called Gateaux derivative of $\rho$ at $Y$ in the direction $X_i$. If $\rho$ is Fr\'{e}chet differentiable at $Y$, then Gateaux derivative is linear in $X_i$, and 
$$
\sum_{i=1}^m k_i = \lim\limits_{h\to 0}\frac{\rho(Y+h \sum_{i=1}^m X_i)-\rho(Y)}{h} = \lim\limits_{h\to 0}\frac{\rho(Y+h Y)-\rho(Y)}{h}.
$$
Hence, we have $\sum_{i=1}^m k_i=\rho(Y)$ if the risk measure $\rho$ satisfies the following property: 
\begin{itemize}
\item[($\rho 1$)] (positive homogeneity): $\rho(\lambda X) = \lambda \rho(X)$ for all $\lambda \geq 0$ and $X \in V$.
\end{itemize}
In this case, the risk contributions defined in \eqref{eq:eiler} are called Euler contributions, and the corresponding allocation method is called the Euler principle, or the Gradient Principle. As noted in \cite{bauer2013capital}, different authors studied the capital allocation problems from different perspectives, but many of them end up in the same allocation method - the Euler principle - if we ignore differences in notation and presentation. Many authors, including \cite{denault2001coherent}, \cite{kalkbrener2005}, and \cite{tasche2007capital} suggest a set of axioms an allocation method should satisfy and then deduce that if there exists an allocation method satisfying their axioms, it is unique and in fact coincides with the Euler principle. For example, \cite{kalkbrener2005} postulated that the risk contribution $k_i$ should depend only on $X_i$ and $Y$, but not on the decomposition of $Y-X_i$ among the rest of sub-portfolios. In this case, capital allocation is a function $\Lambda_\rho: V \times V \to {\mathbb R}$ such that $\Lambda_{\rho}(Y,Y)=\rho(Y), \, \forall\,Y \in V$. 
\cite{kalkbrener2005} also postulated that $\Lambda_\rho$ should satisfy the following properties 
\begin{itemize}
\item[(i)] $X \mapsto \Lambda_{\rho}(X,Y)$ is a linear function;
\item[(ii)] $\Lambda_{\rho}(X,Y)\leq \rho(X)$ for all $X,Y \in V$, and
\item[(iii)] $\Lambda_{\rho}(X,Y)$ is continuous in $Y$, in the sense that $\lim\limits_{\epsilon\to 0} \Lambda_{\rho}(X,Y+\epsilon X) = \Lambda_{\rho}(X,Y)$ for all $X\in V$.
\end{itemize}
Condition (i) guarantees that $\sum\nolimits_{i=1}^m k_i=\rho(Y)$, where $k_i=\Lambda_{\rho}(X_i,Y)$. Condition (ii) is called diversification: the contributions of single assets will never exceed the total risks. Condition (iii) guarantees that the small changes in $Y$ can not change the risk contribution significantly.
See \cite{kalkbrener2005} for further discussion and justification of these axioms.
Theorem 4.2 in \cite{kalkbrener2005} implies that for a capital allocation $\Lambda_{\rho}$ to satisfy (i) and (ii), the risk measure $\rho$ must satisfy ($\rho 1$) and  
\begin{itemize}
\item[($\rho 2$)] (sub-additivity): $\rho(X+Y) \leq \rho(X) + \rho(Y)$ for all $X,Y \in V$.
\end{itemize}
Theorem 4.3 in \cite{kalkbrener2005} guarantees that, for any positively homogeneous and sub-additive $\rho$, a capital allocation $\Lambda_{\rho}$ satisfying (i)-(iii) exists if and only if $\rho$ is differentiable at $Y$, and in this case it is unique and is given by \eqref{eq:eiler}.

Conditions ($\rho 1$) and ($\rho 2$) together ensure that $\rho$ is convex and therefore globally Lipschitz continuous. Hence, it is differentiable almost everywhere on $V$. Based on this, one may hope that the Euler principle \eqref{eq:eiler} works in all practically important situations, and the (measure zero) cases when it fails have only theoretical meaning but no practical value. 
However, Example 3 in \cite{Grechuk2015} demonstrates that if the initial portfolio $Y$ is not ``arbitrary'' but is a result of natural risk-minimization policy, then it may happen that $Y$ is ``forced'' to belong to exactly the (measure zero) set in which $\rho$ is \emph{not} differentiable. Hence, the problem of identifying a \emph{unique} capital allocation scheme which works for \emph{all} $Y$ has not only theoretical, but also practical importance.

To resolve this issue at least partially, \cite{cherny2011two} replaced the continuity property (iii) by the property of law-invariance, that is, $\Lambda_{\rho}(X,Y)$ should depend only on the joint law of $X$ and $Y$. They proved that the capital allocation satisfying (i), (ii), and their new axiom exists and is unique for a class of risk measures they call weighted VaR. However, such capital allocation may not exist or not be unique for risk measures outside of this class.

\cite{Grechuk2015} suggests a way to extend the Euler principle for an arbitrary risk measure satisfying ($\rho 1$) and ($\rho 2$) and arbitrary $X_i$ and $Y$, but the construction involves ordinals and transfinite induction and is therefore impractical in general. In addition, it lacks axiomatic foundation.

We now show how the technique developed in Section \ref{sec:gradient} can be used to develop a system of natural axioms defining a \emph{unique} capital allocation scheme. We assume that the space $V$ is finite-dimensional, and is identified with ${\mathbb R}^n$ for some $n$. This assumption holds in (at least) two important special cases.

The first case is when the underlying probability space $(\Omega, \Sigma, {\mathbb P})$ is finite. In many applications, the underlying distributions are unknown and are estimated based on finite samples of historical data. In these cases, discrete random variables defined on a finite probability space may be an appropriate model.
If $\Omega=(\omega_1, \dots, \omega_n)$ is finite, there is an obvious bijection between random variables $X:\Omega\to{\mathbb R}$ and vectors $u=(u_1,\dots,u_n)\in {\mathbb R}^n$: let $u(X)=(X(\omega_1), \dots, X(\omega_n))$ be the vector consisting on all values of $X$. Conversely, for every $u\in {\mathbb R}^n$ denote  $X_u$ the random variable such that $X_u(\omega_i)=u_i$, $i=1,\dots,n$. The risk measure $\rho$ can then be treated as a function $r:{\mathbb R}^n\to {\mathbb R}$ defined by
$$
r(u) = \rho(X_u).
$$ 

The second case is when probability space $(\Omega, \Sigma, {\mathbb P})$ is arbitrary, and $V$ is the space of all random variables $X$ representable in the form $X=\sum\limits_{i=1}^n u_i X_i$, where $X_i$ represent sub-portfolios, and $u=u(X)=(u_1, \dots, u_n)$ is a vector of real coefficients.
In this case, we can define function $r:{\mathbb R}^n \to {\mathbb R}$ as
$$
r(u) = \rho\left(\sum\limits_{i=1}^n u_i X_i\right),
$$ 
and the Euler contributions \eqref{eq:eiler} can equivalently be written as
\begin{equation}\label{eq:eiler2}
k_i = \frac{\partial r}{\partial u_i}(1,\dots,1),
\end{equation}
provided that $r(u)$ is differentiable at $(1,\dots,1)$.
This set-up has been considered by many authors including \cite{denault2001coherent} and \cite{tasche2007capital}.

Now we suggest to relax the continuity property (iii) of \cite{kalkbrener2005} to a weaker version 
\begin{itemize}
\item[(iv)] $Y \mapsto \Lambda_{\rho}(X,Y)$ is a Lebesgue continuous function on $V$,
\end{itemize}
where Lebesgue continuity is defined in Definition \ref{def:lebcont}. Then we have the following result.

\begin{theorem}\label{th:allocation}
Let $\rho$ be a risk measure satisfying ($\rho 1$) and ($\rho 2$). Then there exists a unique capital allocation scheme satisfying properties (i), (ii), and (iv). This scheme is given by 
\begin{equation}\label{eq:allocation}
\Lambda^*_\rho(X,Y) = G_{u(Y)}(r) \cdot u(X),
\end{equation}
where $G_{u(Y)}(r)$ is the extended gradient of $r$ at $u(Y)$ defined in \eqref{eqn:extgrad}-\eqref{eqn:extgrad2}, and $\cdot$ in the usual scalar product in ${\mathbb R}^n$. Hence, the risk contributions are
\begin{equation}\label{eq:allocation2}
k_i = \Lambda^*_\rho\Big(X_i,\sum X_i\Big) = G_{u(\sum X_i)}(r) \cdot u(X_i).
\end{equation}
\end{theorem} 
\begin{proof}
Let $\Lambda_\rho(X,Y)$ be a capital allocation scheme satisfying (i), (ii), and (iv). By linearity (i), we have 
$$
\Lambda_\rho(X,Y) = f_r(Y) \cdot u(X),
$$
for some vector $f_r(Y) \in {\mathbb R}^n$ which depends on $r$ and $Y$. If $r$ is differentiable at $u(Y)$ with gradient $\nabla r(u(Y))$, then, by Theorem 4.3 in \cite{kalkbrener2005},  
$$
\Lambda_\rho(X,Y) = \lim\limits_{h\to 0}\frac{\rho(Y+hX)-\rho(Y)}{h} = \lim\limits_{h\to 0}\frac{r(u(Y)+hu(X))-r(u(Y))}{h}
$$
$$
= \nabla r(u(Y)) \cdot u(X) = G_{u(Y)}(r) \cdot u(X) = \Lambda^*_\rho(X,Y).
$$
By Lebesgue continuity (iv), equality $\Lambda_\rho(X,Y)=\Lambda^*_\rho(X,Y)$ extends to all $Y \in V$. 
\end{proof}

We end this section by saying a few words about property (iv). Linearity (i) implies that $\Lambda_{\rho}(X,Y)$ is continuous in $X$, which means that small changes of sub-portfolio $X_i$ has a limited effect on $k_i=\Lambda_{\rho}(X_i,Y)$, provided that the total portfolio $Y$ is fixed. It would be desirable and natural to also require continuity in $Y$ (property (iii)), which would imply that small changes in $Y$ also have a limited effect on the risk
capital of its subportfolios. However, the capital allocation satisfying (i), (ii), and (iii) may not exist. Failure of (iii) implies that we may have $\Lambda_{\rho}(X_i,Y+Z)$ significantly different from $\Lambda_{\rho}(X_i,Y)$ even if $Z$ is a small fluctuation. Imagine a (somewhat extreme) scenario when, for all small $Z$, we would have $\Lambda_{\rho}(X_i,Y+Z) \geq a > b > \Lambda_{\rho}(X_i,Y)$ for some $a>b$. This allocation scheme looks strange and unfair: if the risk contribution of sub-portfolio $i$ is at least $a$ for every small fluctuation of $Y$, why it is less than $b$ at $Y$? In this scenario, the intuition is that $k_i=\Lambda_{\rho}(X_i,Y)$ is ``unfairly small''. 
Property (iv) is a natural relaxation of continuity (iii), which precludes this and similar scenarios. It guarantees that $\Lambda_{\rho}(X,Y)$ is equal to the average value of $\Lambda_{\rho}(X,Z)$ when $Z$ takes values in a small ball with center $Y$. Hence, $k_i=\Lambda_{\rho}(X_i,Y)$ is not ``unfairly small'' and not ``unfairly large'', but is just right.

\subsection{Risk sharing}\label{subsec:risk_sharing}

The capital allocation problem arises in different contexts, for example, in the context of risk sharing with cash invariant risk measures. Hence, our solution of the capital allocation problem automatically implies a solution to this problem as well.

Instead of reviewing an extensive literature on risk sharing, we refer the reader to survey \cite{aase2002perspectives} and proceed to problem formulation.
Assume that there are $m$ agents, indexed by $I=\{1,2,\dots, m\}$. Each agent $i \in I$ has an initial endowment $Y_i\in V$, and an associated risk measure $\rho_i:V\to{\mathbb R}$. The agents aim to redistribute the total endowment $Y=\sum_{i=1}^m Y_i$ among themselves to reduce their risk. Agent $i \in I$ receives the part $X_i \in V$ of the total endowment such that $\sum_{i=1}^m X_i=Y$;  the vector $\alloc{X}=(X_1,X_2,\dots,X_m)$ is called the \emph{risk allocation}. A risk allocation $\alloc{X}$ is called \emph{Pareto optimal} if there is no risk allocation $\alloc{Z}=(Z_1,Z_2,\dots,Z_m)$ with $\rho_i(Z_i)\leq\rho_i(X_i), \, i\in I$, with at least one inequality being strict. If the vector $\alloc{Y}=(Y_1,Y_2,\dots,Y_m)$ of all initial endowments is not Pareto optimal, it is beneficial for all agents to switch to a Pareto optimal one. However, there are typically many Pareto optimal allocations, so how to choose a ``fair'' one among them?

If risk measures $\rho_i$ are cash-invariant, that is, $\rho_i(X+C)=\rho_i(X)-C$ for every $X\in V$ and every constant 
$C$,
then risk allocation $\alloc{X}=(X_1,X_2,\dots,X_m)$ is Pareto optimal if and only if it minimizes the total risk $\sum_{i=1}^m \rho_i(X_i)$ over all possible risk allocations. 
If, moreover, all $\rho_i$ are also positively homogeneous and sub-additive, then so is the functional
\begin{equation}\label{eq:coal}
\rho^*(Y) = \inf\limits_{\alloc{X}:\sum_{i=1}^m X_i = Y} \sum_{i=1}^m \rho_i(X_i), 
\end{equation}  
mapping the total endowment $Y$ to the corresponding total risk.
In this case, if a risk allocation $\alloc{X}=(X_1,X_2,\dots,X_m)$ is Pareto optimal, then all allocations $\alloc{X}'=(X_1+C_1,X_2+C_2,\dots,X_m+C_m)$, where $C_i$ are constants such that $\sum_{i=1}^m C_i =0$, are Pareto optimal as well. Moreover, it is easy to see that in fact all Pareto optimal allocations, up to equivalence\footnote{Allocations $\alloc{X}=(X_1,\dots,X_m)$ and $\alloc{Z}=(Z_1,\dots,Z_m)$ are equivalent if $\rho_i(X_i)=\rho_i(Z_i)$ for all $i=1,\ldots,m$}, can be obtained in this way. Hence, the problem of choosing a ``fair'' allocation from the set of Pareto optimal allocation reduces to the ``fair'' selection of constants $C_i$.

Now, each agent $i$ starts with an initial endowment $Y_i$ and ends up with the final endowment $X_i$. Hence, the extra risk taken by the agent is $X_i-Y_i$. The more risk agent agrees to take, the higher premium $C_i$ she should get. Hence, the problem of determining $C_i$ is exactly the problem of determining risk contribution of $X_i-Y_i$ to $Y$, where the aggregate risk measure is $\rho^*$. If, similar to the previous section, we assume that $C_i$ should depend only on $X_i-Y_i$ and $Y$, write $C_i=\Lambda_{\rho^*}(X_i-Y_i,Y)$, and postulate the properties (i), (ii), and (iv), then constants $C_i$ are determined uniquely by Theorem \ref{th:allocation}. Specifically, 
$$
C_i = \Lambda^*_{\rho^*}(X_i-Y_i,Y), \quad i=1,\dots, m, 
$$
where $\Lambda^*$ is defined in \eqref{eq:allocation}.

\section{Mean-deviation portfolio optimization}\label{sec:portf}

\subsection{Finitely generated deviation measures}\label{sec:2}

Assume that the probability space $\Omega$ is finite with $N = |\Omega|$ and $\prob(\omega) > 0$ for any $\omega \in \Omega$. A finite probability space $\Omega$ will be called \emph{uniform}, if $\prob[\omega_1]=\dots=\prob[\omega_N]=\frac{1}{N}$. 

Let $\br^{(i)}$, $i=1,\dots,n$, be random variables denoting the rates of return of financial instruments. We assume that there exists also a risk-free instrument with a constant rate of return $R^{(0)}=:r_0$. Following \cite{rockafellar2006c}, we also assume that 
\begin{assumption}
\item[(M)] any portfolio $\bX=\sum\nolimits_{i=1}^n x_i \br^{(i)}$ is a non-constant random variable for any non-zero $\bx=(x_1, \dots, x_n) \in \er^n$.\label{ass:non-constant}
\end{assumption}
 \cite{rockafellar2006c} formulated portfolio optimization problem as follows
\begin{equation}\label{eqn:portf1}
\min\limits_{(x_0, x_1, \dots, x_n)} \cD\left(\sum\limits_{i=0}^n x_i \br^{(i)}\right), \quad \text{s.t.}\,\, \sum\limits_{i=0}^n x_i = 1, \,\,\,\, \sum\limits_{i=0}^n x_i \ee[\br^{(i)}] \geq r_0+\Delta,
\end{equation} 
where $\Delta>0$ and $\cD$ is a general deviation measure, that is, a functional $\cD:{\cal L}^2(\Omega)\to[0;\infty]$ satisfying:
\begin{assumption}
\setlength{\itemsep}{0pt}
\item[(D1)] $\cD(\bX)=0$ for constant $X$, but $\cD(\bX)>0$ otherwise \emph{(non-negativity)},\label{ass:D1}
\item[(D2)] $\cD(\lambda \bX) = \lambda \cD(\bX)$ for all $X$ and all $\lambda > 0$ \emph{(positive homogeneity)}, \label{ass:D2}
\item[(D3)] $\cD(\bX + Y)\leq \cD(\bX) + \cD(Y)$ for all $\bX$ and $Y$ \emph{(subadditivity)},
\item[(D4)] set $\{\bX\in{\cal L}^2(\Omega)\big| \cD(\bX)\leq C\}$ is closed for all $C<\infty$ \emph{(lower semicontinuity)}.
\end{assumption}
With centered rates of return $\hat{\br}^{(i)}=\br^{(i)}-E[\br^{(i)}]$, $i=1,\dots,n$, and $\mu_i=E[\br^{(i)}] - r_0$,  $i=1,\dots,n$, problem (\ref{eqn:portf1}) can be reformulated as
\begin{equation}
\label{eqn:MV1}
\min_{x \in \er^n} \cD(\hat R^T x),\qquad \text{s.t. } \mu^T x \ge \Delta,
\end{equation} 
where $\hat \bR = (\hat \br^{(1)}, \ldots, \hat \br^{(n)})^T$, $\bx = (x_1, \ldots, x_n)^T$, and $\mu = (\mu_1, \ldots, \mu_n)^T$. We chose to use a distinct notation $\cD$ to indicate this particular choice of the risk measure $\rho$ in \eqref{eqn:intro} as this will be the default setting for the rest of the paper. 

By \citet[Theorem 1]{rockafellar2006b}, every deviation measure $\cD$ can be represented in the form
\begin{equation}\label{eqn:devenvelopes}
\cD(X)=\ee X + \sup_{Q \in \cQ}E[-XQ],
\end{equation}
where $\cQ \subset \L^2(\Omega)$ is called \emph{risk envelope} and can be recovered from $\cD$ by
\begin{equation}\label{eqn:devenvelopes2}
\cQ=\big\{\,Q \in {\cal L}^2(\Omega) \;\big|\;\; \ee[X(1-Q)]\leq \cD(X) \;\; \forall X \in \L^2(\Omega)\big\}.
\end{equation}
Moreover, the set $\cQ$ is closed and convex in $\L^2(\Omega)$. Elements $Q \in \cQ$ for which supremum in (\ref{eqn:devenvelopes}) is attained are called \emph{risk identifiers} of $X$. The set of all risk identifiers of $X$ is denoted $\cQ(X)$.

A deviation measure $\cD$ is \emph{finite}, that is, $\cD(X)<\infty, \, \forall \, X$ if and only if the corresponding $\cQ$ is bounded. In this case, $\cQ(X)$ is non-empty for every $X \in \L^2(\Omega)$, and, due to closeness and convexity of $\cQ$ and linearity of $Q \mapsto \ee[-XQ]$, every set $\cQ(X)$ must contain at least one extreme point of $\cQ$. Therefore, $\sup_{Q \in \cQ} \ee[-XQ]= \max_{Q \in \cQ^e} \ee[-XQ]$, where $\cQ^e$ is the set of all extreme points of $\cQ$. 
In fact, a bounded closed convex $\cQ$ is the closed convex hull of $\cQ^e$, see Theorem 2 in \cite{phelps1974dentability}\footnote{
Because $\L^2(\Omega)$ is a reflexive Banach space, it has the Radon-Nikodym property, and 
Theorem 2 in \cite{phelps1974dentability} applies.}.
Of particular importance to this paper will be the set of such risk measures for which the set $\cQ^e$ is finite:
\begin{definition}
A finite deviation measure $\cD$ is called \emph{finitely generated} if the set $\cQ^e$ of all extreme points of $\cQ$ is finite. We will call elements of this set \emph{extreme risk generators}.
\end{definition}
In other words, $\cD$ is finitely generated if and only if $\cQ$ is a convex hull of a finite number of points.

\begin{ex}
For standard deviation, $\sigma(X) =||X-E[X]||_2$, the risk envelope is given by \citet[Example 1]{rockafellar2006c} 
$$
\cQ=\big\{\,Q \;\big|\; E[Q]=1, \;\; \sigma(Q) \leq 1\big\},
$$
and, for $N>2$, has infinitely many extreme points, hence $\sigma$ is \emph{not} finitely generated. 
\end{ex}

\begin{ex}\label{ex:2}
For mean absolute deviation, ${\rm MAD}(X) =E[|X-E[X]|]$, the risk envelope is given by \citet[Example 2]{rockafellar2006c} 
$$
\cQ=\big\{\,Q \;\big|\; E[Q]=1, \;\; \sup Q-\inf Q \leq 2\big\},
$$
which is a convex polytope in ${\mathbb R}^N$ with a finite number of vertices. Hence ${\rm MAD}$ is finitely generated. In fact, extreme points $\cQ^e$ can be explicitly written as 
$$
\cQ^e = \big\{\,Q = 1 + \ee[Z] - Z \;\big|\; \exists S \subset \{1,2,\dots, N\}: \, Z_i=1, \,\, i\in S; \,\, Z_i=-1, \; i\not\in S\big\},
$$
which for a uniform probability on $\Omega$ simplifies to
$$
\cQ^e = \big\{\, \bx \in{\mathbb R}^N\;\big|\; \exists S \subset \{1,2,\dots, N\}: \, x_i=\frac{2|S|}{N}-1, \,\, i\in S; \,\, x_i=\frac{2|S|}{N}+1, \; i\not\in S\big\},
$$
where the subset $S$ is taken non-empty and proper. Hence, $|\cQ^e|=2^N-2$.
\end{ex}

\begin{ex}\label{ex:cvar}
For CVaR-deviation
\begin{equation}\label{eqn:cvardevdef}
{\rm CVaR}_\alpha^\Delta(X)\equiv E[X]-\frac{1}{\alpha}\int\nolimits_{0}^{\alpha}q_X(\beta)\,d\beta,
\end{equation}
the risk envelope is given by \citet[Example 4]{rockafellar2006c} 
$$
\cQ_\alpha=\big\{\,Q \;\big|\; E[Q]=1, \;\; 0\leq Q \leq \alpha^{-1}\big\}.
$$
The linearity of constraints imply that ${\rm CVaR}_\alpha^\Delta$ is finitely generated. In particular, if the probability is uniform over $\Omega$ and $\alpha=\frac{k}{N}$ for some integer $1\leq k<N$, extreme points $\cQ^e$ are
$$
\cQ^e = \big\{\,\bx \in{\mathbb R}^N\;\big|\; \exists S \subset \{1,2,\dots, N\}: \, |S|=k, \;\; x_i=\frac{N}{k}, \,\, i\in S; \,\, x_i=0, \; i\not\in S\big\}.
$$
This implies that $|\cQ^e|=\frac{N!}{k!(N-k)!}$.
\end{ex}

\begin{lemma}\label{lem:fingen}
Let $\cD_1, \cD_2, \dots, \cD_m$ be finitely generated deviation measures. Then functionals
\begin{itemize}
\item[(a)] $\cD(X) = \sum_{i=1}^m \lambda_i \cD_i(X)$, with $\lambda_i>0,\, i=1\dots,m$;
\item[(b)] $\cD(X) = \max\{\cD_1(X), \dots, \cD_m(X)\}$
\end{itemize}
are also finitely generated deviation measures.
\end{lemma}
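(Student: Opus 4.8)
The plan is to verify axioms (D1)--(D4) for each construction (routine) and then to prove the substantive claim --- finite generation --- by exhibiting the risk envelope of the combined measure \emph{explicitly} in terms of the envelopes $\cQ_1,\dots,\cQ_m$ of $\cD_1,\dots,\cD_m$ and checking that it is again a polytope. Throughout I use that ``$\cD_i$ finitely generated'' means $\cQ_i=\conv(\cQ_i^e)$ with $\cQ_i^e$ finite, together with the representation \eqref{eqn:devenvelopes} and the fact that a deviation measure is recovered from its (closed, convex, bounded) risk envelope via \eqref{eqn:devenvelopes2}; hence it suffices to produce \emph{any} bounded closed convex set giving the representation \eqref{eqn:devenvelopes}, since uniqueness then identifies it as the envelope.

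For (a), the axioms (D1)--(D4) follow termwise: non-negativity and strict positivity on non-constant $X$ hold because each $\lambda_i>0$ and each $\cD_i$ has these properties, positive homogeneity and subadditivity are preserved under nonnegative linear combinations, and lower semicontinuity holds because a finite sum of lsc functionals is lsc. For the envelope, I start from $\cD(X)=\sum_i\lambda_i\big(E[X]+\sup_{Q\in\cQ_i}E[-XQ]\big)$ and push the coefficients inside the suprema, using separability of the supremum, to rewrite $\sum_i\lambda_i\sup_{Q\in\cQ_i}E[-XQ]=\sup_{Q\in M}E[-XQ]$, where $M=\sum_i\lambda_i\cQ_i$ is the (scaled) Minkowski sum. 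Writing $\Lambda=\sum_i\lambda_i$ and absorbing the leftover term $(\Lambda-1)E[X]$ into the supremum shows that $\cD(X)=E[X]+\sup_{Q\in\cQ}E[-XQ]$ with $\cQ=\{Q-(\Lambda-1):Q\in M\}$, a translate of $M$ by the constant random variable $(\Lambda-1)$. Since $M$ is a Minkowski sum of the polytopes $\lambda_i\cQ_i$, it is itself a polytope --- its extreme points lie among the finitely many sums $\sum_i\lambda_i q_i$ with $q_i\in\cQ_i^e$ --- and a translate of a polytope is a polytope; hence $\cQ$ is finitely generated. (The normalization $E[Q]=1$ on $\cQ$, which forces the translate, is automatic because each $\cQ_i$ has mean one.)

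For (b), again (D1)--(D4) are immediate: $\max_i\cD_i$ vanishes exactly on constants, is positively homogeneous, is subadditive because $\max_i(\cD_i(X)+\cD_i(Y))\le\max_i\cD_i(X)+\max_j\cD_j(Y)$, and is lsc as the pointwise maximum of finitely many lsc functionals (its sublevel sets are finite intersections of closed sets). For the envelope I use that the support function of a convex hull of a union equals the maximum of the support functions: since $Q\mapsto E[-XQ]$ is linear, $E[X]+\sup_{Q\in\conv(\cup_i\cQ_i)}E[-XQ]=\max_i\big(E[X]+\sup_{Q\in\cQ_i}E[-XQ]\big)=\cD(X)$, so the risk envelope of $\cD$ is $\cQ=\conv(\cup_i\cQ_i)$. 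Because $\cup_i\cQ_i\subset\conv(\cup_i\cQ_i^e)$ with the reverse inclusion clear, $\cQ=\conv(\cup_i\cQ_i^e)$ is the convex hull of a finite set of points, hence a polytope, and $\cD$ is finitely generated.

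The only genuinely nontrivial ingredients are the two convex-geometry facts --- that a Minkowski sum of finitely many polytopes is a polytope, and that the convex hull of a finite union of polytopes (equivalently, of a finite set of points) is a polytope --- and the verification that the sets I construct really are the risk envelopes. The latter is where I would be careful: I must check that each constructed set is closed, convex and \emph{bounded}, so that the associated measure is finite and the envelope is uniquely determined by \eqref{eqn:devenvelopes2}; boundedness is preserved because Minkowski sums, translates and convex hulls of bounded sets are bounded. I expect this bookkeeping --- especially the mean-one normalization in part (a) --- rather than any deep difficulty, to be the main thing to get right.
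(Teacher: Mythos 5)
Your proof is correct and follows essentially the same route as the paper: identify the risk envelope of the sum as a (translated) Minkowski combination $\sum_i\lambda_i\cQ_i+(1-\Lambda)$ and that of the maximum as $\conv(\cup_i\cQ_i)$, then observe that these operations preserve polytopes. The paper simply cites Proposition~4 of \cite{rockafellar2006b} for the envelope identities (including the scaling adjustment handling $\sum_i\lambda_i\neq 1$) that you derive by hand, so your write-up is a more self-contained version of the same argument.
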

\begin{proof}
Proof follows from \citet[Proposition 4]{rockafellar2006b}, and from the fact that if sets $\cQ_1, \cQ_2, \dots, \cQ_m$ are all convex hulls of a finite number of points, then so are the sets: $\lambda_1 \cQ_1 + \dots + \lambda_m Q_m$; the convex hull of $\cQ_1 \cup \dots \cup \cQ_m$; and $\{Q\,|\,Q=(1-\lambda)+\lambda Q_i\,\, \text{for some}\,\, Q_i \in \cQ_i\}$, $\lambda > 0$, $i=1, \ldots, m$.
\end{proof}

\begin{ex}\label{ex:mixedcvar}
Mixed CVaR-deviation 
\begin{equation}\label{eqn:mixedcvar}
\cvar^\Delta_{\lambda}(X)=\int_0^1 \cvar^\Delta_{\alpha}(X)\,\lambda(d\alpha),
\end{equation}
where $\lambda$ is a probability measure on $(0,1)$, is also finitely generated. Indeed, because the probability space is finite, mixed CVaR-deviation \eqref{eqn:mixedcvar} can be written as a finite mixture of CVaR-deviations
$$
\cvar^\Delta_{\lambda}(X) = \sum_{i=1}^m \lambda_i {\rm CVaR}_{\alpha_i}^\Delta(X),
$$
where $\alpha_i \in (0,1)$, $\lambda_i>0$, $i=1,\dots,m$, and $\sum_{i=1}^m \lambda_i =1$, which is a finitely generated deviation measure due to Example \ref{ex:cvar} and Lemma \ref{lem:fingen}(a).
\end{ex}

\subsection{Optimal portfolios and active portfolio risk generators}\label{sec:4}

We make the following standing assumptions:
\begin{assumption}
\item[(A)] The deviation measure $\cD$ is finitely generated.\label{ass:1}
\item[(B)] $\Delta > 0$ and $\mu \ne \0$.\label{ass:2}
\end{assumption}
The latter assumption implies the following properties of the optimal solution to \eqref{eqn:MV1}.
\begin{lemma}\label{lem:positive}
The optimal objective value in \eqref{eqn:MV1} is positive and in optimum the constraint is binding: $\mu^T x = \Delta$.
\end{lemma}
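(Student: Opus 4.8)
The plan is to reduce the whole statement to elementary properties of the single convex function $g(x):=\cD(\hat R^T x)$ on $\er^n$. First I would record the three features of $g$ that drive everything: it is positively homogeneous, $g(\lambda x)=\lambda g(x)$ for $\lambda>0$, directly from (D2); it is continuous, since it is a finite-valued convex function (subadditivity (D3) plus positive homogeneity give convexity, and $\cD$ is finite because it is finitely generated, assumption (A)); and, crucially, it is strictly positive off the origin, because for $x\neq\0$ assumption (M) makes $\hat R^T x$ non-constant, so (D1) yields $g(x)>0$.

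For the positivity of the optimal value, pointwise positivity of $g$ on the feasible set is not by itself sufficient — one needs a uniform positive lower bound. Here I would exploit homogeneity. By continuity $g$ attains a minimum $c=\min_{\|x\|=1} g(x)$ on the compact unit sphere, and $c>0$ since $g$ is positive away from the origin. Homogeneity then upgrades this to $g(x)\ge c\,\|x\|$ for every $x\in\er^n$. Any feasible $x$ satisfies $\mu^T x\ge\Delta>0$, hence is nonzero, and the Cauchy--Schwarz inequality together with $\mu\neq\0$ (assumption (B)) forces $\|x\|\ge \Delta/\|\mu\|$. Combining the two bounds gives $g(x)\ge c\,\Delta/\|\mu\|>0$ for every feasible $x$, so the optimal value is bounded below by this strictly positive constant.

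For the binding constraint I would first observe that the inequality $g(x)\ge c\,\|x\|$ makes $g$ coercive, so its restriction to the closed feasible set has compact sublevel sets and therefore attains a minimum; fix an optimizer $x^*$. Suppose, toward a contradiction, that $\mu^T x^*>\Delta$. Since $x^*$ is feasible it is nonzero, so $g(x^*)>0$. Choosing any scaling factor $\lambda$ with $\Delta/(\mu^T x^*)<\lambda<1$ keeps $\lambda x^*$ feasible, because $\mu^T(\lambda x^*)=\lambda\,\mu^T x^*>\Delta$, while positive homogeneity gives $g(\lambda x^*)=\lambda\, g(x^*)<g(x^*)$, contradicting optimality of $x^*$. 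Hence $\mu^T x^*=\Delta$.

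The one step I expect to require the most care is the passage from pointwise positivity of $g$ to a uniform positive lower bound, which is what the claim ``the optimal value is positive'' really demands; the homogeneity-plus-compact-sphere argument is exactly what supplies it, and it quietly relies on the continuity (equivalently lower semicontinuity, (D4)) of $\cD$ that comes from its finiteness. Everything else is a direct application of (D1), (D2), and the norm bound on feasible $x$ guaranteed by $\mu\neq\0$.
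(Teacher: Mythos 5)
Your proof is correct. The second half (the scaling argument showing the constraint must be binding) is exactly the paper's argument, with your $\lambda$ playing the role of the paper's $\eta$. For the first half you take a more self-contained route: the paper simply invokes Theorem~1 of \cite{rockafellar2006c} for the existence of an optimizer $x^*$, notes that $x^*\neq\0$ because $\0$ is infeasible under assumption (B), and concludes $\cD(\hat R^T x^*)>0$ from (M) and (D1) pointwise at the optimizer; you instead prove existence yourself via coercivity and establish the stronger uniform bound $g(x)\ge c\,\Delta/\|\mu\|$ over the whole feasible set using the minimum of $g$ on the unit sphere. Your argument buys independence from the cited existence theorem (and, amusingly, the sphere-minimum device you use is exactly the one the paper deploys later in the proof of Theorem~\ref{thm:mu} to show boundedness of the solution set), at the cost of being longer than necessary: once existence is granted, pointwise positivity at the nonzero optimizer already gives the claim, so the uniform lower bound is not needed. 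One minor remark: on $\er^n$ a finite-valued convex function is automatically continuous, so your appeal to (D4) for continuity is superfluous.
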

\begin{proof}
By Theorem 1 in \cite{rockafellar2006c}, there is an optimal solution $x^*$. By assumption \ref{ass:2}, $x = \0$ does not satisfy the constraint on the expected return, and so $x^* \ne \0$. Due to assumption \ref{ass:non-constant}, we conclude that $\hat R^T x^*$ is random and hence $\cD(\hat R^T x^*) > 0$. For the second part of the statement, assume that $\mu^T x^* > \Delta$. Therefore, there is $\eta < 1$ such that $\mu^T (\eta x^*) \ge \Delta$ and we have $\cD(\hat R^T (\eta x^*)) = \eta \cD(\hat R^T x^*) < \cD(\hat R^T x^*)$, a contradiction.
\end{proof}

Since $\cD$ is finitely generated, the deviation measure of a centered return of portfolio $x \in \er^n$ can be expressed as a maximum of a finite number of terms:
\begin{equation}\label{eqn:extreme_representation}
\cD(\hat R^T x) = \max_{Q \in \cQ^e} \ee[-\hat R^T x\, Q].
\end{equation}
As the number of extreme risk generators for $\cD$ is finite, they can be enumerated:
$
\cQ^e = \{ Q_1, \ldots, Q_{M'} \}.
$
Define $\tilde D_i = \ee[-\hat R Q_i]$, $i = 1, \ldots, M'$.  It follows from \eqref{eqn:extreme_representation} that the set of $\tilde D_i$'s is sufficient to evaluate $\cD(\hat R^T x)$ for a portfolio $x$:
\begin{equation}\label{eqn:risk_gen_representation1}
\cD(\hat R^T x) = \max_{i =1, \ldots, M'} \tilde D_i^T x.
\end{equation}
It may happen that $\tilde D_i = \tilde D_j$ for some $i \ne j$; for example, $\hat R$ may be constant on a number of elementary events in $\Omega$. It may also happen that $\tilde D_i$ is not an extreme point of $\conv \{ \tilde D_1, \ldots, \tilde D_{M'} \}$, but a subset of $\tilde D_i$'s forms all extreme points of this set \citet[Theorem IV.19.3]{rockafellar1970}. For the convenience of future arguments, we choose only those vectors $\tilde D_i$ that are extreme points.
\begin{definition}\label{def:portfolio_risk_gen}
Extreme points of $\conv \{ \tilde D_1, \ldots, \tilde D_{M'} \}$ are denoted by $D_i$, $i=1, \ldots, M$, and called \emph{portfolio risk generators}. 
\end{definition}
\begin{remark}
Portfolio risk generators are generators (in the sense of \citet[Section 19]{rockafellar1970}) of the polyhedral set $\{ \ee [ -\hat R Q ] \,|\, Q \in \cQ \}$, see the proof of Theorem 19.3 in \cite{rockafellar1970}.
\end{remark}
By Definition \ref{def:portfolio_risk_gen} and  \eqref{eqn:risk_gen_representation1}, it is easy to see that for any portfolio $x$:
\begin{equation}\label{eqn:risk_gen_representation}
\cD(\hat R^T x) = \max_{i =1, \ldots, M} D_i^T x.
\end{equation}
\begin{definition}
Those $D_i$ that realize the maximum in \eqref{eqn:risk_gen_representation} are called \emph{active portfolio risk generators} for the portfolio $x$.
\end{definition}

The following lemma shows that the set of portfolio risk generators is sufficiently rich to span the whole space $\er^n$.
\begin{lemma}\label{lem:full_gen}
$\lin(D_1, \ldots, D_M) = \er^n$.
\end{lemma}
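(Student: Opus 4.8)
The plan is to argue by contradiction, using the support-function reading of \eqref{eqn:risk_gen_representation} together with the non-constancy assumption (M) and the non-negativity property (D1). Equation \eqref{eqn:risk_gen_representation} expresses $\cD(\hat R^T x) = \max_{i=1,\dots,M} D_i^T x$, i.e.\ $\cD(\hat R^T \cdot)$ is the support function of $\conv\{D_1,\dots,D_M\}$. The statement $\lin(D_1,\dots,D_M) = \er^n$ is equivalent, by elementary linear algebra, to the assertion that no nonzero vector is orthogonal to every $D_i$, so it suffices to rule out the existence of such a vector.

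Accordingly, I would suppose for contradiction that there is $v \in \er^n$ with $v \ne \0$ and $D_i^T v = 0$ for all $i = 1,\dots,M$. Substituting $x = v$ into \eqref{eqn:risk_gen_representation} immediately gives
\[
\cD(\hat R^T v) = \max_{i=1,\dots,M} D_i^T v = 0 .
\]

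On the other hand, since $v \ne \0$, assumption (M) guarantees that the portfolio $\sum_{i=1}^n v_i \br^{(i)}$ is a non-constant random variable; because $\hat R^T v = \sum_{i=1}^n v_i \br^{(i)} - \sum_{i=1}^n v_i \ee[\br^{(i)}]$ differs from it only by an additive constant, the centered return $\hat R^T v$ is non-constant as well. Property (D1) then forces $\cD(\hat R^T v) > 0$, contradicting the equality just obtained. Hence no nonzero common orthogonal vector $v$ exists, and therefore $\lin(D_1,\dots,D_M) = \er^n$.

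The only real content of the argument is the link between (M) and the strict positivity in (D1); the rest is the standard duality between a spanning set and the absence of a nonzero vector orthogonal to all its members. I therefore expect no genuine obstacle here, provided the support-function interpretation of \eqref{eqn:risk_gen_representation} is kept in view — the lemma follows almost immediately once $\cD(\hat R^T v)$ is evaluated along the hypothetical orthogonal direction.
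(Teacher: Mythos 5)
Your argument is correct and is essentially identical to the paper's proof: both take a nonzero vector orthogonal to all the $D_i$, deduce $\cD(\hat R^T v)=0$ from \eqref{eqn:risk_gen_representation}, and contradict the strict positivity forced by (M) and (D1). The extra remark that centering $R$ preserves non-constancy is a minor but welcome clarification that the paper leaves implicit.
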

\begin{proof}
Assume the opposite and take any non-zero vector $x$ in the orthogonal complement of $\lin(D_1, \ldots, D_M)$. Then $\cD(\hat R^T x) = 0$. However, $\hat R^T x$ is non-constant by assumption \ref{ass:non-constant}, so its deviation measure should be strictly positive by \ref{ass:D1}. A contradiction.
\end{proof}

The representation \eqref{eqn:risk_gen_representation} of the deviation measure of a portfolio $x$ enables an equivalent formulation of optimization problem \eqref{eqn:MV1} as a linear program:
\begin{equation}\label{eqn:lin}
\begin{aligned}
&\text{minimize } A,\\
&\text{subject to: } A \ge D_i^T x, \quad i=1, \ldots, M,\\
&\phantom{\text{subject to: }} \mu^T x \ge \Delta,\\
&\phantom{\text{subject to: }}(A, x) \in \er \times \er^{n}.
\end{aligned}
\end{equation}
The solution $(A^*, x^*)$ is related to \eqref{eqn:MV1} as follows: $x^*$ is the optimal portfolio while $A^* = \cD(\hat R^T x^*)$. 

\begin{theorem}\label{thm:mu}
The linear program \eqref{eqn:lin} as well as the optimization problem \eqref{eqn:MV1} have the following properties:
\begin{enumerate}
\item The set of optimal portfolios $\X^*$ is a bounded polyhedral subset of $\er^n$. The set of solutions to \eqref{eqn:lin} is of the form $\{A^*\} \times \X^*$ for some $A^* > 0$.
\item If the solution is not unique then $\mu$ is a linear combination of at most $n-1$ portfolio risk generators.
\item If the solution is unique, then the set of active portfolio risk generators spans the whole space $\er^n$, i.e., there are $n$ linearly independent active portfolio risk generators.
\end{enumerate}
\end{theorem}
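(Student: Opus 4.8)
The plan is to work directly with the linear program \eqref{eqn:lin} and to reduce all three assertions to a single dimension count for its optimal face.

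First I would describe the optimal set explicitly. Writing $A^*$ for the optimal value, any feasible $(A,x)$ with $A\le A^*$ is automatically optimal, and at the optimum $A=\max_i D_i^T x=\cD(\hat R^T x)$; hence the solution set of \eqref{eqn:lin} is $\{A^*\}\times\X^*$ with
$$
\X^*=\{x\in\er^n:\mu^T x\ge\Delta,\ D_i^T x\le A^*,\ i=1,\ldots,M\},
$$
a polyhedron, and $A^*>0$ by Lemma \ref{lem:positive}. By the same lemma every point of $\X^*$ satisfies $\mu^T x=\Delta$. Boundedness is the one genuinely analytic point of part 1: if $d$ lies in the recession cone of $\X^*$ then $D_i^T d\le 0$ for all $i$, so $\cD(\hat R^T d)=\max_i D_i^T d\le 0$; since $\cD\ge 0$ this forces $\cD(\hat R^T d)=0$, and assumptions \ref{ass:non-constant} and \ref{ass:D1} then give $d=\0$ (this is exactly the argument of Lemma \ref{lem:full_gen}). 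Thus $\X^*$ is a bounded polyhedron, which proves part 1.

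The key structural fact, used in both remaining parts, is that $\mu$ lies in the cone generated by the risk generators active on the \emph{whole} optimal face. To obtain it I would pick $\bar x\in\reint(\X^*)$ and set $J=\{i:D_i^T\bar x=A^*\}$; a standard relative-interior argument shows $J$ is exactly the set of generators active at every point of $\X^*$. Since \eqref{eqn:lin} is feasible and bounded (an optimal solution exists by Lemma \ref{lem:positive}), strong duality applies, and the KKT/complementary-slackness conditions at $\bar x$, with multipliers $y_i\ge 0$ on the generator constraints and $z\ge 0$ on the return constraint, give $\sum_i y_i=1$ and $z\mu=\sum_{i\in J}y_i D_i$. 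The crucial observation is that $z>0$: if $z=0$ then $\sum_{i\in J}y_i D_i=\0$ with $\sum y_i=1$, and pairing with $\bar x$ yields $0=\sum_{i\in J}y_i\,(D_i^T\bar x)=A^*$, contradicting $A^*>0$. Hence $\mu=z^{-1}\sum_{i\in J}y_i D_i$, so in particular $\mu\in\lin(\{D_i:i\in J\})$.

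Finally I would convert the dimension of $\X^*$ into a rank statement. The affine hull of $\X^*$ is $\{x:\mu^T x=\Delta,\ D_i^T x=A^*,\ i\in J\}$, whence $\dim\X^*=n-\dim\lin(\{\mu\}\cup\{D_i:i\in J\})$. Because $\mu\in\lin(\{D_i:i\in J\})$, adjoining $\mu$ does not change the span, so $\dim\X^*=n-\dim\lin(\{D_i:i\in J\})$. If the solution is unique then $\dim\X^*=0$, so $\{D_i:i\in J\}$ spans $\er^n$ and contains $n$ linearly independent generators, all active at the unique optimum — this is part 3. If the solution is not unique then $\dim\X^*\ge 1$, so $\dim\lin(\{D_i:i\in J\})\le n-1$; extracting a basis of this span expresses $\mu$ as a linear combination of at most $n-1$ portfolio risk generators — this is part 2. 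I expect the main obstacles to be the two places where the finitely generated/non-constant structure is essential: the boundedness of $\X^*$ (where assumptions \ref{ass:non-constant} and \ref{ass:D1} replace the usual coercivity) and the proof that $z>0$, which is precisely what upgrades the trivial Carath\'eodory bound $n$ to the sharp bound $n-1$ under non-uniqueness; evaluating the multipliers at a relative-interior point of $\X^*$, rather than at an arbitrary optimal vertex, is what lets me take them supported on $J$.
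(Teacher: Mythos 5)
Your proof is correct, and it rests on the same machinery as the paper's (the LP reformulation \eqref{eqn:lin}, its dual, and complementary slackness), but it is organized differently in a way that is worth noting. The paper proves part 3 by invoking the basic-feasible-solution characterization of extreme points (Bertsimas--Tsitsiklis, Theorem 2.2): a unique solution is an extreme point of the feasible set in $\er^{n+1}$, which forces $n$ linearly independent active generator constraints alongside the active return constraint. It then proves part 2 separately, by writing the dual \eqref{eqn:dual}, using strong duality ($\Delta q = A^*>0$, hence $q>0$) to get $\mu = q^{-1}\sum_j p_{i_j}D_{i_j}$, and arguing that at a non-extreme optimal point the active generators span a space of dimension at most $n-1$. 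You instead derive both parts from the single identity $\dim \X^* = n - \dim\lin(\{D_i : i\in J\})$, where $J$ is the active set at a relative-interior point of $\X^*$, after first establishing $\mu\in\lin(\{D_i:i\in J\})$ via KKT at that same point; your verification that the multiplier $z$ on the return constraint is positive (pairing $\sum_{i\in J}y_iD_i=\0$ against $\bar x$ to produce $A^*=0$) replaces the paper's strong-duality computation and is equally valid. The payoff of your organization is that parts 2 and 3 become the two sides of one dimension count, and the role of the relative-interior point (making the dual support coincide with the constraints defining the affine hull) is made explicit rather than implicit; the paper's route is shorter where it can cite standard LP results wholesale. Your boundedness argument via the recession cone is also just a repackaging of the paper's coercivity argument (minimum of $\cD(\hat R^T\cdot)$ on the unit sphere), both resting on assumptions \ref{ass:non-constant} and \ref{ass:D1}. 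I see no gaps.
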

\begin{proof}
\eqref{eqn:lin} is a linear program, so the set of solutions is polyhedral. The mapping $x \mapsto \cD(\hat R^T x)$ is convex, hence also continuous. Denote by $d$ its minimum on the sphere $\{ x \in \er^n\,|\, \|x\| = 1 \}$. This minimum is strictly positive due to assumptions \ref{ass:non-constant} and \ref{ass:D1}. Employing further assumption \ref{ass:D2} gives that $\{x \in \er^n\,|\, \cD(\hat R^T x) \le A \}$ is bounded for any $A > 0$; indeed, it is contained in the ball with radius $A/d$. Hence, the set of solutions $\X'$ to \eqref{eqn:lin} is a bounded polyhedral set. It is expressed by convex combinations of its extreme points at which the objective function is optimal. In each such extreme point the coordinate $A$ is identical, so $\X' = \{A^*\} \times \X^*$ for some $A^* > 0$; the positivity of $A^*$ follows from Lemma \ref{lem:positive}.

If $\X'$ is a single point, then it is an extreme point of the feasible set. Since the constraint $\mu^T x \ge \Delta$ is active (see Lemma \ref{lem:positive}), \cite[Theorem 2.2]{Bertsimas1997} implies that there are $n$ indices $i_1, \ldots, i_n$ such that $A = D_{i_j}^T x$, $j=1, \ldots, M$, and vectors $(D_{i_j})_{j=1}^n$ are linearly independent, hence generate $\mathbb{R}^n$. 

The proof of assertion 2 uses the dual of problem \eqref{eqn:lin}:
\begin{equation}\label{eqn:dual} 
\begin{aligned}
&\text{maximize } q \Delta,\\
&\text{subject to: }  \sum_{i=1}^M p_i  D_i - q\mu = 0,\qquad \sum_{i=1}^M p_i =1,\\
&\phantom{\text{subject to: }} q \ge 0,\ p_i \ge 0, \quad i=1,\dots, M.
\end{aligned}
\end{equation}
By the strong duality, $q\Delta = A^*$ and we know $A^* > 0$, hence $q > 0$. \citet[Theorem 4.5]{Bertsimas1997} implies that the dual variables corresponding to inactive constraints are zero. Denote by $i_1, \ldots, i_k$ the active constraints involving portfolio risk generators. Then  the first constraint in the above dual problem \eqref{eqn:dual} reads:
\begin{equation}\label{mu_active}
  \mu =  \frac{1}{q} \sum_{j=1}^k p_{i_j} D_{i_j}.
\end{equation}

Assume now that the solution is not unique, i.e., $\X'$ contains at least two extreme points and therefore a line connecting them. Fix an internal point of that line $(A^*, x^*)$. 
Since $(A^*, x^*)$ is not an extreme point of $\X'$, the linear space spanned by active portfolio risk generators $D_{i_j}$, $j= 1,\dots, k$, has dimension not larger than $n-1$ (there is at least one portfolio risk generator which is active at an extreme point of $\X'$ and does not belong to $\text{lin} \{ D_{i_1}, \ldots, D_{i_k} \}$). This proves assertion 2 of the theorem.
\end{proof}

\begin{corollary}\label{cor:unique}
There is a finite number of hyperplanes (of dimensions from $1$ to $n-1$) such that:  $\mu$ belongs to one of them if and only if a solution to \eqref{eqn:MV1} is not unique. Therefore, the set of $\mu$ for which the portfolio optimization problem has a unique solution has a full Lebesgue measure.
\end{corollary}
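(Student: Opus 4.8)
The plan is to read the statement almost directly off Theorem~\ref{thm:mu}. Its assertion~2 already supplies the non-trivial direction: whenever the solution to \eqref{eqn:MV1} is not unique, the vector $\mu$ is a linear combination of at most $n-1$ portfolio risk generators. What remains for the measure statement is to package this into a finite family of subspaces and run a one-line measure argument; the delicate part is upgrading this inclusion to the claimed equivalence.

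First I would freeze the finitely many portfolio risk generators $D_1,\dots,D_M$ of Definition~\ref{def:portfolio_risk_gen}, which depend on $\hat R$ and $\cD$ but not on $\mu$. For each subset $T\subseteq\{1,\dots,M\}$ with $1\le|T|\le n-1$ set $H_T=\lin\{D_i:i\in T\}$; a span of at most $n-1$ vectors has dimension at most $n-1$, so each $H_T$ is a proper subspace of $\er^n$ of dimension between $1$ and $n-1$, and there are only finitely many of them. By Theorem~\ref{thm:mu}(2), non-uniqueness forces $\mu\in\bigcup_T H_T$, that is, $\{\mu:\text{solution not unique}\}\subseteq\bigcup_T H_T$. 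Each proper subspace is $n$-dimensionally Lebesgue-null, a finite union of null sets is null, and hence its complement---which is contained in $\{\mu:\text{solution unique}\}$---has full Lebesgue measure. This already proves the second sentence of the corollary, which is all the applications require.

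For the equivalence itself I would pass to the geometry of the polytope $P=\conv\{D_1,\dots,D_M\}$. By \eqref{eqn:risk_gen_representation}, $\cD(\hat R^T x)=\max_i D_i^T x$, and this is strictly positive for every $x\ne\0$ by \ref{ass:non-constant} and \ref{ass:D1}; equivalently, no $x\ne\0$ puts all $D_i$ weakly on one side of a hyperplane through the origin, which says precisely that $\0\in\reint P$ (here $\reint P=\operatorname{int}P$, since $P$ is full-dimensional by Lemma~\ref{lem:full_gen}). Consequently, for every $\mu\ne\0$ the ray $\{q\mu:q\ge0\}$ meets $\partial P$ in a single point $q^*\mu$, and by strong duality in \eqref{eqn:dual} this $q^*$ is the dual optimum. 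Let $F$ be the minimal face of $P$ containing $q^*\mu$. Because $F$ is a proper face and $A^*>0$, the supporting hyperplane carrying $F$ misses the origin, so $\0\notin\mathrm{aff}(F)$ and therefore $\dim\lin F=\dim F+1$. At a unique optimizer the active portfolio risk generators are exactly the vertices of $F$ and, by Theorem~\ref{thm:mu}(3), span $\er^n$; thus $\dim\lin F=n$, forcing $\dim F=n-1$. This yields one half of the dichotomy: the solution can be unique only when $q^*\mu$ lies in the relative interior of a facet of $P$, equivalently non-uniqueness holds whenever $q^*\mu$ lands on a face of dimension at most $n-2$.

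I expect the genuine obstacle to be the reverse implication---the converse of Theorem~\ref{thm:mu}(3), that hitting a low-dimensional face really does produce multiple optima. I would prove it through the normal-cone description of the optimizers: the admissible directions $x^*$ fill out the normal cone of $P$ at $F$, of dimension $n-\dim F$, so after imposing the single active constraint $\mu^T x^*=\Delta$ the optimal set still has dimension at least $n-\dim F-1\ge1$ whenever $\dim F\le n-2$. A final subtlety worth flagging is that the exact non-uniqueness set is the finite union of the cones $\{tD:t>0,\ D\in F\}$ over the faces $F$ with $\dim F\le n-2$; these are genuine cones, not full subspaces, so the ``hyperplanes'' of the statement should be read as the carrying subspaces $\lin F$. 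This overstatement is harmless for the measure conclusion, since each such cone still sits inside the null set $\lin F$.
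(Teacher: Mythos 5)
Your first paragraph together with the null-set argument is essentially the paper's entire proof: the authors likewise form the spans of all subsets of at most $n-1$ portfolio risk generators, quote Theorem \ref{thm:mu}(2) for the inclusion of the non-uniqueness set in this finite union of proper subspaces, and finish by noting that a finite union of Lebesgue-null sets is null. Everything beyond that is your own addition, and it addresses a genuine looseness: the published proof merely asserts that non-uniqueness ``coincides with'' membership in one of these subspaces, so the converse direction of the stated equivalence is never actually proved there. Your polytope argument is correct and supplies it in sharpened form --- $\0$ lies in the interior of $\conv\{D_1,\dots,D_M\}$, the dual optimum $q^*$ is the exit parameter of the ray through $\mu$, and the optimal set equals the intersection of the normal cone at the minimal face $F$ containing $q^*\mu$ with the hyperplane $\{x:\mu^Tx=\Delta\}$, hence has dimension $n-\dim F-1$; uniqueness is therefore equivalent to $q^*\mu$ lying in the relative interior of a facet. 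Your closing caveat is also well taken: the exact non-uniqueness locus is a finite union of cones over faces of dimension at most $n-2$, not of full linear subspaces, so the corollary's ``if and only if'' is literally an overstatement that becomes correct only after enlarging each cone to its carrying subspace $\lin F$ --- harmless for the measure conclusion, which is what the rest of the paper uses. The one imprecision worth fixing is the claim that at a unique optimizer the active generators are ``exactly the vertices of $F$'': a priori they are the vertices of the exposed face $\argmax_{y}\, y^Tx^*$ over the polytope, which contains $F$ but could be strictly larger; that the two coincide under uniqueness is itself a consequence of your normal-cone dimension count, so that step should be ordered first.
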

\begin{proof}
By Theorem \ref{thm:mu}, non-uniqueness of solutions coincides with $\mu$ being a linear combination of at most $n-1$ portfolio risk generators, i.e., belongs to a linear space spanned by at most $n-1$ vectors in $\er^n$. This is a hyperplane of dimension at most $n-1$, so it has a Lebesgue measure $0$. There is a finite number of ways to choose up to $n-1$ vectors from the set of $M$ vectors, so the number of such hyperplanes is finite. A finite sum of sets of Lebesgue measure zero has the measure zero. Its complement has therefore a full measure.
\end{proof}

A practical consequence of the above theorem and corollary is that there is a unique optimal portfolio in \eqref{eqn:MV1} unless $\mu$ is specially chosen to match the distribution of returns $\hat R$ and the risk measure. In the following section we will show that the uniqueness of a solution, which implies multiple active portfolio risk generators, leads to problems with optimal cooperative investment. We will also show that there are natural settings when $\mu$ happens to be on one of the hyperplanes mentioned in the corollary.

Consider now a portfolio optimization problem with no shortsales of risky assets. This corresponds to the linear program \eqref{eqn:lin} with additional constraints $x_i \ge 0$, $i=1, \ldots, n$. The following lemma shows that the non-uniqueness of portfolio risk generators holds here as well. 
\begin{lemma}
If the solution $x^*$ to the portfolio optimization problem with no shortsales of risky assets is unique, then there are at least $k$ active portfolio risk generators, where $k$ is the number of non-zero coordinates of $x^*$.
\end{lemma}
\begin{proof}
As in Lemma \ref{lem:positive}, we show that $\mu^T x^* = \Delta$. The assertion follows from the fact that $x^*$ is an extreme point of the feasible set.
\end{proof}

\section{Cooperative investment}\label{sec:7}

\subsection{Theoretical framework}

The general problem of cooperative investment can be formulated as follows, see \cite{grechuk2015synergy}. Let ${\cal F}\subset {\cal L}^2(\Omega)$ be a feasible set, representing rates of return from feasible investment opportunities on the market without a riskless asset:
$$
{\cal F}=\Big\{X\,\Big|\,X=\sum_{i=1}^n \br^{(i)} x_i,\;\; \sum_{i=1}^n x_i = 1\Big\}.
$$
An individual portfolio optimization problem for agent $i$, $i=1,\dots, m$, is 
\begin{equation}\label{eq:portf_ind}
\max_{X\in{\cal F}}\, U_i(X),
\end{equation}
where $U_i:{\cal L}^2(\Omega) \to [-\infty, \infty)$ is the utility function of agent $i$. 
If the unit of capital is invested, then rate of return $X \in {\cal F}$ can also be interpreted as a monetary profit from the investment.
Instead of investing individually, $m$ agents can invest their $m$ units of capital to buy a joint portfolio $X\in m{\cal F}:= \{ mX\,| X \in \mathcal{F}\}$ and distribute it so that agent $i$ receives a share $Y_i$ with $\sum Y_i = X$. An allocation $\alloc{Y}=(Y_1, \dots Y_m)$ is called \emph{feasible} if $\sum Y_i \in m{\cal F}$, and Pareto optimal if there is no feasible allocation $\alloc{Z}=(Z_1, \dots Z_m)$ such that $U_i(Y_i)\leq U_i(Z_i)$ with at least one inequality being strict.

A utility function $U$ is called \emph{cash-invariant} if $U(X+C)=U(X)+C$ for all $X\in{\cal L}^2(\Omega)$ and $C\in{\mathbb R}$. Proposition 2 in \cite{grechuk2015synergy} implies that if all $U_i$, $i=1,\dots,m$, are cash-invariant, and $\alloc{Y}=(Y_1, \dots Y_m)$ is Pareto optimal, then $X^*=\sum Y_i$ solves the optimization problem
\begin{equation}\label{eq:portf_star}
\sup_{X\in m{\cal F}}\,U^*(X),
\end{equation}
where 
\begin{equation}\label{eq:conv_def}
U^*(X)\equiv\sup_{\alloc{Z}\in{\cal A}(X)}\sum_{i=1}^m U_i(Z_i)
\end{equation}
with ${\cal A}(X)=\big\{\alloc{Z} = (Z_1, \ldots, Z_m):\ \sum_{i=1}^m Z_i=X, \ Z_i \in \L^2(\Omega)\big\}$. Furthermore, if $\alloc{Y}=(Y_1, \dots Y_m)$ is \emph{any} Pareto optimal allocation, then \emph{all} Pareto optimal allocations are given by 
\begin{equation}\label{eq:plusc}
(Y_1+C_1,\ldots,Y_m+C_m),
\end{equation} 
where $C_1,\dots C_m$ are constants with $\sum_{i=1}^m C_i=0$. Hence, the coalition should (i) solve the portfolio optimization problem \eqref{eq:portf_star} to find an optimal portfolio $X^*$ for the whole group; (ii) find \emph{any} Pareto optimal way $\alloc{Y}=(Y_1, \dots Y_m)$ to distribute $X$ among group members, and finally (iii) agree on constants $C_1,\dots C_m$ in \eqref{eq:plusc} to select a \emph{specific} Pareto-optimal allocation among the ones available.

We consider investors employing the following utility functions:
\begin{equation}\label{eqn:Ui}
U_i(X)=\ee[X]-\cD_i(X),
\end{equation}
for some deviation measures $\cD_i$, $i=1,\dots,m$. These utility functions are cash-invariant and the above theory applies. $U^*$ in \eqref{eq:conv_def} is given by $U^*(X)=\ee[X]-\cD^*(X)$, where
\begin{equation}\label{eq:devcoal}
\cD^*(X)\equiv\inf_{\alloc{Z}\in{\cal A}(X)}\sum_{i=1}^m \cD_i(Z_i).
\end{equation}
\begin{remark}\label{rem:3.1}
Commonly, an investor's optimization criterion is given by
\[
U_i(X)=\ee[X]-\gamma_i\cD_i(X),
\]
where $\gamma_i > 0$ is the investor's risk aversion. However, $\gamma_i \cD_i$ is a deviation measure whenever $\cD_i$ is, so the expression \eqref{eqn:Ui} covers this example.
\end{remark}

In this model, a possible approach to (iii) is to select constants $C_i$ in \eqref{eq:plusc} such that
\begin{equation}\label{eq:fair}
E[Q^*(Y_1 + C_1)] = \dots = E[Q^*(Y_m + C_m)]
\end{equation}
where $Q^*$ is the extreme risk identifier in portfolio optimization problem \eqref{eq:portf_star} with $U^*(X)=\ee[X]-\cD^*(X)$.
The intuition is that elements $Q$ of risk envelope represents probability scenarios, $Q^*$ represents the ``critical'' worst-case scenario for the coalition, and \eqref{eq:fair} states that the investors should receive the same profit under the critical scenario. See  \citet[Section 3]{grechuk2015synergy} for further justification of \eqref{eq:fair} in the model with risk-free asset.
Because a concave function is differentiable almost everywhere, one may expect that $\partial U^*(X^*)$ is ``typically'' a singleton, in which case the extreme risk identifier $Q^*$ is unique, and this approach leads to the unique selection of a ``fair'' Pareto optimal allocation in \eqref{eq:plusc}. Below we show, however, that this intuition may be wrong.

\begin{lemma}\label{lem:devcoal}
Let $\cD_i$ be deviation measures with risk envelopes $\cQ_i$, $i=1,\dots,m$. Then $\cD^*$ is a deviation measure with the risk envelope $\cQ^*=\cQ_1 \cap \dots \cap \cQ_m$. In particular, if all $\cD_i$ are finitely generated, then so is $\cD^*$. 
\end{lemma}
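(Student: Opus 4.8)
The plan is to recognise $\cD^*$ as the \emph{infimal convolution} of $\cD_1, \dots, \cD_m$ and to exploit the fact that, up to an expectation term, each $\cD_i$ is the support function of the shifted envelope $C_i := \{1 - Q : Q \in \cQ_i\}$; conjugate duality then converts the infimal convolution into an intersection of envelopes. Concretely, since $\sigma_{C_i}(X) = \sup_{Q_i \in \cQ_i}\ee[X(1-Q_i)] = \ee X + \sup_{Q_i \in \cQ_i}\ee[-XQ_i]$, representation \eqref{eqn:devenvelopes} says exactly that $\cD_i = \sigma_{C_i} = \delta_{C_i}^{*}$, the conjugate of the indicator of the closed convex set $C_i$.

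First I would rewrite the objective. As $\sum_{i=1}^m Z_i = X$ forces $\sum_{i=1}^m \ee Z_i = \ee X$, substituting \eqref{eqn:devenvelopes} into the definition of $\cD^*$ gives
\begin{equation*}
\cD^*(X) = \ee X + \inf_{\mathbf Z \in {\cal A}(X)} \; \sum_{i=1}^m \sup_{Q_i \in \cQ_i} \ee[-Z_i Q_i],
\end{equation*}
so it suffices to show the second term equals $\sup_{Q \in \cQ^*}\ee[-XQ]$ with $\cQ^* = \cQ_1 \cap \dots \cap \cQ_m$. The inequality ``$\ge$'' is elementary: for a fixed $Q \in \cQ^*$ and any admissible $\mathbf Z$, the inner suprema are each at least the value at $Q$ (since $Q$ lies in every $\cQ_i$), whence $\sum_i \sup_{Q_i \in \cQ_i}\ee[-Z_iQ_i] \ge \sum_i \ee[-Z_iQ] = \ee[-XQ]$; taking the infimum over $\mathbf Z$ and then the supremum over $Q \in \cQ^*$ yields the bound.

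The reverse inequality ``$\le$'' is the crux, and this is where I expect the real work. Here I would invoke the standard identity that the conjugate of an infimal convolution is the sum of the conjugates: $(\cD^*)^* = \sum_{i=1}^m \cD_i^{*} = \sum_{i=1}^m \delta_{C_i}^{**} = \sum_{i=1}^m \delta_{C_i} = \delta_{\cap_i C_i}$, where the biconjugate reduction $\delta_{C_i}^{**}=\delta_{C_i}$ uses closedness and convexity of $C_i$. Observing that $\cap_i C_i = \{1 - Q : Q \in \cQ^*\}$ and conjugating once more gives $(\cD^*)^{**} = \sigma_{\cap_i C_i} = \ee X + \sup_{Q \in \cQ^*}\ee[-XQ]$. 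To pass from $(\cD^*)^{**}$ back to $\cD^*$ I need $\cD^*$ to be lower semicontinuous (equivalently, closed); this is automatic once $\cD^*$ is finite-valued, which holds whenever at least one $\cD_i$ is finite, because then $\cD^*(X) \le \cD_i(X) < \infty$ via the decomposition that puts all of $X$ on coordinate $i$, and a finite convex function on $\er^N$ is continuous.

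It remains to confirm that $\cQ^*$ is a bona fide risk envelope, so that $\cD^*$ is a genuine deviation measure. Positive homogeneity (D2) and subadditivity (D3) transfer to the infimal convolution directly (scale $Z_i \mapsto \lambda Z_i$; add near-optimal decompositions of $X$ and $Y$), and lower semicontinuity (D4) is the closedness point above. For (D1), $\cD^*$ vanishes on constants through the same single-coordinate decomposition, while its strict positivity on non-constant $X$ follows because $1$ lies in the relative interior of each $\cQ_i$ within the hyperplane $\{Q : \ee Q = 1\}$ — a consequence of (D1) for $\cD_i$ — so that $1 \in \reint \cQ^*$ as well, forcing $\sup_{Q \in \cQ^*}\ee[X(1-Q)]>0$. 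Finally, for finite generation: if every $\cD_i$ is finitely generated then each $\cQ_i$ is a polytope, and a finite intersection of polytopes is again a polytope with finitely many extreme points; hence $\cQ^{*e}$ is finite and $\cD^*$ is finitely generated. The main obstacle throughout is the duality swap in the ``$\le$'' direction, which rests squarely on the closedness (D4) of $\cD^*$ and on the intersection of relative interiors being non-empty.
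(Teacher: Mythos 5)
Your proof is correct and follows essentially the same route as the paper: both identify $\cD^*$ as the infimal convolution of the support functions of the (shifted) envelopes and use the fact that the constant $1$ lies in the relative interior of every $\cQ_i$ within the hyperplane $\{Q \,|\, \ee Q = 1\}$ to turn that infimal convolution into the support function of $\cQ_1 \cap \dots \cap \cQ_m$ --- the paper simply cites \citet[Corollary 16.4.1]{rockafellar1970} where you carry out the conjugation and biconjugation by hand. The only point to flag is that your lower-semicontinuity step requires at least one $\cD_i$ to be finite (so that $\cD^*$ is finite, hence continuous on $\er^N$), a hypothesis the cited corollary does not need; this is harmless in the paper's application, where all $\cD_i$ are finitely generated and therefore finite.
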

\begin{proof}
Proposition 3 in \cite{rockafellar2006b} implies that $\cQ_1, \dots, \cQ_m$ are closed, convex subsets of the closed hyperplane $H = \{Q | EQ=1 \}$ in ${\cal L}^2(\Omega)$ such that constant $1$ is in their quasi-interior relative to $H$. Because $\cQ_1, \dots, \cQ_m$ have a common point in their relative interiors, \citet[Corollary 16.4.1]{rockafellar1970} implies that $\cD^*$ can be represented in the form \eqref{eqn:devenvelopes} with $\cQ^*=\cQ_1 \cap \dots \cap \cQ_m$. Because $\cQ^*$ is also closed, convex subset of $H$ with constant $1$ in quasi-interior relative to $H$, this implies that $\cD^*$ is a deviation measure. Because intersection of polygons is a polygon, $\cD^*$ is finitely generated provided that all $\cD_i$ are.
\end{proof}
  
 \begin{theorem}\label{thm:coop}
Assume that investors' utility functions are of the form $U_i (X) = \ee[X] - \cD_i(X)$ with deviation measures $\cD_i$ finitely generated and none of the portfolio risk generators $D^*_i$ for $\cD^*$ is equal to $\mu = \ee[R]$ or $(D^*_i-\mu)$ is parallel to $\mathbf{1} := (1, \ldots, 1)^T$. Then any solution $X^* = R^T x^*$ to \eqref{eq:portf_star} has at least two extreme risk identifiers. 
 \end{theorem}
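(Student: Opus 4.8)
The plan is to phrase the statement entirely in terms of the portfolio risk generators of $\cD^*$ — which is finitely generated by Lemma~\ref{lem:devcoal}, so generators $D_1,\dots,D_M$ and the representation \eqref{eqn:risk_gen_representation} are available for $\cD^*$ — and to reduce everything to showing that the cooperative optimum $x^*$ has \emph{at least two active} portfolio risk generators. First I would record the bridge to risk identifiers. By Definition~\ref{def:portfolio_risk_gen} every portfolio risk generator satisfies $D_i=\ee[-\hat R\,Q]$ for some extreme point $Q$ of $\cQ^*$ (an extreme risk generator). If $D_i$ is moreover active at $x^*$, then $\ee[-\hat R^T x^*\,Q]=D_i^T x^*=\cD^*(\hat R^T x^*)$, so $Q$ attains the supremum in \eqref{eqn:devenvelopes} and hence is an extreme risk identifier of $X^*=R^T x^*$ (using $\cD^*(R^T x^*)=\cD^*(\hat R^T x^*)$ by location invariance). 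Since distinct active generators $D_i\neq D_j$ yield distinct vectors $\ee[-\hat R\,Q]$, they correspond to distinct extreme risk identifiers. Thus producing two active generators produces two extreme risk identifiers, and it suffices to establish the former.

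Next I would rewrite problem \eqref{eq:portf_star}. With $U^*(X)=\ee X-\cD^*(X)$ and \eqref{eqn:risk_gen_representation}, it becomes the concave, piecewise-linear program $\max\{\,\mu^T x-\max_{i} D_i^T x : \mathbf 1^T x=1\,\}$ over the budget hyperplane, equivalently: maximize $\mu^T x-A$ subject to $A\ge D_i^T x$ for all $i$ and $\mathbf 1^T x=1$. Suppose, for contradiction, that a single generator $D_k$ is active at $x^*$. Then on a neighbourhood of $x^*$ inside the budget hyperplane the objective coincides with the affine map $x\mapsto(\mu-D_k)^T x$, whose restriction to the hyperplane can attain a relative-interior maximum only if its gradient is normal to the hyperplane, i.e. $\mu-D_k=\gamma\mathbf 1$ for some $\gamma\in\er$. (This is precisely the KKT stationarity condition $\mu=p_k D_k+\gamma\mathbf 1$ with the convex multiplier $p_k=1$ for the linear program above.) Hence $D_k=\mu-\gamma\mathbf 1$.

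It then remains to eliminate this one-parameter family. If $\gamma=0$ we get $D_k=\mu$, contradicting the assumption that no portfolio risk generator equals $\mu$; and a value $\gamma\neq 0$ places $D_k$ on the line through $\mu$ in the direction $\mathbf 1$, which I would rule out using the assumption that no portfolio risk generator is parallel to $\mathbf 1$. Having excluded a single active generator, at least two generators attain the maximum in \eqref{eqn:risk_gen_representation} at $x^*$, and the correspondence of the first paragraph delivers at least two extreme risk identifiers.

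The step I expect to be the main obstacle is exactly this last exclusion. The budget constraint $\mathbf 1^T x=1$ blocks the naive conclusion $D_k=\mu$ and forces $D_k$ only onto the affine line $\mu+\er\mathbf 1$; geometrically this is the degenerate situation in which the piecewise-linear objective is \emph{flat} along a facet over the budget hyperplane (so the optimal set has nonempty relative interior on which one generator is active) rather than having a genuine kink. Controlling this flat case is where the two nondegeneracy assumptions on the generators do the real work, and the delicate point is to verify that they indeed forbid the whole line $\mu+\er\mathbf 1$ — not merely its endpoint $\mu$ and its asymptotic direction $\mathbf 1$ — so that every optimum of \eqref{eq:portf_star} sits at a kink with two or more active generators.
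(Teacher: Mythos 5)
Your argument follows essentially the same route as the paper's. The paper also reformulates \eqref{eq:portf_star} as the linear program $\min A$ subject to $x^T\hat D^*_i \le A$ and $x^T\mathbf{1}=1$, with $\hat D^*_i := D^*_i - E[R] = D^*_i-\mu$, and reads off from the dual constraint $\sum_k p_k \hat D^*_k = q\mathbf{1}$, $\sum_k p_k=1$, that a single positive multiplier $p_k=1$ would force $\hat D^*_k = q\mathbf{1}$; your primal stationarity condition $\mu - D_k = \gamma\mathbf{1}$ on the budget hyperplane is exactly the same identity. Your explicit bridge from two distinct active portfolio risk generators to two distinct extreme risk identifiers of $X^*$ (via location invariance and injectivity of distinct $D_i$'s as images $\ee[-\hat R\,Q]$) is a step the paper leaves implicit, and it is correct.

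The obstacle you flag at the end is real as a criticism of the theorem's \emph{wording}, but it is not an open gap in the intended argument: the paper's proof invokes the hypotheses for the \emph{centered} generators, namely ``none of the $\hat D^*_j$ is parallel to $\mathbf{1}$'' (disposing of $q\ne 0$) and ``none of the $\hat D^*_j$ is zero'' (disposing of $q=0$); that is, the condition actually used is $D^*_j - \mu \notin \er\mathbf{1}$, which excludes the whole line $\mu+\er\mathbf{1}$ at once. Under that reading your exclusion step closes immediately and your proof is complete. You are right that the literal statement ($D^*_j$ not parallel to $\mathbf{1}$ and $D^*_j \ne \mu$) leaves the case $D_k = \mu+\gamma\mathbf{1}$ with $\gamma\ne 0$ and $\mu\not\parallel\mathbf{1}$ uncovered; that is a discrepancy between the theorem's statement and its proof in the paper, not a defect of your approach, and the clean fix is to restate the hypothesis as a condition on $D^*_j-\mu$.
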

\begin{proof}
We follow ideas from the proof of Theorem \ref{thm:mu}. Denote by $(D^*_i)_{i=1}^M$ the portfolio risk generators for the deviation measure $\cD^*$ and let ${\hat D}^*_i =  D^*_i - \ee[R]$. Then \eqref{eq:portf_star} is equivalent to the 
following linear problem
\begin{equation} 
\begin{aligned}
&\text{minimize } A,\\
&\text{subject to: } A \ge x^T{\hat D}^*_i, \quad i=1, \ldots, M,\\
&\phantom{\text{subject to: }} x^T \mathbf{1} = 1, \qquad (A, x) \in \er \times \er^{n}.
\end{aligned}
\end{equation}
Since $x^*$ is a solution to this program (not necessarily unique), its dual also has a solution \citep[Theorem 4.4]{Bertsimas1997}:
\begin{equation} 
\begin{aligned}
&\text{maximize } q,\\
&\text{subject to: }  \sum_{k=1}^M p_k {\hat D}^*_k - q\mathbf{1} = 0,\qquad \sum_{k=1}^M p_k =1,\\
&\phantom{\text{subject to: }} p_k \ge 0,\ k=1,\dots, M,\quad q \in \mathbb{R}.  
\end{aligned}
\end{equation}
If the optimal solution $q \ne 0$, then the middle equation together with the assumption that none of $\hat D^*_j$'s is parallel to $\mathbf{1}$ implies that there must be at least two $p_k$'s strictly positive. \citet[Theorem 4.5]{Bertsimas1997} states that the corresponding constraints in the primal problem are active, i.e., their respective portfolio risk generators are active for $X^*$. When $q = 0$, the assumption that none of $\hat D^*_j$'s is zero imply again that at least two $p_k$'s must be non-zero.
\end{proof}

Theorem \ref{thm:coop} implies that there are at least two linearly independent active portfolio risk generators and there are multiple fair Pareto-optimal solutions to the cooperative investment problem.

\begin{remark}
The portfolio optimization problem \eqref{eq:portf_star} with utility functions \eqref{eqn:Ui} may not have a solution, i.e., an optimal value may be attained asymptotically on a diverging sequence of portfolios. This happens, for example, when there is $x$ such that $x^T \mathbf{1} = 0$ and $\mu^T x - \cD^*(x^T R) > 0$, i.e., when the projection of $\mu$ on $\mathbf{1}^\bot := \{ y \in \mathbb{R}^d:\ y^T \mathbf{1} = 0 \}$ is not contained in the convex envelope of projections  of portfolio risk generators $(D^*_i)_{i=1}^M$ on $\mathbf{1}^\bot$.
\end{remark}

\begin{remark}
The issue with non-existence of solution to the optimization problem of this section 
\begin{equation}\label{eqn:gamma_cD}
\sup_{x:\ x^T \mathbf{1} = 1} x^T \mu - \cD(x^T R) 
\end{equation}
extends to optimization with the risk measured by a coherent risk measure $\rho$ (another criterion popular in the literature)
\[
\sup_{x:\ x^T \mathbf{1} = 1} x^T \mu - \gamma \rho(x^T R)
\]
with the risk aversion $\gamma > 0$. Indeed, using that $\rho(X) = \cD(X) - E[X]$ for some deviation measure $\cD$, the above problem is equivalent to
\[
\sup_{x:\ x^T \mathbf{1} = 1} x^T \mu - \gamma^* \cD(x^T R)
\]
with $\gamma^* = \gamma/(1+\gamma)$, which, by Remark \ref{rem:3.1}, is of the form \eqref{eqn:gamma_cD}.
\end{remark}

\subsection{Explicit example}

Cash-or-nothing binary option $O$ returns some fixed amount of cash $C(O)$ if it expires in-the-money but nothing otherwise. Assume that there are two such options $A$ and $B$ which expire in-the-money if $P>C_1$ and $P>C_2$, respectively, where $P$ is the (random) price of (the same) underlying asset, and $C_1<C_2$ are constants. Assume that options are offered for the same price $p$ with $C(A)=2p$ and $C(B)=8p$. Each agent can invest a unit of capital into $A$ and $B$, precisely $1-t$ into $A$ and $t$ into $B$, to get profit $-(1-t)-t=-1$; $(1-t)-t=1-2t$; or $(1-t)+7t=1+6t$ depending on the relation of the price $P$ with respect to $C_1$ and $C_2$. We assume that two agents think that these three opportunities are equally probable.

For agent $1$ with $U_1(X) = E[X] - CVaR^\Delta_{\frac{2}{3}}(X) = -CVaR_{\frac{2}{3}}(X)$, an optimal individual investment can be found from the linear program
$$
\max_{a_1,t} a_1, \quad \text{s.t. }\, X=(-1,1-2t,1+6t), \quad E[Q\,X] \geq a_1, \,\forall\, Q\in \cQ^1,
$$
where $\cQ^1 = \left\{\left(\frac{3}{2},\frac{3}{2},0\right),\left(\frac{3}{2},0,\frac{3}{2}\right),\left(0,\frac{3}{2},\frac{3}{2}\right)\right\} = \left\{\text{Perm}\left(\frac{3}{2},\frac{3}{2},0\right)\right\}$,
resulting in the optimum $t=0$, $X=(-1,1,1)$, and the optimal value $u_1^*=0$.

Similarly, for agent $2$ with $U_2(X) = E[X]-\frac{1}{2}MAD(X)$, the linear program
$$
\max_{a_2,t} a_2, \quad \text{s.t.}\, X=(-1,1-2t,1+6t), \quad E[Q\,X] \geq a_2, \, \forall\,Q\in \cQ^2,
$$
where
$\cQ^2 =  \left\{\text{Perm}\left(\frac{5}{3},\frac{2}{3},\frac{2}{3}\right), \text{Perm}\left(\frac{4}{3},\frac{4}{3},\frac{1}{3}\right)\right\}$,
returns $t=\frac{1}{5}$, with the optimal value $u_2^*=\frac{1}{15}$.

The cooperative investment corresponds to the linear program
$$
\max_{a_1,a_2,Y_1,Y_2,t} a_1+a_2, \ \text{s.t. } Y_1+Y_2=2(-1,1-2t,1+6t), \, E[Q Y_j] \geq a_j, \, \forall\,Q\in \cQ^j, \, j=1,2, 
$$
that is, we are simultaneously looking for optimal portfolio ($t$), and an optimal way to share it ($Y_1,Y_2$) to maximize the sum of agents utilities. The optimal $t$ is $t=\frac{1}{5}$, with $Y_1+Y_2=\left(-2,\frac{6}{5}, \frac{22}{5}\right)$, and optimal value is $u^*=\frac{2}{15} > u_1^* + u_2^*$. 
The simplex method returns a solution $Y_1=(\frac{2}{15},\frac{2}{15},\frac{2}{15})$, $Y_2=(-\frac{32}{15},\frac{16}{15},\frac{64}{15})$, with $u_1(Y_1)=\frac{2}{15}$ and $u_2(Y_2)=0$, which is obviously unfair. 
Because the utilities are cash invariant, any solution in the form 
$Y'_1=Y_1+C$, $Y'_2=Y_2-C$  
is Pareto-optimal, and the question is how to select a ``fair'' $C$. 

To this end, we compute the utility of a coalition $U^*(X)$ as
\begin{equation}\label{eq:utcoal}
U^*(X) = \min_{Q\in \cQ^*}\,E[QX],
\end{equation}
where $\cQ^*$ can be found as (the vertices of) intersection of convex hulls of $\cQ^1$ and $\cQ^2$. In our case, $\cQ^* = \left\{\text{Perm}\left(\frac{3}{2},1,\frac{1}{2}\right), \text{Perm}\left(\frac{4}{3},\frac{4}{3},\frac{1}{3}\right)\right\}$.
The optimal portfolio $X^*=\left(-2,\frac{6}{5}, \frac{22}{5}\right)$ is a solution to the optimization problem
\begin{equation}\label{eq:porft}
\max U^*(X), \quad \text{s.t. }\, X=2(-1,1-2t,1+6t).
\end{equation}
Now, let $Q^*$ be the minimizer in \eqref{eq:utcoal} for $X^*$. Then, according to \eqref{eq:fair}, the fair $C$ should be selected such that
\begin{equation}\label{eq:cond_for_c}
E[Q^* (Y_1+C)]=E[Q^* (Y_2-C)].
\end{equation}  
An intuition is that the investors should get the same profit under the critical scenario $Q^*$.  The problem is that, for $X^*=\left(-2,\frac{6}{5}, \frac{22}{5}\right)$, the minimizer $Q^*$ in \eqref{eq:utcoal} in not unique! Indeed, $E[Q X^*]=\frac{2}{15}$ for $Q=\left(\frac{3}{2},1,\frac{1}{2}\right)$, and also for $Q=\left(\frac{4}{3},\frac{4}{3},\frac{1}{3}\right)$. This is not a coincidence as we have shown in Theorem \ref{thm:coop}. While set of random variables $X$ with non-unique risk identifier has measure $0$, the optimal portfolio in \eqref{eq:porft} is guaranteed to belong to this set. Consequently, the cooperative investment does not have a unique solution in the case of finitely generated deviation measures.

In our example, the set of minimizers in \eqref{eq:utcoal} is the whole line segment with endpoints $\left(\frac{3}{2},1,\frac{1}{2}\right)$ and $\left(\frac{4}{3},\frac{4}{3},\frac{1}{3}\right)$. Consequently, there are infinitely many ``fair'' choices of $C$.

\section{Inverse portfolio problem}\label{sec:5}

Following \cite{palczewski2016}, let us formulate a problem inverse to (\ref{eqn:MV1}) as follows. Assume that we know a solution $x^M=(x_1^M, \dots, x_n^M) \ne \0$ to (\ref{eqn:MV1}), together with centered rates of return $\hat R$, deviation measure $\cD$, and $\Delta_M > 0$  the expected excess return of the portfolio $x^M$.  Can we then ``recover'' $\mu_i$, the expected excess returns of individual instruments? Are they determined uniquely? We will give a positive answer to the first question and discuss a dichotomy faced by the second: if the solution of the inverse problem is unique then the forward problem with the computed $\mu$ has multiple solutions, while if the forward problem has a unique solution then there are many $\mu$'s solving the inverse problem.

\subsection{An explicit formula using risk generators}\label{sec:explicit_formula}

Assume that $\Delta_M > 0$. Necessarily, $x^M \ne \0$. Theorem 4 in \cite{rockafellar2006c} states that, the portfolio $x^M$ is a solution to \eqref{eqn:MV1} \emph{if and only if} there is a risk identifier $Q^*$ for the random variable $\hat R^T x^M$ such that
\begin{equation}\label{eqn:mu_inv}
\mu = \frac{\Delta_M}{\cD(\hat R^T x^M)} \ee[-\hat R Q^*] =  \frac{\Delta_M}{(x^M)^T\ee [-\hat R Q^*]} \ee[-\hat R Q^*].
\end{equation}
This follows since every finite deviation measure on a discrete probability space is continuous, c.f. \citet[page 518]{rockafellar2006c}.

Let $D_{i_j}$, $j=1, \ldots, k$, be the set of active portfolio risk generators for $x^M$. Then \eqref{eqn:mu_inv} amounts to the existence of weights $\beta_1, \ldots, \beta_k \ge 0$, such that $\sum_{j=1}^k \beta_j = 1$ and
\begin{equation}\label{eqn:mu_inv_finite}
\mu =  \frac{\Delta_M}{\sum_{j=1}^k \beta_j D_{i_j}^T x^M} \sum_{j=1}^k \beta_j D_{i_j}.
\end{equation}
From the above formula we immediately get the following characterization of vectors $\mu$ for which $x^M$ is a solution to \eqref{eqn:MV1}.
\begin{lemma}\label{lem:conv}
The set of solutions $\cM$ to the inverse optimization problem is convex and spanned by points $\delta D_{i_j}$, where $D_{i_j}$, $j=1, \dots, k$,  are active portfolio risk generators for $x^M$ and $\delta = \Delta / \cD(\hat R^T x^M)$:
\[
\cM = \Big\{ \delta \sum_{j=1}^k \beta_j D_{i_j}\,\Big| \, \beta \in [0,1]^k \text{ and } \sum_{j=1}^k \beta_j = 1 \Big\}.
\]
\end{lemma}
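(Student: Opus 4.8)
The plan is to derive the characterization directly from equation \eqref{eqn:mu_inv_finite}, which already encodes the necessary and sufficient condition for $x^M$ to solve \eqref{eqn:MV1}. By Theorem 4 in \cite{rockafellar2006c}, $\mu$ makes $x^M$ optimal if and only if $\mu$ arises from some risk identifier $Q^*$ of $\hat R^T x^M$ via \eqref{eqn:mu_inv}. Since the set of risk identifiers $\cQ(\hat R^T x^M)$ is the face of $\cQ$ on which the supremum in \eqref{eqn:devenvelopes} is attained, and since $\cD$ is finitely generated, the vectors $\ee[-\hat R Q^*]$ ranging over $Q^* \in \cQ(\hat R^T x^M)$ are exactly the convex combinations of the \emph{active} portfolio risk generators $D_{i_1}, \dots, D_{i_k}$ for $x^M$ (those $D_i$ realizing the maximum in \eqref{eqn:risk_gen_representation}). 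This reduces \eqref{eqn:mu_inv} to the finite parametrization \eqref{eqn:mu_inv_finite}.

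The key step is then to show that the scalar prefactor $\Delta_M/\big(\sum_j \beta_j D_{i_j}^T x^M\big)$ is in fact constant over all admissible $\beta$, equal to $\delta = \Delta/\cD(\hat R^T x^M)$. This is immediate: for every active portfolio risk generator we have $D_{i_j}^T x^M = \cD(\hat R^T x^M)$ by the definition of activity in \eqref{eqn:risk_gen_representation}, so for any $\beta$ in the simplex $\sum_{j=1}^k \beta_j D_{i_j}^T x^M = \cD(\hat R^T x^M)$. Hence the denominator does not depend on $\beta$, the prefactor collapses to the constant $\delta$, and \eqref{eqn:mu_inv_finite} becomes $\mu = \delta \sum_{j=1}^k \beta_j D_{i_j}$ with $\beta$ ranging over the simplex $\{\beta \in [0,\infty)^k : \sum_j \beta_j = 1\}$. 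This is precisely the set $\cM$ in the statement.

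Finally I would record convexity: $\cM$ is the image of the simplex, a convex set, under the affine (indeed linear) map $\beta \mapsto \delta \sum_j \beta_j D_{i_j}$, so $\cM$ is convex, and it is spanned by the extreme images $\delta D_{i_j}$. The main obstacle, and the only point requiring care, is the identification of $\{\ee[-\hat R Q^*] : Q^* \in \cQ(\hat R^T x^M)\}$ with the convex hull of the active portfolio risk generators: one must check that passing from extreme risk generators $Q_i$ to portfolio risk generators $D_i$ (the deduplication and extreme-point selection in Definition \ref{def:portfolio_risk_gen}) does not lose any of the vectors $\ee[-\hat R Q^*]$ attainable from genuine risk identifiers, and that every such image is indeed a convex combination of the \emph{active} ones rather than of all generators. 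This follows because the linear functional $Q \mapsto \ee[-\hat R^T x^M Q]$ is maximized over $\cQ$ exactly on the face whose image under $Q \mapsto \ee[-\hat R Q]$ is spanned by the active $D_{i_j}$, but it is worth stating explicitly to make the reduction to \eqref{eqn:mu_inv_finite} rigorous.
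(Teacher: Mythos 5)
Your proof is correct and follows the same route the paper intends: the paper states the lemma with no explicit proof, asserting it follows ``immediately'' from \eqref{eqn:mu_inv_finite}, and your argument simply spells out the two details left implicit there --- that the denominator $\sum_j \beta_j D_{i_j}^T x^M$ equals $\cD(\hat R^T x^M)$ for every $\beta$ in the simplex because each \emph{active} generator attains the maximum in \eqref{eqn:risk_gen_representation}, and that the image of the face $\cQ(\hat R^T x^M)$ under $Q \mapsto \ee[-\hat R Q]$ is exactly the convex hull of the active portfolio risk generators. Both points are right, and flagging the second as the step needing care is a genuine improvement in rigour over the paper's presentation.
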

\begin{remark}
The conclusions of the above lemma can be immediately deduced from the dual representation \eqref{eqn:dual}  of the portfolio optimization problem. Indeed, we have
\[
\mu = \frac{1}{q} \sum_{i=1}^M p_{i_j}  D_{i_j},
\]
where $q > 0$, $p_{i_j} \ge 0$ and sum up to $1$. Multiplying both sides by $x^M$ yields $q = 1/\delta$.
\end{remark}

Equipped with this characterization of the set $\cM$ we demonstrate the link between the set of solutions of the inverse and forward optimization problems.
\begin{theorem}\label{thm:dichotomy}
$\ $
\begin{enumerate}
\item If $x^M$ is a unique solution to \eqref{eqn:MV1} for some $\mu$, then the set of all solutions $\cM$ to the inverse optimization problem has at least $n + 1$ extreme points. Moreover, all extreme points are of the form $\delta D_{i_j}$, where $\delta > 0$ and $D_{i_j}$ is an active portfolio risk generator for $x^M$.
\item If there is a unique active portfolio risk generator for $x^M$, then the inverse optimization problem has a unique solution $\mu^*$ (the set $\cM$ consists of one point). However, the optimization problem \eqref{eqn:MV1} with $\Delta = \Delta_M$ and $\mu = \mu^*$ has multiple solutions: the set of solutions $\X^*$ is a polyhedron of dimension $n-1$ and has at least $n$ extreme points\footnote{The dimension of a polyhedron $P$ is the maximum number of affinely independent points contained in $P$ minus $1$.}.
\end{enumerate}
\end{theorem}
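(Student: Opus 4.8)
The plan is to read off both assertions from the explicit description of the inverse solution set $\cM$ in Lemma \ref{lem:conv} together with the structural facts about the forward problem collected in Theorem \ref{thm:mu}. The common starting observation is that the extreme points of $\cM = \delta\,\conv\{D_{i_1},\dots,D_{i_k}\}$ are exactly the scaled active generators $\delta D_{i_j}$. Indeed, by Definition \ref{def:portfolio_risk_gen} every portfolio risk generator is an extreme point of $\conv\{\tilde D_1,\dots,\tilde D_{M'}\}$, so no $D_{i_j}$ can be a convex combination of the remaining active generators; hence each $D_{i_j}$ is an extreme point of $\conv\{D_{i_1},\dots,D_{i_k}\}$, the $D_{i_j}$ are distinct, and $\cM$ has exactly $k$ extreme points, where $k$ is the number of active portfolio risk generators for $x^M$. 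Both assertions thus reduce to counting active generators and, in the second case, to analysing the level set $\{x:\cD(\hat R^T x)\le A^*\}$.

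For assertion 1 I would argue as follows. Since $x^M$ is the unique solution, assertion 3 of Theorem \ref{thm:mu} supplies $n$ linearly independent active generators $D_{i_1},\dots,D_{i_n}$; linearly independent vectors are affinely independent, so $\delta\,\conv\{D_{i_1},\dots,D_{i_n}\}$ is an $(n-1)$-dimensional simplex inside $\cM$. Combined with the first paragraph this already shows that every extreme point of $\cM$ is of the stated form $\delta D_{i_j}$ and that there are at least $n$ of them. To reach the asserted count of $n+1$ one must exhibit a further active generator, i.e. show that the optimal vertex of \eqref{eqn:lin} is not ``simple'' with respect to the generator constraints alone; this is the delicate point to which I return below.

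For assertion 2, a single active generator gives $\cM=\{\delta D_{i_1}\}$, so $\mu^*=\delta D_{i_1}$ is the unique inverse solution. Running \eqref{eqn:MV1} with this $\mu^*$ and $\Delta$, the binding return constraint (Lemma \ref{lem:positive}) reads $\delta D_{i_1}^T x=\Delta$, i.e. $D_{i_1}^T x=A^*$ with $A^*=\cD(\hat R^T x^M)=\Delta/\delta$. As $\cD(\hat R^T x)=\max_i D_i^T x\ge D_{i_1}^T x$ for every $x$, each feasible $x$ has objective at least $A^*$, with equality exactly on $\X^*=\{x:\ D_{i_1}^T x=A^*,\ D_j^T x\le A^*\ \forall j\}$. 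This is the face cut out by $D_{i_1}$ on the polytope $\{x:\ D_j^T x\le A^*\ \forall j\}$, which is bounded and full-dimensional because, by \ref{ass:non-constant} and \ref{ass:D1}, $\max_i D_i^T x>0$ for all $x\ne\0$, forcing $\0$ into the interior of $\conv\{D_1,\dots,D_M\}$. Since $D_{i_1}$ is a genuine generator (an extreme point of $\conv\{D_i\}$), the corresponding inequality defines an actual facet, so $\X^*$ has dimension exactly $n-1$ and therefore at least $n$ extreme points; note that $x^M$, having only $D_{i_1}$ active, lies in the relative interior of this facet, which confirms both that $x^M$ is recovered and that the forward solution is non-unique.

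The step I expect to be the main obstacle is the exact count in assertion 1. Theorem \ref{thm:mu}(3) guarantees only $n$ linearly independent active generators: the optimal vertex $(A^*,x^*)$ of \eqref{eqn:lin} lives in $\er^{n+1}$ and carries $n+1$ linearly independent active constraints, but since the return constraint $\mu^T x=\Delta$ is one of them (Lemma \ref{lem:positive}), only $n$ of these are generator constraints. Bridging the gap to $n+1$ active generators requires an extra argument — either excluding non-degeneracy of that vertex, or re-deriving the count through complementary slackness in the dual \eqref{eqn:dual} — and this is where the substance of the proof lies; the extreme-point identification and the facet-dimension argument used in assertion 2 are then routine once Lemma \ref{lem:conv} and the fact that $\0$ lies in the interior of $\conv\{D_i\}$ are in hand.
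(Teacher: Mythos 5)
Your overall route is the same as the paper's: the extreme points of $\cM=\delta\,\conv\{D_{i_1},\dots,D_{i_k}\}$ are among the scaled active generators, assertion 1 is a dimension count fed by Theorem \ref{thm:mu}(3), and assertion 2 identifies $\X^*$ with the facet of $\{x:\ D_j^Tx\le A^*\ \forall j\}$ cut out by the single active generator. Your argument for assertion 2 is complete and in fact more careful than the paper's, which merely notes that exactly two constraints of \eqref{eqn:lin} are active and then invokes its Lemma \ref{lem:ext}; your observation that $x^M$ lies in the relative interior of that facet is the right way to pin the dimension at $n-1$.

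The ``delicate point'' you flag in assertion 1 is real, and you should not feel you missed a trick: there is no valid bridge from $n$ to $n+1$. The paper closes the gap with Lemma \ref{lem:ext}, which claims $\conv(v_1,\dots,v_k)$ has at least $\hat n+1$ extreme points when $\hat n=\dim\lin(v_1,\dots,v_k)$; that lemma is off by one (its proof conflates the dimension of the linear span with the dimension of the affine hull plus one), and a simplex on $\hat n$ linearly independent vertices is already a counterexample. Here the off-by-one matters: every active generator satisfies $D_{i_j}^Tx^M=A^*>0$, so the $D_{i_j}$ lie on an affine hyperplane that misses the origin, their affine hull has dimension $n-1$, and the correct count is ``at least $n$ extreme points'' --- exactly what you proved. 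Concretely, take $n=2$ with generators $D_1=(1,0)$, $D_2=(0,1)$, $D_3=(-1,-1)$ (the paper's own Black--Litterman example), $\mu=(1,1)$, $\Delta=1$: the unique optimum is $x^M=(1/2,1/2)$ with only $D_1,D_2$ active, and $\cM$ is the segment from $(2,0)$ to $(0,2)$, which has two, not three, extreme points. So your proof establishes the corrected statement, and the extra ``$+1$'' in assertion 1 (and in Lemma \ref{lem:ext}) cannot be salvaged; everything else in your write-up, including the identification of the extreme points of $\cM$ with the scaled active generators, is sound.
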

The proof of the above theorem requires the following simple technical result.
\begin{lemma}\label{lem:ext}
Given $v_i \in \er^n$, $i=1, \ldots, k$, let $\hat n = \rank (v_i,\ i=1, \ldots, k) = \dim (\lin (v_1, \ldots, v_k))$. Then $\cN = \conv (v_1, \ldots, v_k)$ has at least $\hat n + 1$ extreme points and all extreme points are from the set $\{v_1, \ldots, v_k\}$.
\end{lemma}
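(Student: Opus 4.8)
The plan is to prove the two assertions separately: that $\operatorname{ext}(\cN) \subseteq \{v_1, \ldots, v_k\}$, and that $\cN$ has at least $\hat n + 1$ extreme points.

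For the inclusion I would argue directly from the definition (this is the routine half, and it matches the fact already used just before the statement via \citet[Theorem IV.19.3 and its proof]{rockafellar1970}). Let $v$ be an extreme point of $\cN$ and write $v = \sum_{i=1}^k \lambda_i v_i$ with $\lambda_i \ge 0$ and $\sum_i \lambda_i = 1$. If $v \notin \{v_1, \ldots, v_k\}$, then at least two coefficients are positive, say $\lambda_1 \in (0,1)$; setting $w = (1-\lambda_1)^{-1}\sum_{i \ge 2}\lambda_i v_i \in \cN$ gives $v = \lambda_1 v_1 + (1-\lambda_1)w$ with $v_1 \ne w$ (otherwise $v = v_1$), exhibiting $v$ as a proper convex combination of two points of $\cN$ and contradicting extremality. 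Hence $v$ is one of the $v_i$.

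For the count I would first secure the weaker bound of $\hat n$ extreme points and then upgrade it. Since $\cN$ is the convex hull of its extreme points $v_{j_1}, \ldots, v_{j_p}$ (a polytope is generated by its vertices), these span the same linear space as $\{v_1, \ldots, v_k\}$, namely $\lin(v_1, \ldots, v_k)$ of dimension $\hat n$; because $p$ vectors span a space of dimension at most $p$, this already forces $p \ge \hat n$. To reach $\hat n + 1$ I would pass to the affine hull and use the standard fact that a polytope of affine dimension $d$ has at least $d+1$ vertices (its affine hull is spanned by an affinely independent subfamily of the vertices). It therefore suffices to show $d := \dim \operatorname{aff}(v_1, \ldots, v_k) = \hat n$.

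This last increment from $\hat n$ to $\hat n + 1$ is exactly where I expect the difficulty to sit. In general the linear and affine ranks are related only by $d \in \{\hat n - 1, \hat n\}$, with $d = \hat n$ precisely when $0 \in \operatorname{aff}(v_1, \ldots, v_k)$, i.e. when some affine combination of the $v_i$ vanishes. Thus the whole lemma hinges on the generators' affine hull being genuinely $\hat n$-dimensional, and this is the one point I would check most carefully: if the $v_i$ happened to lie on an affine hyperplane missing the origin (so that $d = \hat n - 1$) the argument would yield only $\hat n$ extreme points. Consequently the proof must invoke the hypotheses — and, in the applications in Theorem \ref{thm:dichotomy}, the specific geometry of the active portfolio risk generators — to rule out that degenerate configuration.
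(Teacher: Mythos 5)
Your two correct pieces are exactly right. The inclusion of the extreme points in $\{v_1,\dots,v_k\}$ is the routine half (the paper just cites \citet[Corollary 18.3.1]{rockafellar1970}; your exchange argument is a fine substitute), and your count of at least $\hat n$ extreme points is both correct and essentially what the paper's own argument delivers once its arithmetic is repaired: if $v_{i_1},\dots,v_{i_{n'}}$ are the extreme points, then $\cN=\conv(v_{i_1},\dots,v_{i_{n'}})$, so $\lin(v_1,\dots,v_k)=\lin(v_{i_1},\dots,v_{i_{n'}})$ has dimension at most $n'$, forcing $n'\ge\hat n$.

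The step you flagged as suspicious is not a gap in your proof but an error in the lemma: as stated it is false. Take $v_1=(1,0)$, $v_2=(0,1)$ in $\er^2$; then $\hat n=2$ but $\conv(v_1,v_2)$ is a segment with exactly two extreme points. The paper's proof hides this behind an off-by-one slip: it asserts $\dim\lin(v_{i_1},\dots,v_{i_{n'}})\le n'+1$ (the correct bound is $n'$) and that $\lin(v_1,\dots,v_k)$ has dimension $\hat n+1$ (by the lemma's own definition it is $\hat n$); with the correct dimensions the contradiction disappears and only $n'\ge\hat n$ survives. As you observe, the true count is $d+1$ with $d=\dim\operatorname{aff}(v_1,\dots,v_k)\in\{\hat n-1,\hat n\}$, and $d=\hat n$ precisely when $0\in\operatorname{aff}(v_1,\dots,v_k)$. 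Worse, in the intended application (part 1 of Theorem \ref{thm:dichotomy}) the degenerate configuration is not merely possible but forced: every active portfolio risk generator satisfies $D_{i_j}^T x^M=\cD(\hat R^T x^M)>0$, so the generators of $\cM$ all lie on an affine hyperplane avoiding the origin, $d=\hat n-1=n-1$, and the correct conclusion there is ``at least $n$ extreme points,'' not $n+1$. (Part 2 of that theorem is unaffected, since it counts vertices of a polyhedron of \emph{affine} dimension $n-1$, for which $n$ is the right bound.) So do not try to close the increment from $\hat n$ to $\hat n+1$; it cannot be closed, and your diagnosis of where the difficulty sits is exactly correct.
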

\begin{proof}
It follows from \citet[Corollary 18.3.1]{rockafellar1970} that all extreme points of $\cN$ are in $\{v_1, \ldots, v_k\}$. It remains to prove that there are at least $\hat n + 1$ extreme points. Assume the opposite: there are only $n' < \hat n + 1$ extreme points $v_{i_1}, \ldots, v_{i_{n'}}$ of $\cN$. Then $\cN \subset A := \lin (v_{i_1}, \ldots, v_{i_{n'}})$ and $\dim(A) \le n' + 1$. However, $A$ is a linear space containing all points $v_1, \ldots, v_k$ so it  also contains $\lin(v_1, \ldots, v_k)$. The latter space has dimension $\hat n + 1$ by assumption, hence a contradiction.
\end{proof}

\begin{proof}[Proof of Theorem \ref{thm:dichotomy}]
From Theorem \ref{thm:mu}, the uniqueness of solutions to \eqref{eqn:MV1} implies that the set of active portfolio risk generators $D_{i_1}, \ldots, D_{i_k}$ spans the whole space $\er^n$, i.e., the dimension of a linear space generated by those vectors is $n$. The conclusions follow from Lemma \ref{lem:ext}.

Assume now that there is a unique active portfolio risk generator. The uniqueness of solution to the inverse optimization problem is clear from formula \eqref{eqn:mu_inv_finite}. Consider the equivalent form \eqref{eqn:lin} for the forward optimization problem. Recall that the set of all solutions to such a linear problem is a convex bounded polyhedral set, a face of a polyhedral set generated by the constraints. The portfolio $x_M$ is a solution for which there are exactly two active constraints: one with the unique portfolio risk generator and one encoding the minimum expected return. This implies that the set of solutions is a polyhedron of dimension $n-1$. By Lemma \ref{lem:ext} it must have at least $n$ extreme points.
\end{proof}

\begin{corollary}
In the case 1 of Theorem \ref{thm:dichotomy}, if $\mu \in \reint \cM$ ($\mu$ is in the relative interior of $\cM$), then the forward optimization problem \eqref{eqn:MV1} has a unique solution for $\Delta = \Delta_M$. 
\end{corollary}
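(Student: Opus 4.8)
The plan is to show that, under the stated hypothesis, the solution set of \eqref{eqn:MV1} contains no nonzero direction along which optimality is preserved, so it collapses to the single point $x^M$. First I would record two facts from the case-1 setting. By Lemma \ref{lem:conv}, $x^M$ solves \eqref{eqn:MV1} for \emph{every} $\mu \in \cM$, hence in particular for our $\mu \in \reint\cM$ with $\Delta = \Delta_M$; and by Theorem \ref{thm:mu}(3) (equivalently, the proof of case 1 of Theorem \ref{thm:dichotomy}) the active portfolio risk generators $D_{i_1}, \dots, D_{i_k}$ of $x^M$ satisfy $\lin(D_{i_1}, \dots, D_{i_k}) = \er^n$. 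Writing $\delta = \Delta_M / \cD(\hat R^T x^M) > 0$, so that $\cM = \delta\,\conv(D_{i_1}, \dots, D_{i_k})$, the standard description of the relative interior of the convex hull of a finite point set gives a representation $\mu = \delta \sum_{j=1}^k \beta_j D_{i_j}$ in which \emph{all} weights are strictly positive, $\beta_j > 0$, $\sum_{j=1}^k \beta_j = 1$.

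Next I would argue by contradiction. Suppose the forward problem is not uniquely solved; since its solution set is convex and contains $x^M$, there is a second optimal point $y$, and the whole segment $x^M + t(y - x^M)$, $t \in [0,1]$, is optimal. Put $d = y - x^M \ne 0$. Feasibility of these points together with $\mu^T x^M = \Delta_M$ (Lemma \ref{lem:positive}) forces $\mu^T d \ge 0$. For the objective, recall $\cD(\hat R^T x) = \max_{i} D_i^T x$ and that $\{i_1, \dots, i_k\}$ is exactly the active index set at $x^M$; since this active set can only shrink under a small perturbation, the one-sided directional derivative of $x \mapsto \cD(\hat R^T x)$ at $x^M$ in direction $d$ equals $\max_{j} D_{i_j}^T d$. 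As the objective value is constant along the optimal segment, this directional derivative vanishes, whence $D_{i_j}^T d \le 0$ for every $j = 1, \dots, k$.

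Finally I would combine the two. Using the strictly positive representation of $\mu$,
\begin{equation*}
0 \le \mu^T d = \delta \sum_{j=1}^k \beta_j\, D_{i_j}^T d \le 0,
\end{equation*}
so every summand vanishes; since $\delta > 0$ and each $\beta_j > 0$, this forces $D_{i_j}^T d = 0$ for all $j$. Thus $d$ is orthogonal to $\lin(D_{i_1}, \dots, D_{i_k}) = \er^n$, i.e. $d = 0$, contradicting $d \ne 0$. Hence the solution is unique (and equal to $x^M$). The delicate point is the middle paragraph: one must justify that non-uniqueness produces a direction $d$ satisfying simultaneously $\mu^T d \ge 0$ and $D_{i_j}^T d \le 0$ for all active generators, which rests on the piecewise-linear structure of $\cD(\hat R^T \cdot)$ and the local stability of its active set; everything else is bookkeeping. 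The strict positivity of the $\beta_j$ coming from $\mu \in \reint\cM$ is exactly what upgrades the necessary spanning condition of Theorem \ref{thm:mu} into a sufficient condition for uniqueness.
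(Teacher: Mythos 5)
Your proof is correct. It rests on the same underlying mechanism as the paper's — non-uniqueness of the forward problem forces $\mu$ to be carried by a non-spanning subset of the active portfolio risk generators, which is incompatible with $\mu \in \reint \cM$ — but the execution is genuinely different. The paper argues by contrapositive in two lines, invoking assertion 2 of Theorem \ref{thm:mu} (whose proof extracts, via LP duality and complementary slackness, a representation of $\mu$ as a nonnegative combination of active generators spanning a subspace of dimension at most $n-1$) together with Rockafellar's Theorem 6.4 on relative interiors. You instead re-derive the needed inequalities $D_{i_j}^T d \le 0$ from first principles, via the Danskin-type formula for the one-sided directional derivative of the piecewise-linear objective $x \mapsto \max_i D_i^T x$ along the optimal segment, and you make the relative-interior step concrete through the strictly positive representation $\mu = \delta \sum_j \beta_j D_{i_j}$ with $\beta_j > 0$, closing with the orthogonality argument that $D_{i_j}^T d = 0$ for all $j$ forces $d = 0$ because the active generators span $\er^n$ (Theorem \ref{thm:mu}, assertion 3). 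Your version is longer but self-contained and arguably more transparent: it exhibits exactly where strict positivity of the weights — i.e.\ $\mu \in \reint\cM$ rather than merely $\mu \in \cM$ — enters, something the paper's two-line citation leaves implicit. The two facts you import without proof (the directional derivative of a finite max equals the max over the active set, and every point of the relative interior of a polytope admits an all-positive convex representation of its vertices) are both standard, so there is no gap.
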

 \begin{proof}
 The implication is equivalent to: solution to \eqref{eqn:MV1} is not unique $\implies \mu \notin \reint \cM$. This follows immediately from assertion 2 of Theorem \ref{thm:mu} and  \citet[Theorem 6.4]{rockafellar1970}.
 \end{proof}

The inverse problem with no shortsales constraint admits more solutions as shown in the following lemma.
\begin{lemma}
Let $x^M$ be a solution to the portfolio optimization problem \eqref{eqn:MV1} with additional constraint of no shortsales of risky assets. The set of solutions to the inverse optimization problem is given by
\begin{equation*}
\cM' = \{ \mu \in \er^n:\ \exists\, m \in \cM \text{ s.t. } \mu_i \le m_i, \text{ and } m_i \ind{x^M_i \ne 0} = \mu_i \ind{x^M_i \ne 0}, \ i = 1, \ldots, n\}
\end{equation*}
In particular, $\cM \subset \cM'$.
\end{lemma}
\begin{proof}
The dual to portfolio optimization problem \eqref{eqn:lin} with non-negativity constraints on portfolio weights of risky assets is given by
\begin{equation}\label{eqn:dual_noshort} 
\begin{aligned}
&\text{maximize } q \Delta_M,\\
&\text{subject to: }  \sum_{j=1}^k p_j  D_{i_j} - q\mu \ge  0,\qquad \sum_{j=1}^k p_j =1,\\
&\phantom{\text{subject to: }} q \ge 0,\ p_j \ge 0, \quad j=1,\dots, k,
\end{aligned}
\end{equation}
where $D_{i_j}$, $j=1, \ldots, k$, are the active portfolio risk identifiers. By the strong duality, $q\Delta_M = \cD(\hat R^T x^M) > 0$, hence $q = \cD(\hat R^T x^M) / \Delta_M > 0$. By complementary slackness conditions, \citet[Theorem 4.5]{Bertsimas1997}, the inequality in the first constraint above becomes equality for those coordinates for which $x^M$ is non-zero. Using the fact that $\mu^T x^M = \Delta_M$ yields the form of $\cM'$. Strong duality implies that for any $\mu \in \cM'$, the portfolio $x^M$ is optimal.
\end{proof}

It is well known that optimal portfolios with shortsales constraints are often poorly diversified, i.e., have many null portfolio weights. It then transpires from the definition of the set $\cM'$ that the coordinates of $\mu$ corresponding to those zero weights are unbounded from below. 
\begin{remark}
If the portfolio with no shortsales $x^M$ is fully diversified, i.e., all coordinates are strictly positive, then assertions of Theorem \ref{thm:dichotomy} apply to the problem with no shortsale constraints for risky assets.
\end{remark}

\subsection{Choice of a single solution to the inverse optimization problem}\label{sec:robustsel}

By Corollary \ref{cor:unique}, the solution to the portfolio optimization problem \eqref{eqn:MV1} is unique unless $\mu$ belongs to a set of Lebesgue measure zero (a union of a finite number of hyperplanes). It is therefore common that the inverse optimization problem has multiple solutions (Theorem \ref{thm:dichotomy}). 

How to choose a unique point from the set of solutions  to the inverse optimization problem? 
In view of \eqref{eqn:mu_inv}, this is equivalent to the choice of a unique risk identifier $Q^*$ or rather a map $f_{\cD}:\L^2(\Omega) \to {\L}^2(\Omega)$ that, for a deviation measure $\cD$, assigns to a random variable $X \in {\L}^2(\Omega)$ one of its risk identifiers. We will call such a map $f_\cD$ a \emph{selector} corresponding to the deviation measure $\cD$. We say that $f_\cD$ is a \emph{robust selector} if it is (i) a selector, and (ii) a Lebesgue continuous map in sense of Definition \ref{def:lebcont}.

\begin{lemma}\label{lem:robsel}
For any finite deviation measure $\cD$ there exists a \emph{unique} robust selector $f_\cD$.
\end{lemma}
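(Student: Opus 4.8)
The plan is to reduce the statement to the extended-gradient theory of Section~\ref{sec:gradient} by identifying the set of risk identifiers $\cQ(X)$ with a fixed linear image of the subdifferential of a convex function, and then to read off existence and uniqueness from the characterization of the extended gradient. Write $p_\omega = \prob(\omega)$ and set $g(X) = \cD(X) - \ee X = \sup_{Q \in \cQ} \ee[-XQ]$, which is finite and convex on $\er^N$ by finiteness of $\cD$ together with (D2) and (D3). Since $\ee[-XQ] = -\langle \diag(p) Q, X\rangle$ in the Euclidean inner product of $\er^N$, the function $g$ is the support function of the compact convex set $\{-\diag(p)Q : Q \in \cQ\}$, and hence its subdifferential is the exposed face $\partial g(X) = \{-\diag(p)Q : Q \in \cQ(X)\}$; equivalently $\cQ(X) = -\diag(p)^{-1}\partial g(X)$. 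By \citet[Theorem 25.5]{rockafellar1970}, $g$ is differentiable almost everywhere, and at each such point $\cQ(X)$ is the singleton $\{-\diag(p)^{-1}\triangledown g(X)\}$. Thus every selector is forced to equal the a.e.-defined map $X \mapsto -\diag(p)^{-1}\triangledown g(X)$ off a Lebesgue-null set.

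For uniqueness, I would exploit that robustness pins a map down everywhere from its values almost everywhere. If $f_\cD$ is any robust selector, then $-\diag(p)f_\cD$ is also robust (robustness is preserved by composition with a fixed linear map, because the averaging operator in \eqref{eqn:robust} is linear), and it agrees with $\triangledown g$ wherever the latter exists. By the characterization following Theorem~\ref{thm:extgrad2} --- that $G_{\cdot}(g)$ is the \emph{only} map which is robust and coincides with $\triangledown g$ where $g$ is differentiable --- we must have $-\diag(p)f_\cD(Y) = G_Y(g)$, i.e. $f_\cD(Y) = -\diag(p)^{-1}G_Y(g)$ for every $Y$. Hence at most one robust selector can exist.

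For existence, I would verify directly that $\Phi(Y) := -\diag(p)^{-1}G_Y(g)$ is a robust selector. It is a selector because the Steiner point lies in its own set, $G_Y(g) \in \partial g(Y)$ (noted after Theorem~\ref{thm:extgrad2}), so $\Phi(Y) \in -\diag(p)^{-1}\partial g(Y) = \cQ(Y)$ is a genuine risk identifier at \emph{every} $Y$, including points of non-differentiability. It is robust because $G_{\cdot}(g)$ is robust by Theorem~\ref{thm:extgrad2} and the discussion following it, while $-\diag(p)^{-1}$ is a fixed linear map; hence the averaging limit \eqref{eqn:robust} for $\Phi$ exists and returns $\Phi(Y)$. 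Together with the previous paragraph this exhibits the unique robust selector.

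The main obstacle, and the only place genuine care is needed, is the bookkeeping between the two inner products: the risk identifiers in \eqref{eqn:devenvelopes}--\eqref{eqn:devenvelopes2} are defined through the $\L^2(\Omega)$ pairing $\ee[XQ]$, whereas robustness in Definition~\ref{def:robust} and the Steiner-point constructions of Section~\ref{sec:gradient} are phrased in the Euclidean structure of $\er^N$. The weighting $\diag(p)$ is exactly the Riesz isomorphism relating the two, and the crux is to check that (i) it carries $\cQ(X)$ onto the Euclidean subdifferential $\partial g(X)$ precisely, and (ii) composing a robust map with this fixed isomorphism preserves robustness. Once these two linear-algebra points are secured, the lemma is an immediate consequence of the uniqueness and in-set properties of the extended gradient, and no further analysis of the non-differentiability set is required.
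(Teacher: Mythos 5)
Your proposal is correct and follows essentially the same route as the paper: the paper simply cites \citet[Proposition 1]{rockafellar2006c} for the identity $\partial \cD(X) = 1 - \cQ(X)$ and then invokes Theorem \ref{thm:extgrad2}, whereas you rederive the same subdifferential identification via the support function of $-\diag(p)\cQ$ and are more explicit about the Riesz-isomorphism bookkeeping between the $\L^2$ pairing and the Euclidean structure. The extra care about $\diag(p)$ and about linear images preserving robustness is welcome detail, but it is the same argument.
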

\begin{proof}
By \citet[Proposition 1]{rockafellar2006c} we have $\partial \cD(X) = 1 - \cQ(X)$. Hence the existence and uniqueness follow from Theorem \ref{thm:extgrad2}.
\end{proof}

\begin{ex}\label{ex:madsel}
For mean absolute deviation ${\rm MAD}(X)=E[|X-EX|]$, the unique robust selector is given by $f_\cD(X)=1+EZ-Z$, where $Z(\omega)=1$, $Z(\omega)=0$, and $Z(\omega)=-1$, for $X(\omega)>E[X]$, $X(\omega)=E[X]$, and $X(\omega)<E[X]$, respectively.
\end{ex}

An alternative approach for selecting a unique selector is discussed in Appendix. It is based on the principle of law-invariance. Although a law-invariant selector is, in general, not unique, it is natural from the financial and probabilistic point of view and is unique for some important deviation measures such as $\cvar$ and mixed-$\cvar$.

\subsection{Explicit examples}
Let $\Omega=\{\omega_1, \dots, \omega_N\}$ with $\prob(\omega_j)=w_j$, $j=1,\dots, N$, and $\hat{R}_{j}=\hat{R}(\omega_j)$. Consider a given portfolio $x^M$ and denote $X^* = \hat R^T x^M$ and $x_j^*= X^*(\omega_j)$. 
Without loss of generality, we assume that $\{\omega_1, \dots, \omega_N\}$ are ordered in such a way that $x_1^* \leq x_2^* \le \dots \leq x_N^*$. Since $\ee[\hat R] = 0$, we have $\ee[X^*] = 0$ and either $x_1 = \cdots = x_N = 0$ or $x_1 < 0 < x_N$. The former case is impossible for a non-zero portfolio $x^M$ under the assumption \ref{ass:non-constant}, therefore, we will concentrate on the non-trivial latter case of non-zero return $X^*$.
We will examine the inverse portfolio problem for risk measured by MAD and by deviation CVaR.

\subsubsection{Mean absolute deviation}\label{subsec:MAD_example}
Let $k$ be the maximal index such that $x_k^* < 0$ and $m$ be the maximal index such that $x_m^* \leq 0$. It follows from the discussion above that $1 \le k \le m$. The inverse portfolio problem has solutions of the form
\[
\mu = \frac{\Delta_M}{MAD(X^*)} \ee[ - \cQ^* \hat R ],
\]
where $\cQ^*$ is a risk identifier for $X^*$. Recalling the form $\cQ = 1 + \ee[Z] - Z$ of risk generators for MAD, see Example \ref{ex:madsel}, and $\ee[\hat R] = 0$ we get $\ee[ - \cQ^* \hat R ] = \ee[Z \hat R]$. If $k = m$, there is a unique risk identifier given by $Z (\omega_j) = \ind{j > k} - \ind{j \le k}$, $j=1, \ldots, N$. Otherwise, there are $2(m-k)$ extreme risk identifiers corresponding to $Z$'s of the form 
$Z(\omega_j) = \ind{j > m} - \ind{j \le k} + z \ind{j = j^*}$, $j=1, \ldots, N$, for some $k < j^* \le m$ and $z \in \{ -1, 1\}$. Therefore, the set of solutions of the inverse problem is given by
\[
\bigg\{ \sum_{j=m+1}^N w_j \hat R_j - \sum_{j=1}^k w_j \hat R_j + \sum_{j=k+1}^m \lambda_j w_j \hat R^j \ \big| \ \lambda_{k+1}, \ldots, \lambda_m \in [-1,1] \bigg\}.
\]
A robust selector corresponds to taking $\lambda = 0$ (see Example \ref{ex:madsel}).

\begin{ex}
Let $N=3$ and $\prob(\omega_j) = \frac13$, $j=1, 2, 3$. There are two risky assets with centered returns  $\hat R_1 = (-1, -2)^T$, $\hat R_2 = (-1, 1)^T$ and $\hat R_3 = (2, 1)^T$. The solution to the forward portfolio optimization problem with $\mu = (0.4, 0.6)$ and $\Delta_M = 0.5$ is $x_M = (0.5, 0.5)$. Then $X^* = (-1.5, 0, 1.5)$ and $MAD(X^*) = 1$. The set of risk identifiers of $X^*$ is given by $Z=(-1, z, 1)$ with an arbitrary number $z \in [-1,1]$, i.e., $\displaystyle\cQ^* = \Big(2+\frac{z}{3}, 1 - \frac23z, \frac{z}{3}\Big)$. The corresponding set of solutions $\mu$ to the inverse problem takes the form:
$$
\mu = \frac{0.5}{1}\left(\frac13(-1) \hat R_1 + \frac13 z \hat R_2  + \frac13 \hat R_3  \right) = \begin{pmatrix} 0.5 -z/6 \\ 0.5 + z/6\end{pmatrix}, \qquad z \in [-1,1].
$$ 
The unique robust selector suggested in Example \ref{ex:madsel} corresponds to $z=0$, resulting in $\mu = (0.5, 0.5)^T$.
\end{ex}

\subsubsection{Conditional Value at Risk}

For deviation CVaR let $k$ be the maximal index such that $x_k^* < -\var_\alpha(X^*)$ (set $k=0$ is no such index exists) and $m$ be the maximal index such that $x_m^* \leq -\var_\alpha(X^*)$.
Then any risk identifier $Q^*=(q_1, \dots, q_N)$ of $X^*$ satisfies, c.f. \cite{rockafellar2006c},
\begin{equation}\label{eqn:cvarident2}
\begin{cases} 0\leq q_j \leq 1/\alpha, \,\,\,\sum\limits_{j=1}^N w_j q_j=1,\\ q_1=q_2=\dots=q_k=1/\alpha, \\ q_{m+1}=\dots=q_N=0. \end{cases}
\end{equation}
Hence,
\begin{equation}\label{eqn:cvardiscr}
\mu = \frac{\Delta_M}{\cvar_\alpha^\Delta(X^*)}\bigg(\frac{1}{\alpha} \sum\limits_{j=1}^k w_j (-\hat{R}_j) + \sum\limits_{j=k+1}^m w_j q_j(-\hat R_j)\bigg),
\end{equation}
where $q_{k+1}, \dots, q_m$ are arbitrary numbers satisfying linear constraints
$$
\sum\limits_{j=k+1}^m w_j q_j= 1-\frac{1}{\alpha}\sum\limits_{j=1}^k w_j, \quad \text{and} \quad 0\leq q_j \leq 1/\alpha, \quad j=k+1, \dots, m.
$$

If $m=k+1$, the risk identifier in (\ref{eqn:cvarident2}) and $\mu$ in (\ref{eqn:cvardiscr}) are uniquely defined. For $m>k+1$, i.e., $x_{k+1}^* = \cdots = x_m^* = -\var_\alpha (X^*)$, the inverse problem has infinitely many solutions. The 
robust selector corresponds to $q_{k+1}= \dots = q_m$, that is,
$
\mu =\frac{\Delta_M}{\cvar_\alpha^\Delta(X^*)}\big(\frac{1}{\alpha} \sum_{j=1}^k w_j (-\hat R_j) + q\sum_{j=k+1}^m w_j (-\hat{R}_j)\big),
$
where $q=\big(1-\frac{1}{\alpha}\sum_{j=1}^k w_j\big)/\big(\sum_{j=k+1}^m w_j\big)$.  

\begin{ex}\label{ex:10}
Let $\Omega = \{ \omega_1, \omega_2, \omega_3 \}$ with uniform probability $\prob(\omega_j) = 1/3$. There are two risky assets with centered returns $\hat R_1 = (-1, 0)^T$, $\hat R_2 = (0, -1), \hat R_3 = (1,1)$. Fix $\alpha=0.05$. The solution to the forward portfolio optimization problem with $\mu = (1/3, 2/3)$ and $\Delta_M = 0.5$ is $x_M = (0.5, 0.5)$. Then $X^*=(-0.5, -0.5, 1)$,  $-\var_\alpha(X^*)=-0.5$, and  $k=0$, $m=2$. The set of risk identifiers of $X^*$ comprises $Q=(q_1, q_2, 0)$, where $0 \leq q_1, q_2 \leq 20$ and $q_1+q_2=3$. Parameterizing $q_1 = q$ and $q_2=3-q$ for $q \in [0,3]$, we obtain
$$
\mu = \frac{0.5}{1}\left(\frac13 q (-\hat R_1) + \frac13 (3-q) (-\hat R_2)  \right) = \begin{pmatrix} q/3 \\ (3-q)/3\end{pmatrix}, \qquad q \in [0,3].
$$ 
The robust selector is given by $q=1.5$, resulting in $\mu_1 = \mu_2 = 0.5$.
\end{ex}

\section{An application to Black-Litterman portfolio framework}\label{sec:6}
In this section, we apply findings of Section \ref{sec:5} to an extension of the Black-Litterman model of portfolio optimization on markets with discrete distributions of returns. Asset return distributions are commonly approximated with a finite number of scenarios in practical financial applications, see, e.g., \cite{KPU2002, gaivoronski2004, lim2010portfolio, lwin2017mean}. We start with a short presentation of the extension of market-based Black-Litterman model of \cite{meucci2005} to general discrete distributions and deviation measures.\footnote{The reader is referred to \cite{palczewski2016} for a detailed discussion of a parallel extension for continuous distributions.} We demonstrate that the non-uniqueness of solutions to the inverse optimization problem (Section \ref{sec:4}) is commonly observed in this theory and means that the posterior distribution of returns is not unique. The principle of law invariance brings back the well-definiteness of this portfolio theory.

The underlying assumption of the original Black-Litter\-man model \citep{BlackLitterman91} is that the market is in equilibrium in which the mutual fund theorem holds, i.e., all investors hold risky assets in the same proportions. In the general setting of deviation measures, \cite{rockafellar2007} develops an analogous theory and calls the common portfolio of risky assets a master fund. It can be recovered by solving \eqref{eqn:MV1} for a particular choice of $\Delta = \Delta_M$. We assume, as in the original framework, that the market is in equilibrium, so the master fund corresponds to relative market capitalizations of stocks: we will call it a \emph{market portfolio} $x^M$. Further, acting in the spirit of \cite{BlackLitterman91} we assume that the centered equilibrium distribution is known, for example, it is equal to the centered empirical distribution of asset returns. The only parameter of the distribution which is unknown is its location. To recover the latter, we solve an inverse optimization problem: knowing the solution $x^M$ to problem \eqref{eqn:MV1} we find the mean excess return vector $\mu_{eq}$ for a given expected market return $\Delta = \Delta_M$. The distribution $\mu_{eq} + \hat R$ is then called \emph{equilibrium distribution} or prior distribution.

Investor's views are represented by a $m \times n$ `pick matrix' $P$ and a vector $v \in \er^m$. Each row of $P$ specifies combinations of assets and the corresponding entry in $v$ provides a forecasted excess return. The uncertainty (the lack of confidence) in the forecasts is represented by a zero-mean random variable $\varepsilon$ with a continuous distribution with full support on $\er^m$, for example, a normal distribution $N(0, Q)$. 
The resulting Bayesian model is
\begin{align*}
&\text{prior: } \quad R \sim \mu_{eq} + \hat R,\\
&\text{observation: } \quad V|[R=r] \sim Pr + \varepsilon.
\end{align*}
The posterior distribution of future returns $R$ given $V = v$ is concentrated on the same points as the prior distribution but with different probabilities. It can be described by a new probability measure $\mathbb{Q}$ on $\Omega$, i.e., the posterior distribution of asset excess returns is that of $\mu_{eq} + \hat R$ under $\mathbb{Q}$\footnote{The location vector of the posterior distribution is rarely equal to $\mu_{eq}$ due to the reweighing of probabilities in $\mathbb{Q}$ relative to $\prob$}. Following Bayes formula, we set the unnormalized ``density'' of the posterior distribution:
\[
X(\omega) = f_\varepsilon \big(v - P \mu_{eq} - P \hat R(\omega) \big),
\]
where $f_\varepsilon$ is the density of $\varepsilon$. Then $\mathbb{Q}(\omega)/\prob (\omega) = X (\omega) / E_\prob[X]$. The posterior distribution of asset returns is then fed into the optimization problem \eqref{eqn:MV1}.

Assume now that the deviation measure $\cD$ is finitely generated. By Corollary \ref{cor:unique} it should be expected that the market portfolio is a unique solution to \eqref{eqn:MV1}. Consequently, the inverse optimization problem that determines the equilibrium distribution has many solutions (Theorem \ref{thm:dichotomy} and Example \ref{ex:10}) resulting in multitude of posterior distributions and, in effect, multitude of Black-Litterman optimal portfolios. This is obviously unacceptable in a financial context. This non-uniqueness is caused by the existence of many active portfolio risk generators (active risk identifiers for the deviation measure), Lemma \ref{lem:conv}. Selecting the law-invariant active risk identifier, see Section \ref{sec:robustsel} and Example \ref{ex:10}, brings back uniqueness of the solution to the inverse optimization problem and, consequently, the uniqueness of solution to the complete portfolio optimization workflow.

In practice, an investor commonly infers the market portfolio from the market capitalization of assets. Such a portfolio is unlikely to have more than one active portfolio risk generator since optimal portfolios with at least two active portfolio risk generators lie on a finite number of hyperplanes in $\er^n$ (their Lebesgue measure is zero). Hence, the market portfolio solves an \textit{unlikely} portfolio optimization problem for which the set of solutions has dimension $n-1$, see Theorem \ref{thm:dichotomy}. The inverse optimization problem has, however, a unique solution.

\begin{ex}
Consider the setting of Example \ref{ex:10}. Extreme risk identifiers for $\cvar^\Delta_{5\%}$ are $\mathcal{Q}^e = \{ \text{Perm}(3,0,0)\}$. The set of portfolio risk generators consists of $3$ vectors:
\[
D_1 = (1,0)^T, \quad D_2 = (0,1)^T, \quad D_3 = (-1, -1)^T.
\]
Fix a market portfolio $x^M = (0.2, 0.8)^T$ and its return $\Delta_M = 0.4$. The only active portfolio risk generator for $x^M$ is $D_2$. From Lemma \ref{lem:conv}, the inverse optimization problem has a unique solution
$\mu^* = (0, 0.5)$. Consider now the forward optimization problem with expected excess return $\Delta_M$ and mean excess return $\mu^*$:
\[
\min_{x_1, x_2} \ \max \big( x_1; x_2; -x_1-x_2\big), \qquad \text{subject to: } 0.5 x_2 \ge 0.4.
\]
The set of solutions is $\X^* = \big\{ (x_1, 0.8):\ x_1 \in [-1.6, 0.8] \big\}$.  Each solution in $\X^*$ has $\cvar^\Delta_{5\%}$ equal to $0.8$ and the expected excess return of $\Delta_M$. 
\end{ex}

\section{Conclusions}\label{sec:concl}

We have analyzed in depth forward and inverse portfolio optimization problems when asset returns follow a finite number of scenarios and deviation measure is finitely generated (covering popular deviation measures: CVaR, mixed CVaR and MAD). We discovered a dichotomy in the uniqueness of solutions for both problems: the forward and inverse problems cannot be simultaneously uniquely solved (for the same data). Nevertheless, the set of parameters for which the non-uniqueness holds is of measure zero. Although it may seem that the uniqueness problem is practically negligible, we have demonstrated that this is not true in many applications, like capital allocation, cooperative investment, and the generalized Black-Litterman model. In cooperative investment, the non-uniqueness affects a ``fair'' way of distributing profit of joint investment between participating investors: for investors with preferences described by utility functions derived from finitely generated deviation measures, when the coalition's forward optimization problem has a unique solution (which happens on the set of model parameters of full measure), there are many risk identifiers for the optimal wealth which prevents a unique ``fair'' allocation of wealth between investors. For the generalized Black-Litterman model, the inverse optimization problem has multiple solutions resulting in multiple posterior distributions and optimal portfolios. This result is in contrast with the classical Black-Litterman model where the uniqueness holds for both forward and inverse problems.

The above problem of non-uniqueness has been shown to be connected to the fact that a convex function (here a risk or deviation measure) may  not be everywhere differentiable, and, at points of non-differentiability, has a non-unique sub-gradient. This issue has been addressed by introducing the set of axioms, such that, for any convex function and at every point, there is a \emph{unique} sub-gradient satisfying these axioms. This sub-gradient happens to coincide with the Steiner point of the sub-differential set.

In forward optimization problems, if the solution is not unique, we can optimize amongst those solutions according to a secondary objective. For example, if there are many optimal portfolios, we can choose the one which is the ``closest'' to our current portfolio to minimize rebalancing, c.f. \cite{Palczewski2018}. In the inverse optimization, we try to identify values of parameters such that a given specific solution is optimal. If this can be done in several ways, it is unclear how the multi-objective principle described above can help in identifying the unique solution. Also, it may be unclear what secondary objective to choose in such applications. Instead of introducing secondary objectives, Section \ref{sec:gradient} suggests that the set of optimal solution contains exactly one special solution satisfying some highly desirable properties, such as Lebesgue continuity, and recommends to choose this solution. We further show that under some conditions, a robust selector can be characterised as the only law-invariant solution, the property which is natural in applications where distributions are the only observable characteristics of the model.

We have demonstrated applications of our theoretical results, such as Theorem \ref{thm:mu} and Theorem \ref{thm:extgrad}, to various problems in risk analysis and portfolio optimization. While the considered applications are specific in their assumptions, e.g., short selling is allowed, specific risk measures involved, specific constraint structure, etc., the theory itself is very general. Theorems \ref{thm:mu} and \ref{thm:dichotomy} can be extended to a broad class of (parametrized) linear programs, and explain why non-uniqueness issue is ``common'' in direct and inverse linear optimization. Theorem \ref{thm:extgrad} provides a method to assign a unique gradient to any convex function at any point, hence it resolves the non-uniqueness issue in all cases  when it is a consequence of non-differentiability of a convex function.

\bibliographystyle{apalike}
\bibliography{MV_IP}

\appendices

\section{Law invariant selectors}\label{app:lawinv}

Section \ref{sec:robustsel} introduces a method for selecting a unique solution to the inverse portfolio optimization problem. The approach is based on the principle of robustness. Its advantage is that robust selector is always uniquely determined.

Here we discuss an alternative approach which is based on the principle of law-invariance. 
A law-invariant selector may not be unique for some deviation measures in which case the law-invariance fails to resolve the non-uniqueness of the inverse optimization problem.
However, it is financially and probabilistically natural and works in some important special cases.

\begin{definition}
A selector $f_\cD:{\L}^2(\Omega) \to {\L}^2(\Omega)$ is called \emph{law-invariant} if $E[Y_1f_\cD(X)]=E[Y_2f_\cD(X)]$ whenever pairs of r.v.s $(Y_1,X), (Y_2,X) \in \L^2(\Omega) \times \L^2(\Omega)$ have the same joint laws.
\end{definition}

A deviation measure $\cD$ is called \emph{law-invariant} if $\cD(X)=\cD(Y)$ whenever r.v.s $X$ and $Y$ have the same distribution. For example,  $\cvar_\alpha^\Delta$ (CVaR-deviation)  is law invariant for every $\alpha\in(0,1)$. Notice that not every deviation measure is law-invariant: a simple example of a non-law-invariant deviation measure can be constructed on $\Omega=\{\omega_1, \omega_2\}$, with $\prob[\omega_1]=\prob[\omega_2]=0.5$, and 
\begin{equation}\label{eqn:not_law_inv}
\cD(X) := \max\big\{X(\omega_1)-X(\omega_2), 2(X(\omega_2)-X(\omega_1))\big\}. 
\end{equation}

In the framework of uniform probability spaces, we prove below the existence, but not uniqueness, of a law-invariant selector. 

\begin{theorem}\label{lem:uniform}
If $\Omega$ is uniform, then there exists a law-invariant selector $f_\cD$ for every law-invariant deviation measure $\cD$.
\end{theorem}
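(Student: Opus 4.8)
The plan is to translate the probabilistic notion of law-invariance of a selector into a pointwise condition on $f_\cD(X)$, and then to manufacture a risk identifier satisfying that condition by averaging over a symmetry group. First I would record what ``same joint law'' means on a uniform space. Since $\prob$ is uniform, the joint law of a pair $(Y,X)$ is just the empirical distribution of the $N$ points $(Y(\omega_j),X(\omega_j))$; hence $(Y_1,X)$ and $(Y_2,X)$ share a joint law if and only if $Y_1$ and $Y_2$ agree up to a permutation of atoms \emph{within each level set} $\{X=c\}$. Testing this with $Y_1,Y_2$ differing by a single transposition of two atoms $\omega_a,\omega_b$ lying in a common level set, the difference $E[Y_1 Q]-E[Y_2 Q]$ reduces to $\tfrac1N(Y_1(\omega_a)-Y_1(\omega_b))(Q(\omega_a)-Q(\omega_b))$, which vanishes for all admissible $Y_1$ exactly when $Q(\omega_a)=Q(\omega_b)$. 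Conversely, if $Q$ is constant on each level set of $X$, then $E[YQ]$ restricted to a level set depends only on the (permutation-invariant) sum of $Y$ over that set. I therefore obtain the clean reformulation: a selector $f_\cD$ is law-invariant if and only if, for every $X$, the risk identifier $f_\cD(X)$ is constant on the level sets of $X$.

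So the theorem reduces to an existence statement: for every $X$ there is a risk identifier $Q\in\cQ(X)$ that is constant on the level sets of $X$. To build one I would exploit the permutation symmetry supplied by uniformity. Let the symmetric group $S_N$ act on $\L^2(\Omega)\cong\er^N$ by permuting coordinates; this action preserves the uniform measure. Because $\cD$ is law-invariant, the risk envelope $\cQ$ (closed and convex by the discussion following \eqref{eqn:devenvelopes2}) is $S_N$-invariant: for $Q\in\cQ$, $\pi\in S_N$ and any test variable $X$, one has $E[X(1-Q\circ\pi)]=E[(X\circ\pi^{-1})(1-Q)]\le \cD(X\circ\pi^{-1})=\cD(X)$, so $Q\circ\pi\in\cQ$. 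The risk-identifier set $\cQ(X)=\argmin_{Q\in\cQ}E[XQ]$ is non-empty, convex and compact (the envelope being bounded for a finite deviation measure). Letting $G_X\le S_N$ be the stabilizer of $X$, i.e.\ the permutations acting within the level sets of $X$, the same computation shows $\cQ(X)$ is $G_X$-invariant: each $\pi\in G_X$ fixes $X$, so $E[X(Q\circ\pi)]=E[(X\circ\pi)(Q\circ\pi)]=E[XQ]$ and the minimum value is preserved.

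Finally I would average over the finite group $G_X$. Picking any $Q_0\in\cQ(X)$ and setting $\bar Q=|G_X|^{-1}\sum_{\pi\in G_X}Q_0\circ\pi$, convexity of $\cQ(X)$ gives $\bar Q\in\cQ(X)$, while the standard reindexing $\pi\mapsto\pi\sigma$ shows $\bar Q\circ\sigma=\bar Q$ for every $\sigma\in G_X$. Since $G_X$ contains every transposition of two atoms lying in a common level set, $G_X$-invariance of $\bar Q$ is precisely the statement that $\bar Q$ is constant on the level sets of $X$. Defining $f_\cD(X):=\bar Q$ (any such choice works, as only existence, not uniqueness, is claimed) then yields a law-invariant selector by the reformulation of the first paragraph.

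I expect the main obstacle to be the bookkeeping in the first step: pinning down precisely that ``same joint law'' is equivalent to equality up to within-block permutations, and hence that law-invariance of the selector is equivalent to $\sigma(X)$-measurability of $f_\cD(X)$. Once that reduction is secured, the $S_N$-symmetry of $\cQ$ and the averaging over the stabilizer $G_X$ are routine; the only point requiring care is that the average must remain inside the \emph{argmin} face $\cQ(X)$, which is guaranteed by its convexity.
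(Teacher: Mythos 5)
Your proof is correct, but it takes a genuinely different route from the paper's. The paper reduces the theorem to three lemmas: it first shows (Lemma \ref{lem:selcond}) that on a uniform space a selector is law-invariant iff $f_\cD(X)$ is constant on the level sets of $X$ --- this is exactly your first paragraph, proved the same way with indicator test functions; it then shows (Lemma \ref{lem:lawinv_conv}) via a discrete Hardy--Littlewood--P\'olya transfer argument that on a uniform space law-invariance of $\cD$ is equivalent to consistency with the concave order, and finally (Lemma \ref{lem:nonuniform}) constructs the desired risk identifier as $E[Q_0|X]$ for an arbitrary $Q_0\in\cQ(X)$, verifying membership in $\cQ$ by combining \eqref{eqn:devenvelopes2} with concave-order consistency and $E[Y|X]\succeq_c Y$. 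You instead exploit the $S_N$-symmetry directly: uniformity makes the expectation permutation-invariant, law-invariance of $\cD$ makes $\cQ$ (hence $\cQ(X)$, under the stabilizer $G_X$) permutation-invariant, and averaging over $G_X$ stays in the convex set $\cQ(X)$. Note that on a uniform space your group average $|G_X|^{-1}\sum_{\pi\in G_X}Q_0\circ\pi$ \emph{is} $E[Q_0|X]$, so the two constructions yield the same selector; only the justification that it remains a risk identifier differs. Your argument is more elementary and self-contained for the uniform case, since it bypasses the majorization machinery entirely; the paper's detour through concave order buys the extension to non-uniform spaces (Lemma \ref{lem:nonuniform} holds there too), which it needs for the subsequent discussion, whereas your symmetry argument is tied to uniformity. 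One shared implicit assumption, in both your proof and the paper's: $\cQ(X)$ must be non-empty, which is guaranteed when $\cD$ is finite (so $\cQ$ is compact in $\er^N$) but is not literally part of the hypotheses as stated.
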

\begin{proof}
It follows easily from Lemmas \ref{lem:lawinv_conv}, \ref{lem:selcond}, and \ref{lem:nonuniform} below.
\end{proof}

For non-uniform finite probability spaces, the notion of law-invariance as defined above is of little use for defining a unique selector, because, for example, on $\Omega=\{\omega_1, \omega_2\}$ with $\prob[\omega_1] \neq 0.5$, r.v.s $X$ and $Y$ have the same distribution if and only if $X=Y$, and, by definition, \emph{every} deviation measure, including \eqref{eqn:not_law_inv}, is law-invariant. For similar reasons, \emph{every} selector $f_\cD$ on such probability space is law-invariant. An appropriate extension of the notion of law-invariance to non-uniform probability spaces follows from results below.

An r.v. $X$ dominates r.v. $Y$ in second order stochastic dominance, denoted $X \succeq_2 Y$, if 
$$
\int\limits_{-\infty}^t F_X(x)dx \leq \int\limits_{-\infty}^t F_Y(x)dx, \quad \forall t\in{\mathbb R}.
$$
An r.v. $X$ dominates r.v. $Y$ in concave order, denoted $X \succeq_c Y$, if $E[X]=E[Y]$ and $X \succeq_2 Y$. A deviation measure $\cD$ is called consistent with concave order if $\cD(X) \leq \cD(Y)$ whenever $X \succeq_c Y$.
 
\begin{lemma}\label{lem:lawinv_conv}
If a deviation measure $\cD$ is consistent with the concave order, it is law-invariant. If $\Omega$ is uniform, the converse statement also holds.
\end{lemma}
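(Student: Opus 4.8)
The plan is to prove the two implications separately, the first on an arbitrary finite probability space and the second exploiting the uniform structure in an essential way.

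For the forward implication, suppose $\cD$ is consistent with the concave order and that $X$ and $Y$ have the same distribution. Then $F_X = F_Y$, so $E[X] = E[Y]$ and $\int_{-\infty}^t F_X(x)\,dx = \int_{-\infty}^t F_Y(x)\,dx$ for every $t$. In particular the defining inequality of $\succeq_2$ becomes an equality read in either direction, so both $X \succeq_c Y$ and $Y \succeq_c X$ hold. Consistency then yields $\cD(X) \le \cD(Y)$ and $\cD(Y) \le \cD(X)$, hence $\cD(X) = \cD(Y)$, which is exactly law-invariance. This part is immediate and uses no assumption on $\Omega$.

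For the converse, I would assume $\Omega$ uniform and $\cD$ law-invariant, and show $X \succeq_c Y \Rightarrow \cD(X) \le \cD(Y)$. First I would record that $\cD$ is convex: properties (D2) and (D3) give $\cD(\lambda X + (1-\lambda) Y) \le \cD(\lambda X) + \cD((1-\lambda)Y) = \lambda \cD(X) + (1-\lambda)\cD(Y)$, which extends to arbitrary finite convex combinations by induction. Identifying a random variable with its vector of values in $\er^N$ (every atom carrying mass $1/N$), the crux is to translate the concave order into majorization: the condition $X \succeq_c Y$ is equivalent to $\sum_i \psi(X_i) \le \sum_i \psi(Y_i)$ for every convex $\psi$ together with $\sum_i X_i = \sum_i Y_i$, which is precisely the statement that the value-vector of $X$ is majorized by that of $Y$. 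By Hardy--Littlewood--P\'olya and the Birkhoff--von Neumann theorem this places the vector of $X$ in the convex hull of the permutations of the vector of $Y$, i.e.\ $X = \sum_k \lambda_k \Pi_k Y$ with $\lambda_k \ge 0$, $\sum_k \lambda_k = 1$, and each $\Pi_k$ a permutation of the atoms. The proof then closes at once: on a uniform space each $\Pi_k Y$ is equidistributed with $Y$, so law-invariance gives $\cD(\Pi_k Y) = \cD(Y)$, and convexity yields $\cD(X) = \cD\big(\sum_k \lambda_k \Pi_k Y\big) \le \sum_k \lambda_k \cD(\Pi_k Y) = \cD(Y)$, which is consistency with the concave order.

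The main obstacle is the dictionary in the converse, namely identifying the concave order on a uniform probability space with majorization of the value-vectors: one must fix the direction correctly (the less dispersed variable $X$ corresponds to the \emph{majorized} vector) and invoke the right form of the classical theorems to express $X$ as a doubly-stochastic image, and hence a convex combination of permutations, of $Y$. It is also here that uniformity is indispensable, since only for equal atom-weights does permuting components preserve the distribution and thereby activate law-invariance. Once this correspondence is in place, convexity of $\cD$ and law-invariance finish the argument, and the forward direction together with the closing estimate are routine.
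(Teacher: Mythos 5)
Your proof is correct, but the converse direction takes a genuinely different route from the paper's. The paper works entirely by hand: it translates $X \succeq_c Y$ on a uniform space into the sorted partial-sum inequalities and then shows, by an explicit induction on $N$, that $Y$ is reached from $X$ by a finite chain of elementary spread-increasing transfers $(z_i, z_j) \mapsto (z_i - d, z_j + d)$, each of which can only increase a law-invariant deviation measure; the monotonicity of $\cD$ then follows link by link along the chain. You instead invoke the classical majorization machinery (Hardy--Littlewood--P\'olya plus Birkhoff--von Neumann) to write the value-vector of $X$ in one step as a convex combination $\sum_k \lambda_k \Pi_k Y$ of permutations of $Y$, and then apply convexity of $\cD$ (from (D2)--(D3)) and law-invariance globally. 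The two arguments are two proofs of the same underlying fact --- that on a uniform space the concave order is majorization and that a symmetric convex functional is Schur-convex --- but they distribute the work differently: the paper's version is self-contained and elementary, at the cost of leaving the key step (``operation \eqref{eqn:oper} can only increase a law-invariant deviation measure'') asserted rather than justified, whereas your version imports two standard theorems but makes fully explicit where convexity and law-invariance enter (each $\Pi_k Y$ is equidistributed with $Y$ only because the atoms have equal weight). Your argument also correctly fixes the direction of the majorization (the less dispersed $X$ is the majorized vector), which is the one place such a proof can silently go wrong. The forward implication is the same trivial observation in both.
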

\begin{proof}
The first statement is trivial, and the second one is well-known, but the proof is usually presented for atomless probability space, see \citet[Theorem 4.1]{dana2005representation}. For a discrete uniform $\Omega$, let r.v.s $X$ and $Y$ take values $x_1\leq\dots\leq x_N$ and $y_1\leq\dots\leq y_N$, respectively. Then $X \succeq_c Y$ is equivalent to
\begin{equation}\label{eqn:discreteconc}
\sum\limits_{i=1}^{k}x_i \geq \sum\limits_{i=1}^{k}y_i, \quad k=1, \dots, N,
\end{equation}
with equality for $k=N$. Let us prove that in this case $Y$ can be obtained from $X$ by a finite sequence of operations
\begin{multline}\label{eqn:oper}
(z_1, z_2, \dots, z_N) \to (z_1, \dots, z_{i-1}, z_i-d, z_{i+1}, \dots, z_{j-1}, z_j+d, z_{j+1}, \dots, z_N), \\ d>0, \,\, 1\leq i<j \leq N.
\end{multline}
The statement is trivial for $N=2$, and the case $N>2$ can be proved by induction.
If $\sum_{i=1}^{k}x_i = \sum_{i=1}^{k}y_i$ for some $k<N$, we can apply induction hypothesis to pair of r.v.s $X_1=(x_1, \dots, x_k)$ and $Y_1=(y_1, \dots, y_k)$, and separately to pair $X_2=(x_{k+1}, \dots, x_N)$ and $Y_2=(y_{k+1}, \dots, y_N)$, to conclude that there exists a sequence of operations (\ref{eqn:oper}) transforming $X_1$ to $Y_1$ and $X_2$ to $Y_2$, and hence $X$ to $Y$. Otherwise, apply operation (\ref{eqn:oper}) to $X$ with $i=1$, $j=N$, and 
$
d=\min_{k}\sum_{i=1}^{k}(x_i-y_i)>0,
$
to get $X = (x_1, x_2, \dots, x_N) \to (x_1-d, x_2, \dots, x_N+d) = (z_1, \dots, z_N) = Z.$
Then condition (\ref{eqn:discreteconc}) holds for $z_1, z_2, \dots, z_N$ in place of $x_1, x_2, \dots, x_N$, with equality for some $k<N$, hence $Z$ can be transformed to $Y$ by the argument above. 

Because operation (\ref{eqn:oper}) can only increase a law-invariant deviation measure $\cD$, $\cD(X) \leq \cD(Y)$ follows. 
\end{proof}

\begin{lemma}\label{lem:selcond}
If for any r.v.~$X \in \L^2(\Omega)$ the selector $f_\cD$ satisfies the condition
\begin{equation}\label{eqn:selcond}
Q(\omega_i) = Q(\omega_j) \quad \text{whenever} \quad X(\omega_i) = X(\omega_j), 
\end{equation}
where $Q=f_\cD(X)$, then it is law-invariant. If $\Omega$ is uniform, the converse statement also holds.
\end{lemma}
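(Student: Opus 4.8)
The plan is to first reformulate the hypothesis \eqref{eqn:selcond} into a more usable form: saying that $Q=f_\cD(X)$ takes equal values on any two atoms where $X$ agrees is exactly the statement that $Q$ is a function of $X$, i.e.\ there exists $g$ with $Q(\omega)=g(X(\omega))$ for every $\omega\in\Omega$ (define $g(b):=Q(\omega)$ for any $\omega$ with $X(\omega)=b$; this is well defined precisely because $Q$ is constant on the level sets of $X$). I would record this equivalence at the outset, since it drives both implications.

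For the forward implication, which holds on an arbitrary finite $\Omega$, I would use $Q=g(X)$ to write
\[
E[Y_1 f_\cD(X)] = E[Y_1\, g(X)] = \sum_{a,b} a\, g(b)\, \prob(Y_1=a,\,X=b),
\]
and observe that the right-hand side depends on $(Y_1,X)$ only through its joint law. The same expression with $Y_2$ in place of $Y_1$ depends only on the joint law of $(Y_2,X)$. Since by hypothesis these two joint laws coincide and the integrand $(y,b)\mapsto y\,g(b)$ is identical, the two expectations agree, which is exactly law-invariance. This step is routine bookkeeping.

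The converse requires $\Omega$ uniform and carries the real content. Fixing $X$ and supposing $X(\omega_i)=X(\omega_j)=:b$ for some $i\ne j$, I must show $Q(\omega_i)=Q(\omega_j)$ with $Q=f_\cD(X)$. I would test law-invariance against the indicators $Y_1,Y_2\in\L^2(\Omega)$ given by $Y_1(\omega_k)=\ind{k=i}$ and $Y_2(\omega_k)=\ind{k=j}$. The crux is verifying that $(Y_1,X)$ and $(Y_2,X)$ have identical joint laws: the atom $(1,b)$ receives mass $\prob(\omega_i)$ under $(Y_1,X)$ and $\prob(\omega_j)$ under $(Y_2,X)$, which coincide because $\Omega$ is uniform ($\prob(\omega_i)=\prob(\omega_j)=1/N$); the atoms $(1,c)$ with $c\ne b$ carry zero mass in both; and for $(0,c)$ one checks fiber by fiber that removing $\omega_i$ (resp.\ $\omega_j$) from $\{X=b\}$ leaves sets of equal cardinality, while the fibers $\{X=c\}$, $c\ne b$, are untouched. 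Law-invariance then yields $E[Y_1Q]=E[Y_2Q]$, i.e.\ $\tfrac1N Q(\omega_i)=\tfrac1N Q(\omega_j)$, so $Q(\omega_i)=Q(\omega_j)$.

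The main obstacle, and the only spot needing care, is the joint-law verification in the converse, together with pinpointing where uniformity enters: it is precisely the equality $\prob(\omega_i)=\prob(\omega_j)$ that makes the ``$Y=1$'' masses coincide. On a non-uniform space this matching fails, in agreement with the discussion preceding the lemma, where law-invariance as stated degenerates. Everything else reduces to grouping the expectations by the level sets of $X$.
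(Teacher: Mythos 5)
Your proposal is correct and follows essentially the same route as the paper: reformulate \eqref{eqn:selcond} as $Q=g(X)$ for the forward direction, and for the converse test law-invariance against the indicators of $\omega_i$ and $\omega_j$, whose joint laws with $X$ coincide precisely because $\Omega$ is uniform. The only difference is that you spell out the joint-law verification in more detail than the paper does.
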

\begin{proof}
Condition \eqref{eqn:selcond} implies that $Q=g(X)$ for some function $g:{\mathbb R}\to{\mathbb R}$. Then $E[Y_1Q]=E[Y_1g(X)]=E[Y_2g(X)]=E[Y_2Q]$ whenever pairs of r.v.s $(Y_1,X)$ and $(Y_2,X)$ have the same joint law.

Conversely, let $\Omega$ be uniform and $X(\omega_i) = X(\omega_j)$. Then pairs of r.v.s $(I_i,X)$ and $(I_j,X)$ have the same joint law, where $I_i$ and $I_j$ are indicator functions for $\omega_i$ and $\omega_j$, respectively. If $f_\cD$ is law-invariant, this implies $Q(\omega_i)=N \cdot E[I_i Q] = N \cdot E[I_j Q] = Q(\omega_j)$, where $N = |\Omega|$, and \eqref{eqn:selcond} follows.
\end{proof}

Lemmas \ref{lem:lawinv_conv} and \ref{lem:selcond} imply that consistency with the concave ordering and \eqref{eqn:selcond} are appropriate extensions of the notion of law-invariance to non-uniform probability spaces for deviation measures and selectors, respectively. 

\begin{lemma}\label{lem:nonuniform}
For every deviation measure $\cD$, consistent with concave ordering, there exists a selector $f_\cD$ satisfying \eqref{eqn:selcond}.
\end{lemma}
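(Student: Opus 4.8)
The plan is to build the selector by \emph{averaging an arbitrary risk identifier over the level sets of $X$}. Fix $X \in \L^2(\Omega)$ and let $Q^*$ be any risk identifier of $X$, i.e.\ any maximiser of $Q \mapsto E[-XQ]$ over $Q \in \cQ$. I would then set
\[
f_\cD(X) := E[Q^* \mid X],
\]
the conditional expectation of $Q^*$ given the $\sigma$-algebra generated by $X$; concretely, $f_\cD(X)$ is constant on each level set of $X$, taking there the $\prob$-weighted average of $Q^*$ over that set. Constancy on level sets is precisely condition \eqref{eqn:selcond}, so the candidate satisfies the required symmetry by construction. Making one such choice of $Q^*$ for every $X$ then defines a map $f_\cD$, and it remains only to verify that $Q := f_\cD(X)$ is itself a risk identifier of $X$.

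This splits into two checks. First, that $Q$ attains the supremum: since $-X$ is $\sigma(X)$-measurable, the tower property gives $E[-XQ] = E[-X\,E[Q^*\mid X]] = E[-XQ^*]$, so $Q$ realises the same value of the objective as $Q^*$, namely $\cD(X)-EX$. The substantive point, and the main obstacle, is the second check: that $Q \in \cQ$. Here I would invoke the hypothesis that $\cD$ is consistent with the concave order, together with the dual description \eqref{eqn:devenvelopes2} of the risk envelope, so that it suffices to show $E[W(1-Q)] \le \cD(W)$ for every test variable $W \in \L^2(\Omega)$.

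To do so, write $\bar W = E[W \mid X]$. Using the tower property and the $\sigma(X)$-measurability of $Q$ one obtains $E[WQ] = E[\bar W\,Q^*]$ and $EW = E\bar W$, hence $E[W(1-Q)] = E[\bar W(1-Q^*)]$. Because $Q^* \in \cQ$, the defining inequality \eqref{eqn:devenvelopes2} applied to the test variable $\bar W$ gives $E[\bar W(1-Q^*)] \le \cD(\bar W)$. Finally, Jensen's inequality for conditional expectations shows $E[u(\bar W)] \ge E[u(W)]$ for every concave $u$, while $E\bar W = EW$; thus $\bar W \succeq_c W$, and consistency of $\cD$ with the concave order yields $\cD(\bar W) \le \cD(W)$. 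Chaining these gives $E[W(1-Q)] \le \cD(W)$ for all $W$, so $Q \in \cQ$ and the construction succeeds. I expect the only delicate points to be the weighting in the conditional expectation (the averaging is $\prob$-weighted, not arithmetic, which matters on a non-uniform space) and keeping the direction of the concave-order comparison correct; the rest is routine manipulation with the tower property.
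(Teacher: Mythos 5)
Your construction is exactly the paper's: it also defines $f_\cD(X) := E[Q\,|\,X]$ for an arbitrary risk identifier $Q$ of $X$, verifies membership in $\cQ$ via the chain $E[(1-E[Q|X])Y]=E[(1-Q)E[Y|X]]\le\cD(E[Y|X])\le\cD(Y)$ using the dual description \eqref{eqn:devenvelopes2} and consistency with the concave order (citing $E[Y|X]\succeq_c Y$), and checks attainment by the tower property. The proposal is correct and essentially identical in every step.
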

\begin{proof}
Fix a r.v. $X$, select \emph{any} risk identifier $Q$ for $X$, and let $f_\cD(X):=E[Q|X]$. Then for all $Y\in{\L}^2(\Omega)$,
\[
E[(1-f_\cD(X)) Y] = E[(1-E[Q|X])Y] = E[(1-Q)(E[Y|X])]  \leq \cD(E[Y|X]) \leq \cD(Y),
\]
where the first inequality follows from $Q\in{\cal Q}$ and \eqref{eqn:devenvelopes}, while the second one follows from consistency of $\cD$ with concave ordering and the fact that $E[Y|X] \succeq_c Y$, see \citet[Corollary 2.61]{follmer2011stochastic}. Hence, $f_\cD(X)\in {\cal Q}$ by \eqref{eqn:devenvelopes2}. Because also $E[(1-f_\cD(X)) X]=E[(1-E[Q|X])X] = E[(1-Q)X]=\cD(X)$, $f_\cD(X)$ is in fact a risk identifier of $X$, and condition \eqref{eqn:selcond} trivially holds.   
\end{proof}

\begin{ex}\label{ex:cvarsel}
For CVaR-deviation $\cD={\cvar}_{\alpha}^\Delta$,  there exists a \emph{unique} selector satisfying \eqref{eqn:selcond}, and it is given by (see \cite{cherny2006})
\begin{equation}\label{eqn:cvarunique}
f_\cD(X)=Q_\alpha=
\begin{cases} 0, & X>-VaR_\alpha(X), \\ 
c_X, & X=-VaR_\alpha(X),\\ 
1/\alpha, & X<-VaR_\alpha(X),
\end{cases}
\end{equation}
where constant $c_X\in[0,1/\alpha]$ is such that $E[Q]=1$.
\end{ex}

\begin{ex}\label{ex:mixedcvarsel}
For mixed CVaR-deviation \eqref{eqn:mixedcvar}, there exists a \emph{unique} selector satisfying \eqref{eqn:selcond}, and it is of the form $f_\cD(X)=Q_\mu=\int_0^1 Q_\alpha\,\mu(d\alpha)$,
where $ Q_\alpha$ is given by \eqref{eqn:cvarunique}
(see \cite{cherny2006}).
\end{ex}

\begin{ex}\label{ex:madselmany}
For mean absolute deviation ${\rm MAD}(X)=\|X-EX\|_1$ (c.f. Example \ref{ex:2}), if $P(X=EX)>0$, there are infinitely many selectors satisfying \eqref{eqn:selcond}.
\end{ex}

Example \ref{ex:madselmany} demonstrates that imposing condition \eqref{eqn:selcond} may not be sufficient for specifying the unique solution, and, in this case, another method is required.

The following lemma demonstrates the consistency of concepts of robust and law-invariant selectors.

\begin{lemma}\label{lem:uni_rob}
Let $\Omega$ be uniform. Then, for every law-invariant deviation measure $\cD$, the corresponding robust selector $f_\cD$ is law-invariant.
\end{lemma}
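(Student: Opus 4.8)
The plan is to reduce the law-invariance of the robust selector to the pointwise condition \eqref{eqn:selcond} and then to establish that condition by a permutation-symmetry argument. By the (always valid) forward implication of Lemma \ref{lem:selcond}, it suffices to show that $Q := f_\cD(X)$ satisfies $Q(\omega_i) = Q(\omega_j)$ whenever $X(\omega_i) = X(\omega_j)$. First I would record the representation of the robust selector built in the proof of Lemma \ref{lem:robsel}: since $\partial \cD(X) = \mathbf 1 - \cQ(X)$ with $\mathbf 1 = (1,\dots,1)$, the unique robust selector is $f_\cD(X) = \mathbf 1 - G_X(\cD)$, where $G_X(\cD)$ is the extended gradient of $\cD$ at $X$. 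Hence \eqref{eqn:selcond} for $f_\cD$ is equivalent to the assertion that, whenever $X(\omega_i) = X(\omega_j)$, the $i$-th and $j$-th coordinates of $G_X(\cD)$ agree.

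Next, fix $i \ne j$ with $X(\omega_i) = X(\omega_j)$ and let $P$ be the orthogonal matrix on $\er^N$ implementing the transposition $\omega_i \leftrightarrow \omega_j$, so that $P = P^T = P^{-1}$ and $PX = X$. Because $\Omega$ is uniform, $PY$ and $Y$ are identically distributed for every $Y$, so law-invariance of $\cD$ gives $\cD(PY) = \cD(Y)$, i.e. $\cD \circ P = \cD$. The crux is the equivariance of the extended gradient under $P$. Using the ball-averaging formula of Theorem \ref{thm:extgrad2} together with $\nabla(\cD \circ P)(Y) = P^T \nabla \cD(PY)$ and the change of variables $W = PY$ (volume-preserving, and mapping $B_\epsilon(Y)$ onto $B_\epsilon(PY)$ since $P$ is orthogonal), one obtains $G_Y(\cD \circ P) = P^{-1} G_{PY}(\cD)$. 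As $\cD \circ P = \cD$ and $P^{-1} = P$, this reads $G_Y(\cD) = P\, G_{PY}(\cD)$ for all $Y$. Taking $Y = X$ and using $PX = X$ yields $G_X(\cD) = P\, G_X(\cD)$, so $G_X(\cD)$ is invariant under the swap of coordinates $i$ and $j$, whence its $i$-th and $j$-th entries coincide. By $f_\cD(X) = \mathbf 1 - G_X(\cD)$ the same holds for $f_\cD(X)$, which is \eqref{eqn:selcond}; Lemma \ref{lem:selcond} then delivers law-invariance of $f_\cD$.

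The main obstacle is precisely the equivariance step. One is tempted to invoke rotation invariance (property (G2)/(S2)), but a transposition is a reflection, its matrix having determinant $-1$, hence it is not a proper motion and (G2) does not apply. The resolution is to work through the orthogonally covariant integral representation of Theorem \ref{thm:extgrad2}: both the almost-everywhere gradient $\nabla \cD$ and the averaging ball transform covariantly under \emph{every} orthogonal map, not merely under rotations, which is exactly what the transposition needs. An alternative route avoids the identity $f_\cD(X) = \mathbf 1 - G_X(\cD)$ altogether: check directly from \eqref{eqn:devenvelopes2} that the risk envelope $\cQ$ is permutation-invariant, deduce $\cQ(PX) = P\,\cQ(X)$, and verify that $X \mapsto P^{-1} f_\cD(PX)$ is again a robust selector for $\cD$; the uniqueness in Lemma \ref{lem:robsel} then forces $P^{-1} f_\cD(PX) = f_\cD(X)$, which at $PX = X$ gives the required coordinate symmetry.
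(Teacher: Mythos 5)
Your proof is correct and follows essentially the same route as the paper's: both reduce the claim to condition \eqref{eqn:selcond} via Lemma \ref{lem:selcond}, exploit that the transposition $\omega_i \leftrightarrow \omega_j$ is an orthogonal, volume-preserving map fixing $X$ and leaving $\cD$ invariant, and push this symmetry through the ball-averaging limit of Theorem \ref{thm:extgrad2} (equivalently, the robustness formula \eqref{eqn:robust}) to conclude $T(f_\cD(X)) = f_\cD(X)$. Your explicit observation that (G2) cannot be invoked because a transposition has determinant $-1$, forcing one to argue through the integral representation instead, is a worthwhile clarification of a point the paper's proof leaves implicit.
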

\begin{proof}
Let $X\in\L^2(\Omega)$ and $1\leq i<j\leq N$ be such that $X(\omega_i)=X(\omega_j)$. Let $T:{\cal L}^2(\Omega)\to {\cal L}^2(\Omega)$ be a map interchanging indices $i$ and $j$, that is, for $Y=(y_1, \dots, y_N)$, 
$$
T(Y)=(y_1, \dots, y_{i-1}, y_j, y_{i+1}, \dots, y_{j-1}, y_i, y_{j+1}, \dots, y_N).
$$
Then $T(X)=X$. Let $A_\cD \subset {\mathbb R}^N$ be the set on which $\cD:{\mathbb R}^N \to {\mathbb R}$ is differentiable. This set has a full Lebesgue measure due to the convexity of $\cD$. By law-invariance, $\cD(T(Y))=\cD(Y), \, \forall\, Y$. Hence, for every $Y \in A_\cD$ such that $T(Y) \in A_\cD$, we have $T(f_\cD(Y))=f_\cD(T(Y))$.
Eqn. \eqref{eq:leblim} implies for any $Y \in \L^2 (\Omega) \equiv \mathbb{R}^N$,
\begin{align*}
T(f_\cD(Y)) &= T\big(\lim_{\epsilon\to 0} E[f_\cD(Y+e_\epsilon)]\big) = \lim_{\epsilon\to 0} E[T(f_\cD(Y+e_\epsilon))]\\
&= \lim_{\epsilon\to 0} E[(f_\cD(T(Y+e_\epsilon))] = f_\cD(T(Y)),
\end{align*}
where $e_{\epsilon}$ is uniformly distributed on the ball $B_\epsilon(0) \subset \mathbb{R}^N$ and the expectation operator integrates the randomness of $e_{\epsilon}$. Because $T(X)=X$, this implies $T(f_\cD(X))=f_\cD(X)$. Hence, \eqref{eqn:selcond} holds, and $f_\cD$ is law-invariant by Lemma \ref{lem:selcond}.
\end{proof}

In conclusion, this paper suggests two principles for choosing a unique selector, and hence a unique solution to the inverse optimization problem if the set $\cM$ has more than one point. One principle states that if $\cD$ is law-invariant, we should have $\mu_i=\mu_j$ in \eqref{eqn:mu_inv}, whenever pairs $(\hat r^{(i)}, \hat R^T x^M)$ and $(\hat r^{(j)}, \hat R^T x^M)$ have the same joint law. This principle is already sufficient to resolve the problem for CVaR-deviation, and, more generally, for mixed CVaR-deviation, but, in general, may not return a unique solution. Another principle postulates that selector should be ``robust'' as defined in Section \ref{sec:gradient}, and has an advantage that it \emph{always} returns a unique solution. However, its economic interpretation/justification is not as clear as for the law-invariance principle.

\end{document}